\begin{document}

\title{
Gaussianized Design Optimization for Covariate Balance in Randomized Experiments}

\author{Wenxuan Guo, Tengyuan Liang, and Panos Toulis}
\affil{University of Chicago, Booth School of Business}

\date{}

\maketitle

\onehalfspacing

\begin{abstract}
Achieving covariate balance in randomized experiments enhances the precision of treatment effect estimation. However, existing methods often require heuristic adjustments based on domain knowledge and are primarily developed for binary treatments. This paper presents Gaussianized Design Optimization, a novel framework for optimally balancing covariates in experimental design. The core idea is to Gaussianize the treatment assignments: we model treatments as transformations of random variables drawn from a multivariate Gaussian distribution, converting the design problem into a nonlinear continuous optimization over Gaussian covariance matrices. Compared to existing methods, our approach offers significant flexibility in optimizing covariate balance across a diverse range of designs and covariate types. Adapting the Burer-Monteiro approach for solving semidefinite programs, we introduce first-order local algorithms for optimizing covariate balance, improving upon several widely used designs. Furthermore, we develop inferential procedures for constructing design-based confidence intervals under Gaussianization and extend the framework to accommodate continuous treatments. Simulations demonstrate the effectiveness of Gaussianization in multiple practical scenarios.
\end{abstract}

{\em Keywords: Optimal Experimental Design, Covariate Balance, Continuous Treatments, Mehler's Formula}


\section{Introduction}\label{sec:intro}
Randomized experiments are considered the gold standard for causal inference in the study of treatment effects~\citep{imbens2015causal}. Common design choices include completely randomized experiments and independent Bernoulli randomization, which treat experimental units equally and allow for valid estimation of a wide variety of causal quantities. In many real-world experiments, incorporating covariates can enhance balance and improve the precision of treatment effect estimation. Examples of this include matched-pair designs~\citep{fisher1935design, bai2022optimality} and rerandomization methods~\citep{morgan2012rerandomization, li2020rerandomization}. Despite their widespread use, several important design optimization questions related to general covariates and treatments remain underexplored. 

By leveraging concepts from continuous optimization and Gaussian processes, we make progress toward addressing two design optimization questions in this paper: (i) How to numerically optimize for covariate balance in experimental designs with general covariates? (ii) How to balance covariates with multiple treatment arms, or in general settings involving a continuum of treatment arms?

We propose {\em Gaussianized Design Optimization}, a framework for experimental design that transforms the design problem into an optimization problem in an embedded Gaussian space. To illustrate the basic idea, suppose we have $n$ experimental units, each receiving a treatment taking values in a discrete space, say, $D_i\in\mathbb{D}$ for $i=1, \ldots, n$.  Then, Gaussianization refers to the action of modeling treatments $\{D_i\}_{i=1}^n$ as random variables derived from Gaussian vectors,
\begin{equation*}
    D_i = g(T_i)\;,\quad T\coloneqq (T_1, \dots, T_n) \sim \cN(0, \Sigma) \;.
\end{equation*}
Here $g: \R\to \mathbb{D}$ is a pre-specified function that maps the Gaussian variables $T_i$ to the treatment space, and $\Sigma$ is a design matrix from the correlation elliptope, 
\begin{equation}\label{eq:elliptope}
    \cE = \{~\Sigma\in\R^{n\times n} \mid \Sigma\succeq 0, \Sigma_{ii} = 1~\}\;.
\end{equation}

Gaussianization thus transforms the design problem on $\{D_i\}_{i=1}^n$ to a design problem on $\{T_i\}_{i=1}^n$, motivating design optimization in the embedded Gaussian space. Given pre-treatment covariates $X\in\R^{n\times d}$, we propose to solve
\begin{equation}\label{eq:cov_metric}
    \min_{\Sigma\in\cE}\|X^\top f(\Sigma) X\|_{\mathrm{norm}}\;,\quad \mathrm{norm} \in\{ \mathrm{nuc}, \mathrm{op}\}\;.
\end{equation}
Here, $f$ is a function with an analytical expression applied elementwise, defined later in Sections \ref{sec:example} and \ref{sec:discrete}, that controls the aspects of the design that are important for covariate balance. We use $\nuc$ and $\op$ to abbreviate the nuclear norm and operator norm, respectively. The objective $\|X^\top f(\Sigma) X\|_{\mathrm{norm}}$ serves as a surrogate metric to optimize the covariate balance, which we explain in Section \ref{sec:example}. 

From an optimization perspective, Equation \eqref{eq:cov_metric} is a nonlinear optimization problem over the correlation elliptope, and we propose a first-order local algorithm to iteratively update the design $\Sigma$. Due to the complex, non-convex nature of the covariate balance objective, our algorithm is only guaranteed to find local optimizers near the initial point. This local optimality and the computational barrier of design optimization are further discussed in Section \ref{sec:example}.

Gaussianization transforms the design problem into an optimization task, providing a flexible framework for optimizing covariate balance. Importantly, this approach applies directly to any number of treatment arms and any type of covariates. In contrast, most existing research on optimal design focuses on binary treatments \citep{li2020jrssb, Harshaw2019, bai2023randomize}, and certain optimality criteria require additional knowledge about the outcome-generating model \citep{bai2022optimality}.

In certain experiments, the treatment variable is inherently continuous (e.g., a medication dosage), making it insufficient to confine the design to a small number of discrete arms. To address this limitation, 
we further extend the discrete treatment setting by allowing $\mathbb{D} = \R$ and propose Gaussian designs, which directly assign $T = (T_1, \dots, T_n)\sim\cN(0, \Sigma)$ as actual treatments. As shown in Section \ref{sec:continuous}, this approach offers two main advantages. First, it allows the exploration of structural properties of potential outcome functions, including monotonicity and convexity. Second, it enables covariate balance optimization akin to the discrete setup. Thus, Gaussian designs harness the flexibility of Gaussianization and contribute to the growing literature on continuous treatment effects.

In Section \ref{sec:asymp}, we investigate design-based inference under Gaussianization, where the outcome-generating model is fixed, and all randomness arises from the Gaussian treatments $\{T_i\}_{i=1}^n$. Under a local perturbation condition, we establish asymptotic normality for the proposed estimator and present valid inferential procedures. Collectively, we establish a comprehensive framework that integrates design optimization, estimation, and inference under Gaussianization.

\subsection{An Example: Gaussianization with Three Treatment Arms}
\label{sec:example}
To contextualize the idea, we first walk through Gaussianized design optimization with a simple three-treatment example, i.e., $\mathbb{D} = \{1,2,3\}$, supplemented with a numerical simulation. Following the standard potential outcome framework \citep{Neyman1923}, we define $Y_i(k)$ as the potential outcome for unit $i$ under treatment $k$ for $k = 1,2,3$. The observed outcome for unit $i$ is then defined as  $Y_i = \sum_{k=1}^3 \mathbb{I}\{D_i=k\} Y_i(k)$ and $D = (D_1, \dots, D_n)$ is the treatment vector. Let $X\in\R^{n\times d}$ be the covariate matrix, and $X_i\in\R^{d}$ be unit $i$'s covariates.

In this three treatment arms setup, the Gaussianized design optimization framework breaks down to the procedure below. Technical details will be provided in Sections \ref{sec:discrete} and \ref{sec:opt}.
\begin{testprocedure}[ (High-Level Procedure of Gaussianized Design Optimization)]\label{proc:GDO}
\rm
\begin{enumerate}
    \item Specify the estimand: Here, we focus on the average treatment effect of all treatment arm 
\begin{equation*}
    \tau = \frac{1}{3} \sum_{k=1}^3 \tau_k\;,\quad \tau_k = \frac{1}{n} \sum_{i=1}^n Y_i(k)\;.
\end{equation*}
    We use a Hortivz-Thompson estimator $\widehat{\tau}$ to unbiasedly estimate this quantity.
    \item Derive measures of covariate balance: We propose the following covariate balance measures
    \begin{equation*}
        \sum_{k=1}^3 \|X^\top \Cov_k(D) X\|_{\mathrm{norm}}\;,\quad \norm  \in \{\op, \nuc\}\;,
    \end{equation*}
    where $\Cov_k(D)$ is the covariance matrix of $(\mathbb{I}\{D_1=k\}, \dots, \mathbb{I}\{D_n=k\})$. The measures in the operator and nuclear norm capture the worst-case and average-case mean squared error (MSE) of $\widehat{\tau}$, respectively, as explained in Section \ref{sec:discrete}.
   \item Gaussianizaton: We model treatments by $D_i = g(T_i)$ and derive that $\Cov_k(D) = f_k(\Sigma)$, $k = 1,2,3$ with analytical expressions of $f_k$. These known functions $f_k$ are explicitly given in Proposition \ref{prop:f_formula_general}. The act of Gaussianization translates covariate balance measures to an analytical function on the Gaussian covariance $\Sigma$, as follows:
\begin{equation*}
    \sum_{k=1}^3 \|X^\top \Cov_k(D) X\|_{\mathrm{norm}} = \sum_{k=1}^3 \|X^\top f_k(\Sigma) X\|_{\mathrm{norm}}\;, \quad {\mathrm{norm}}\in \{\mathrm{op}, \mathrm{nuc}\}\;.
\end{equation*}
    \item Solve the design optimization: We apply a first-order algorithm (projected gradient descent on the Gaussianized space) in Section \ref{sec:opt} to solve
    \begin{equation}\label{eq:obj3}
        \underset{\Sigma \in \cE}{\min}\sum_{k=1}^3 \|X^\top f_k(\Sigma) X\|_{\mathrm{norm}}\;.
    \end{equation}
    This returns a locally optimal Gaussian covariance matrix $\Sigma^*$.
    \item Assign treatments. Generate treatments through $D_i = g(T_i)$, $T\sim\cN(0, \Sigma^*)$.
\end{enumerate}
\end{testprocedure}

\paragraph{Optimization and sampling benefits.}
Procedure \ref{proc:GDO} applies to general experimental setups with $\mathbb{D} = \{1, \dots, K\}$, where $K$ is the total number of treatment arms (Section \ref{sec:discrete}). Given covariate balance measures of the form $\sum_{k=1}^K \|X^\top \Cov_k(D) X\|_{\mathrm{norm}}$ in Step 2, one would naturally search for a valid design with the optimal covariate balance measure. However, it is unclear how to directly optimize the design for $D$. First, optimization over the treatment covariance $\Cov_k(D)$ is computationally challenging: for binary treatment assignments ($K = 2$), we need to solve
\begin{equation*}
\min_{\Cov_1(D)}\|X^\top \Cov_1(D) X\|_{\mathrm{nuc}} = \min_{\Cov_1(D)}\tr( XX^\top \Cov_1(D)) \Leftrightarrow \min_{C\in\mathcal{C}}\tr( XX^\top C)\;.
\end{equation*}
The feasible set of $\Cov_1(D)$ is affinely isomorphic to the cut polytope $\mathcal{C}$ \citep{huber2017bernoulli}, and the optimization problem is thus equivalent to the Max-Cut problem \citep{barahona1986cut}, which is NP-hard. Second, even if one somehow obtains an approximate solution for $\Cov_k(D)$, it is still unclear how to sample discrete treatment assignments $\{D_i\}_{i=1}^n$ to achieve the desired covariance matrices $\Cov_k(D)$, $k = 1, \dots, K$.

Gaussianization mitigates these computational and sampling difficulties. Based on the discussion above, Gaussianization transforms the design problem into a nonlinear optimization of the form \eqref{eq:obj3}.
Compared to the direct optimization on design $D$, Gaussianization provides two key advantages. First, the optimization becomes generic nonlinear programming on the correlation elliptope, which can be efficiently solved using first-order local algorithms. Second, once an optimizer $\Sigma^*$ is obtained, treatment assignments can be sampled directly via $D_i = g(T_i)$, $T\sim\cN(0, \Sigma^*)$. Notably, this Gaussianization idea has been applied in optimization and theoretical computer science literature, where it is known as randomized hyperplane rounding \citep{williamson2011design}. More specifically, \cite{goemans1995improved} propose an approximation algorithm for the Max-Cut problem, where the idea is to generate a cut vector by thresholding a correlated Gaussian vector, with the correlation matrix obtained as the solution to a semidefinite program. Our approach shares a similar procedure when $K=2$: the action of Gaussianization in our approach is precisely the Goemans-Williamson rounding, with the analytic function $f(x) = \arccos(x)$ (derived from $f_k$ in Proposition \ref{prop:f_formula_general}) shared in the analysis.

\paragraph{Constraints on Gaussianized design optimization.}
From the discussion above, the Gaussianization framework focuses on a specific class of designs whose variance-covariance matrices satisfy
\begin{equation*}
    \Cov_k(D) \in \{f_k(\Sigma) \mid \Sigma\in\cE\}\;.
\end{equation*}
This design class, induced by Gaussianization, is generally a subset of all possible designs, which may be limited under certain scenarios. However, this limitation may still be preferable to the global design optimization that involves NP-hard instances and significant sampling challenges. Probably due to this reason, existing works on design optimality usually focus on specific design classes, such as stratified designs \citep{bai2022optimality} and rerandomization designs \citep{wang2023bestchoice,li2020jrssb}.

Since the design objective is generally non-convex in $\Sigma$, only a local optimizer can be guaranteed. While the local optimality may appear restrictive, it is worth noting that one can flexibly initialize the design optimization from any $\Sigma \in \cE$. For example, in the simulation below, we initialize the design optimization using a stratified design. In this sense, our approach serves as a general optimization tool to refine an input design, where the design may already adapt to covariates through other existing methods.

\paragraph{Simulation.}
To demonstrate the concrete benefits of design optimization, we conduct a simple simulation that evaluates the MSE of $\widehat{\tau}$ under various designs. Two covariate structures are considered: (a) the first covariate has the largest scale and serves as the sole informative feature, and (b) all covariates are uniformly generated and are equally informative. We initialize our iterative algorithm using either an i.i.d. design ($\Sigma=I_n$) or a block design, where $\Sigma$ is a block correlation matrix representing a classical block design constructed by sorting the first covariate. Further details of the simulation are provided in Section \ref{sec:appendix_simu}.

Figure \ref{fig:mse_3treatments} shows the MSE trajectories over iterations in Gaussianized design optimization. In setup (a), initializing with a block design yields a smaller MSE by leveraging the highly informative covariate. Furthermore, starting from the i.i.d. design and minimizing the covariate balance measure with the operator norm results in a lower final MSE. Different initial designs in setup (b) produce similar early-stage MSEs but diverge in their final performance. Notably, with suitable choices of the initial design and the norm, Gaussianized design optimization reduces the MSE by more than 60\% through iterations. Figure \ref{fig:corr_3treatments} provides heatmaps of the correlation matrices for different initializations and norms. Observe that Panels (b), (c), (g), and (h) preserve the block structure, which highlights how design optimization makes local improvements.

\begin{figure}[t!]
    \vspace{-7mm}
    \centering
    \subfloat[]{\includegraphics[width=.45\linewidth]{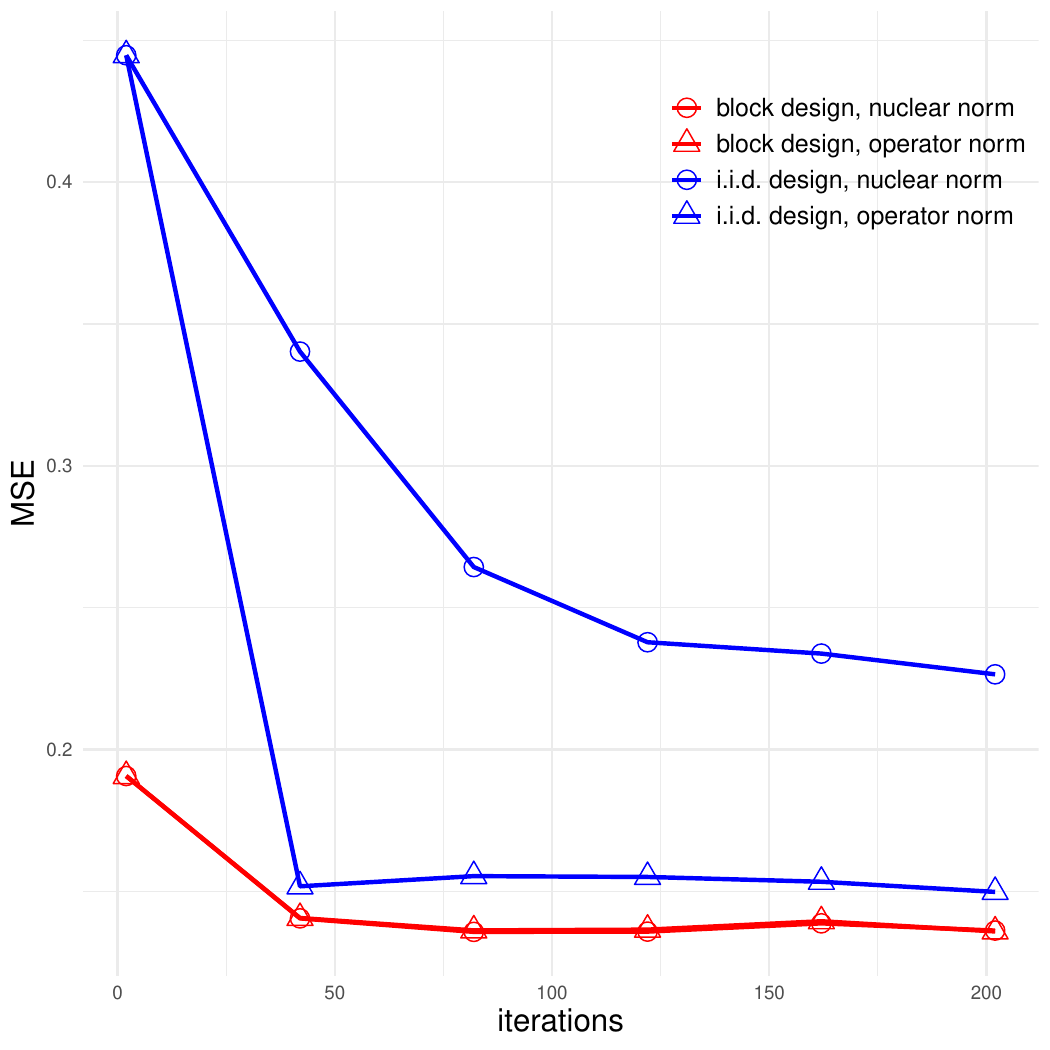}}
    \subfloat[]{\includegraphics[width=.45\linewidth]{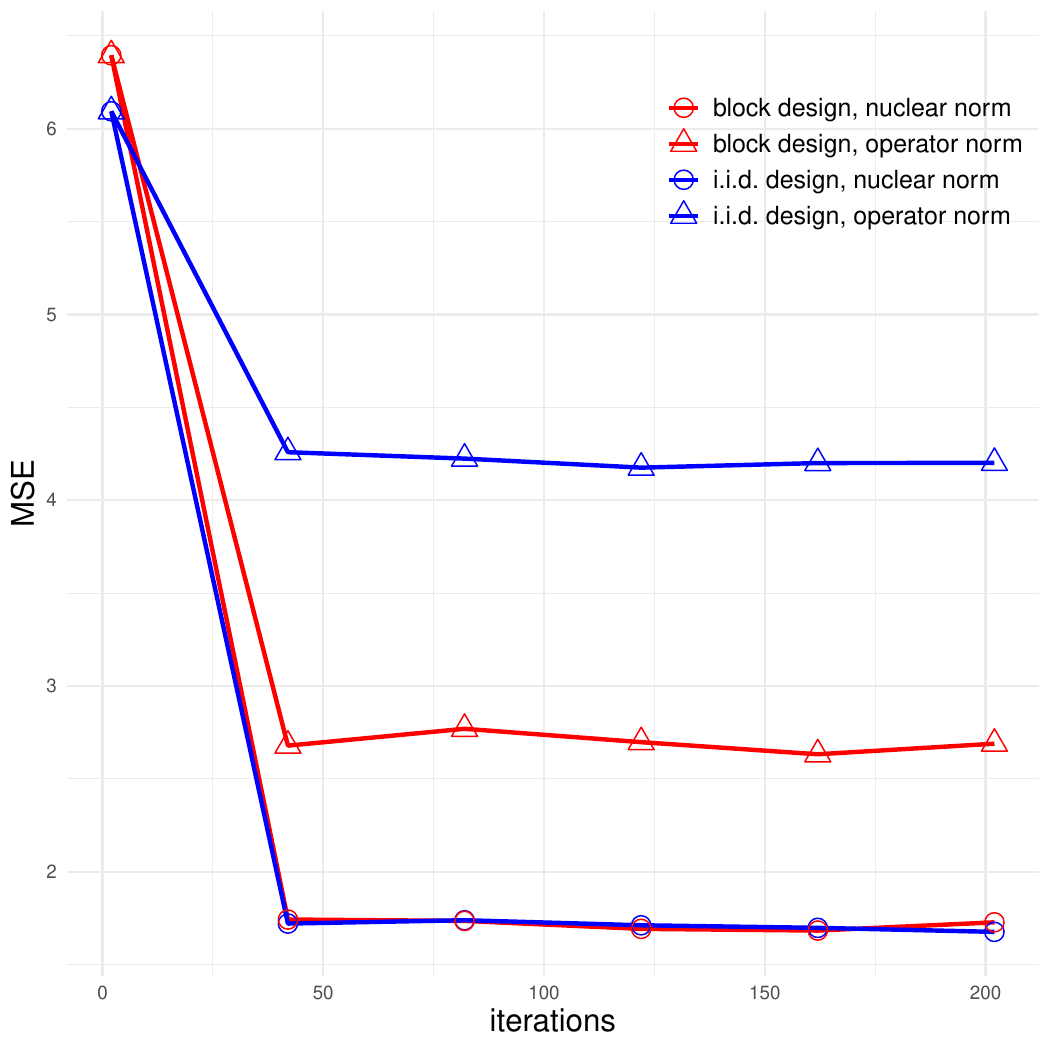}}
    \caption{MSE of the Horvitz-Thompson estimators over iterations of design optimization.}
    \label{fig:mse_3treatments}
\end{figure}

\begin{figure}[t!]
    \vspace{-7mm}
    \centering
    \subfloat[]{\includegraphics[width=.2\linewidth]{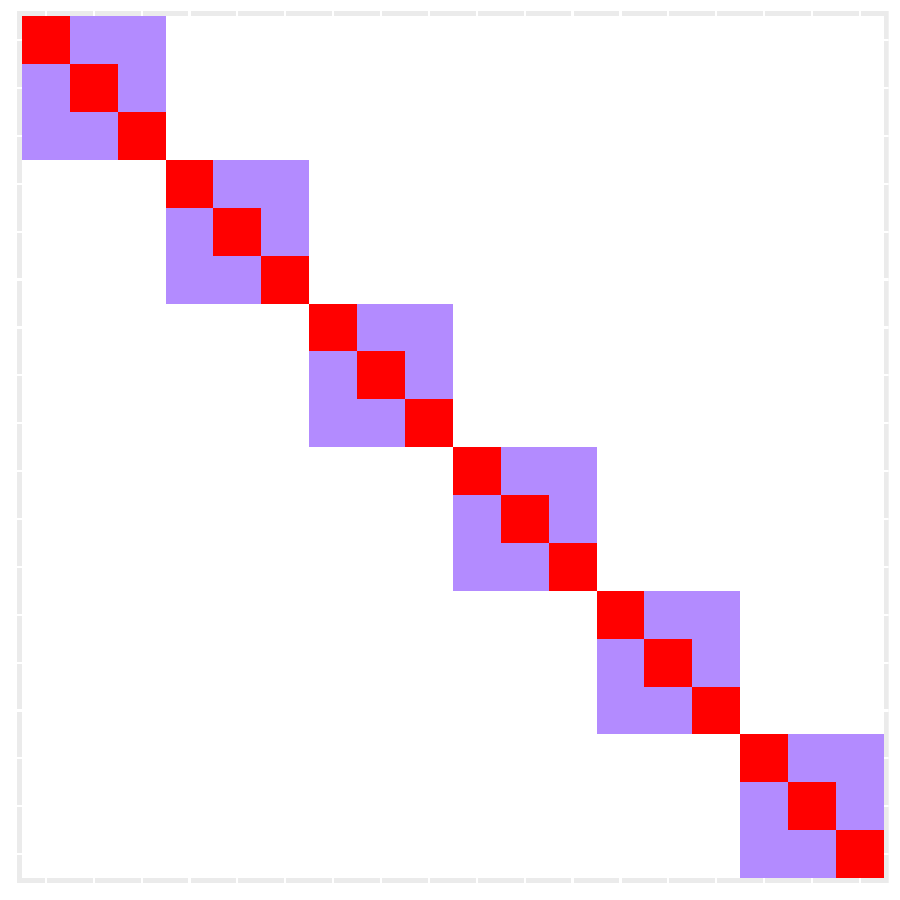}}
    \subfloat[]{\includegraphics[width=.2\linewidth]{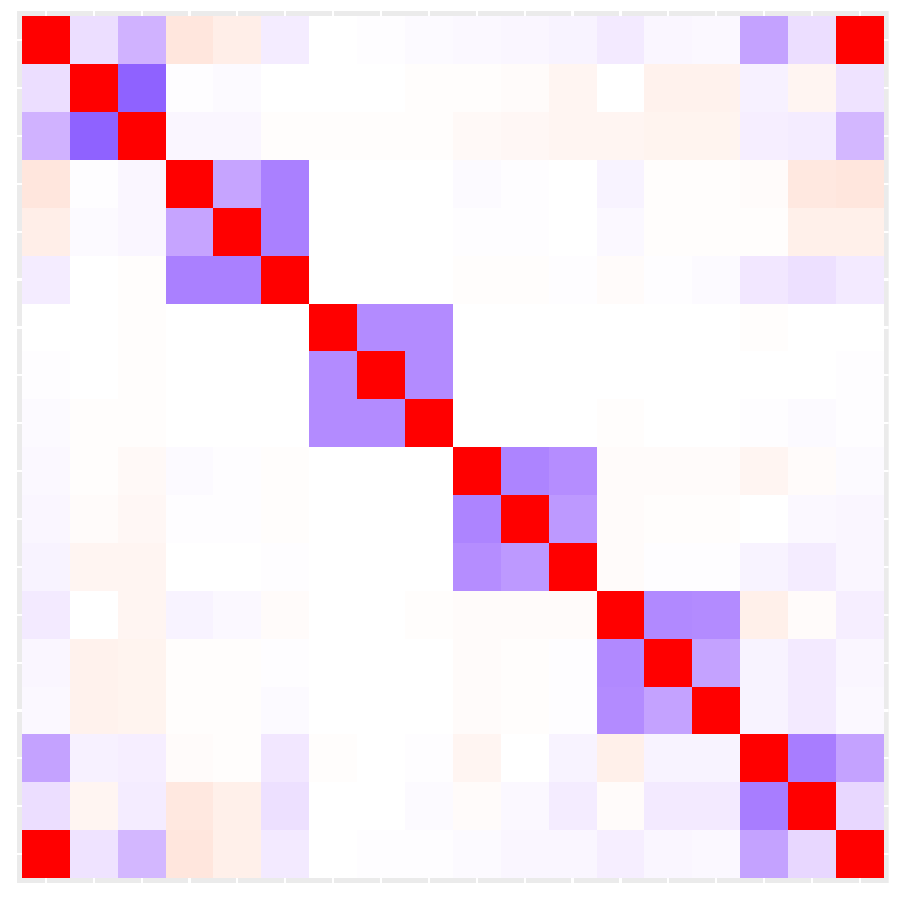}}
    \subfloat[]{\includegraphics[width=.2\linewidth]{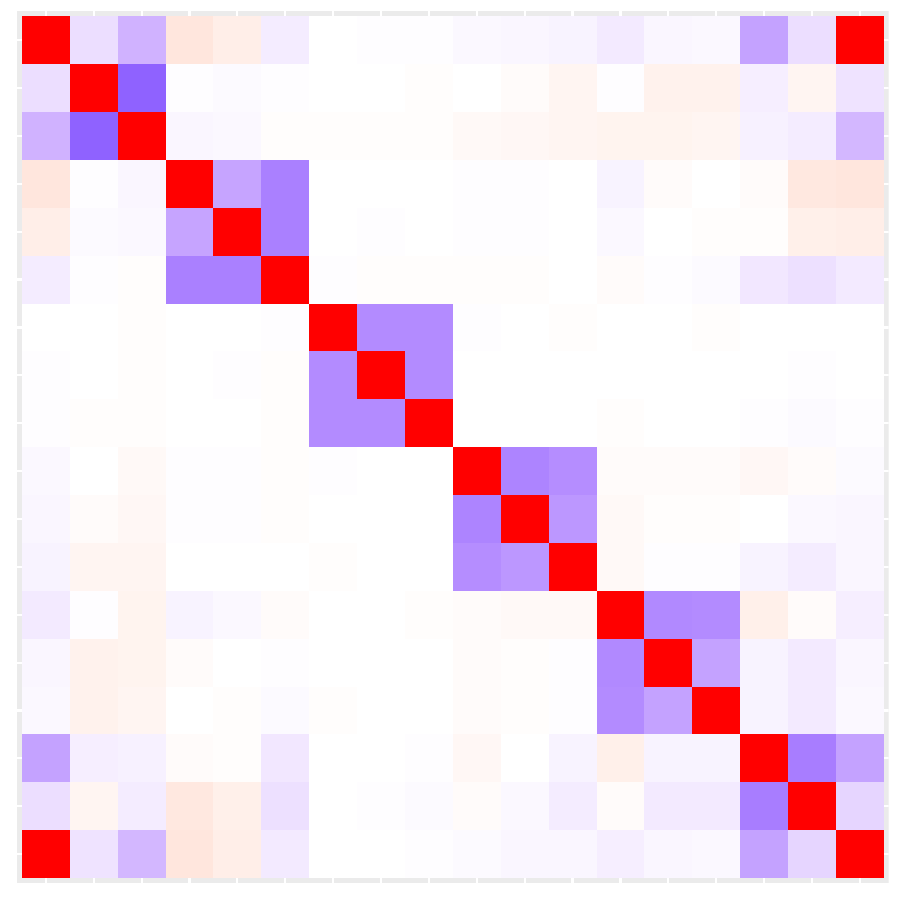}}
    \subfloat[]{\includegraphics[width=.2\linewidth]{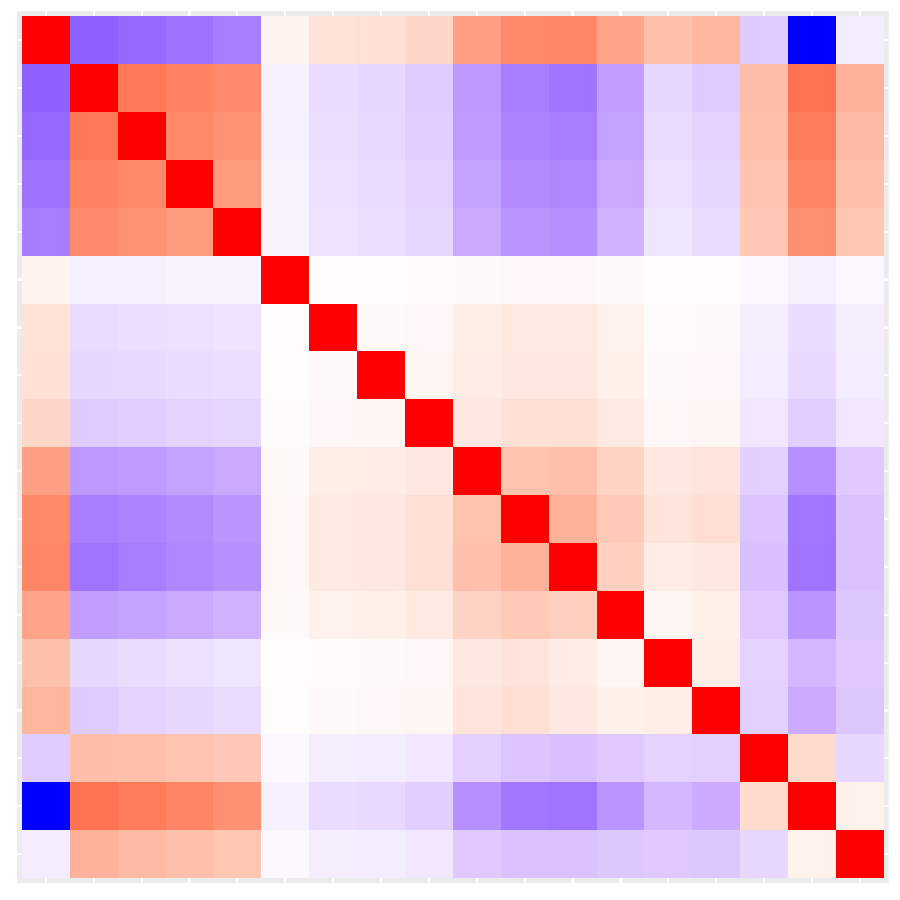}}
    \subfloat[]{\includegraphics[width=.23\linewidth]{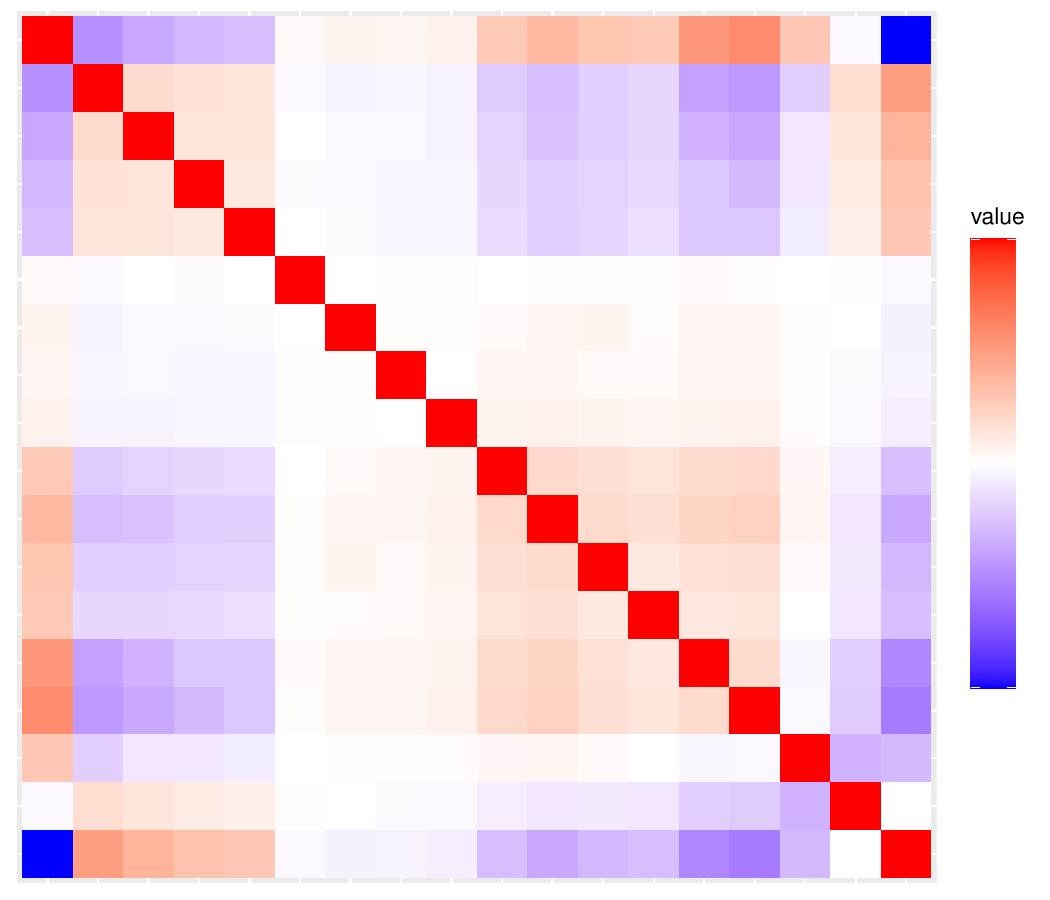}}\\
    \subfloat[]{\includegraphics[width=.2\linewidth]{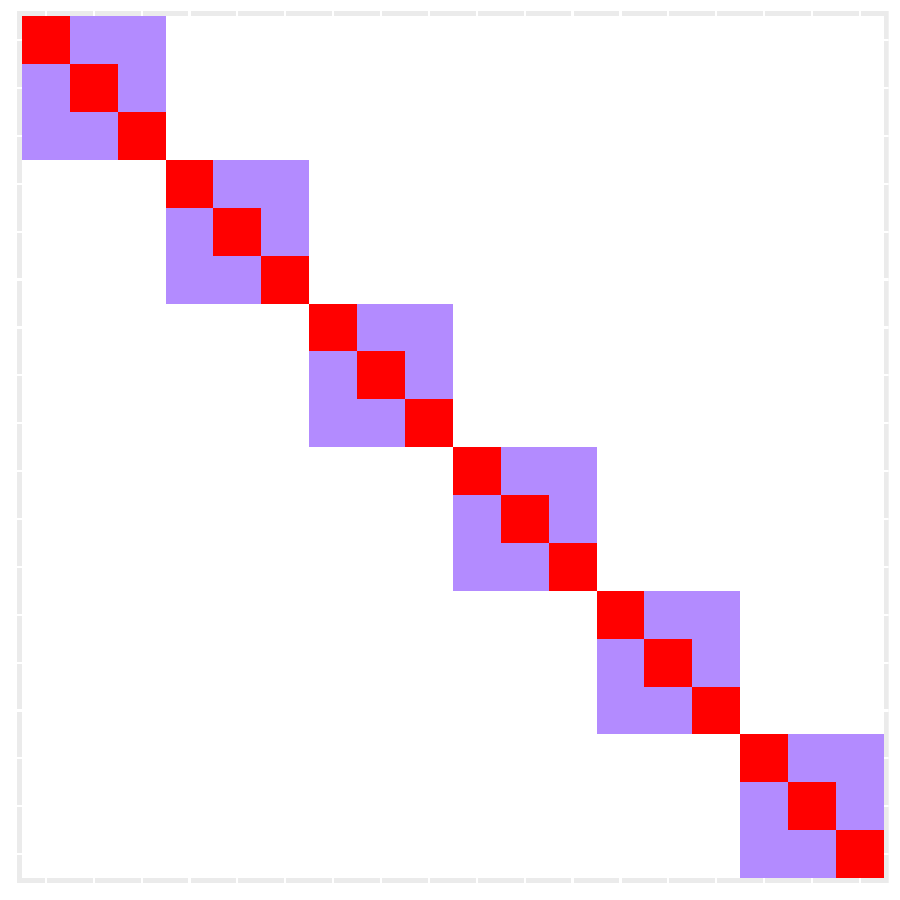}}
    \subfloat[]{\includegraphics[width=.2\linewidth]{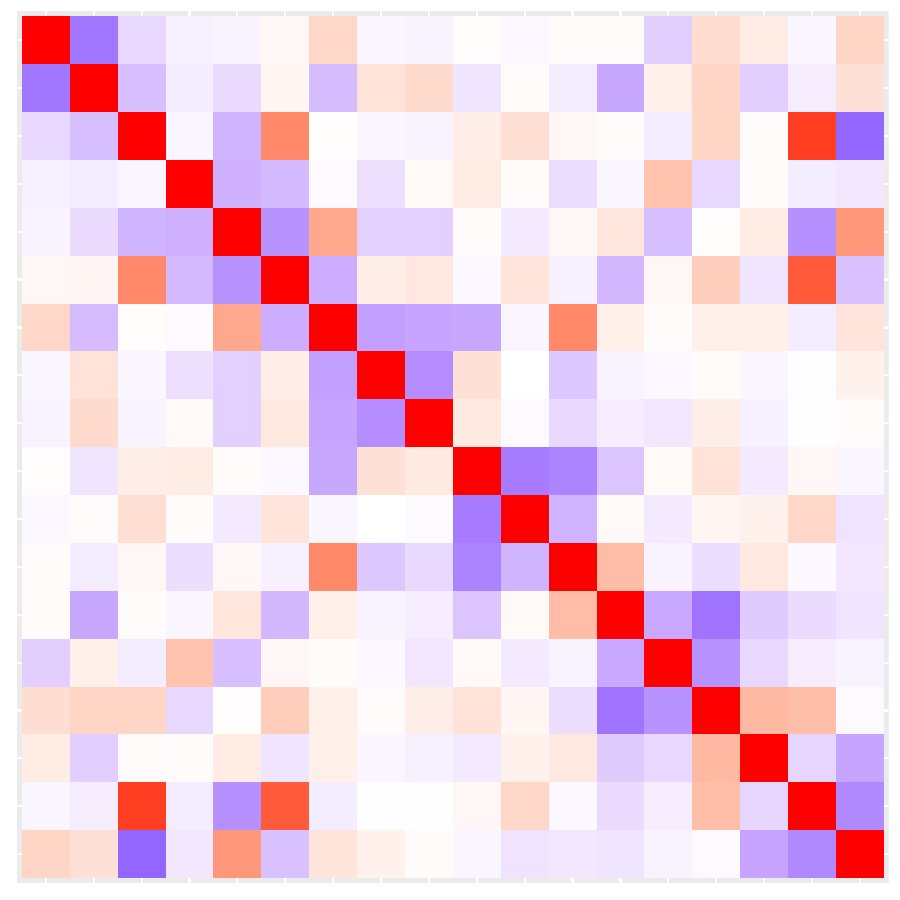}}
    \subfloat[]{\includegraphics[width=.2\linewidth]{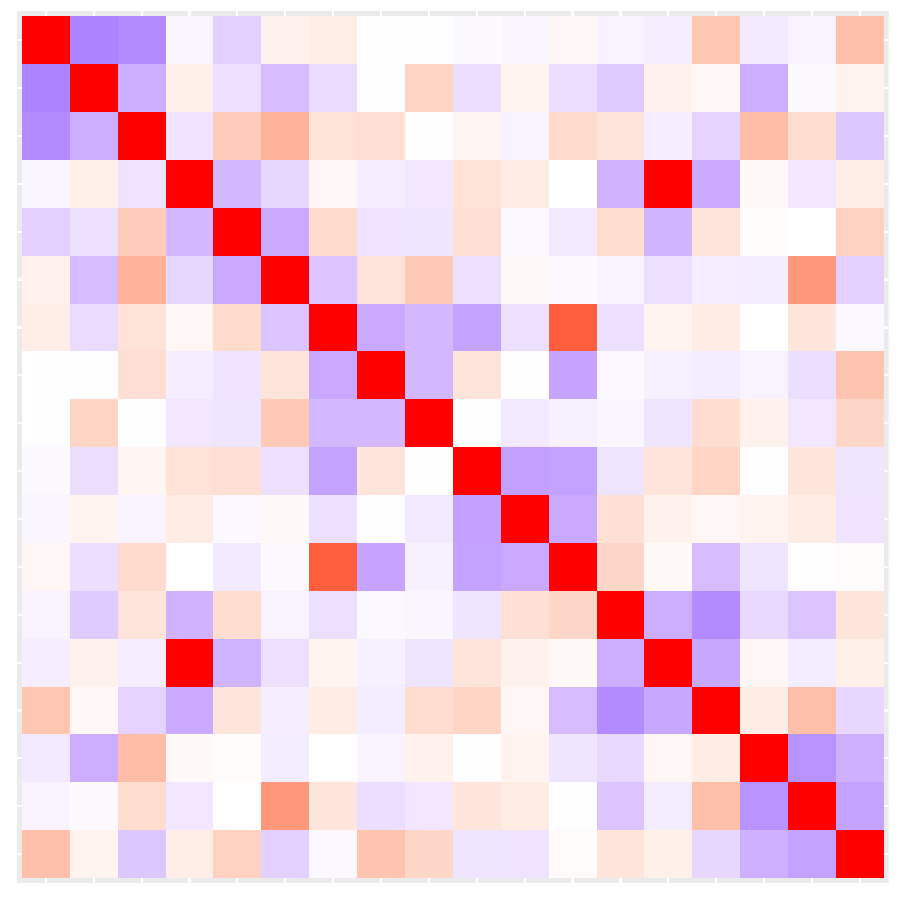}}
    \subfloat[]{\includegraphics[width=.2\linewidth]{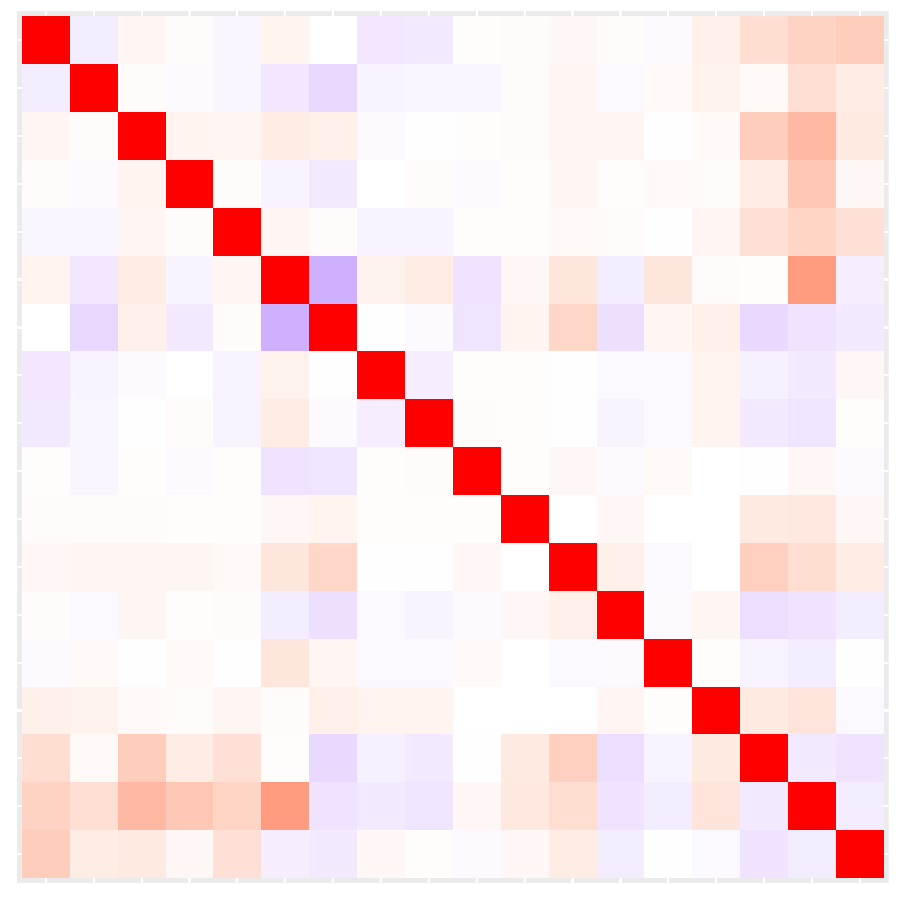}}
    \subfloat[]{\includegraphics[width=.23\linewidth]{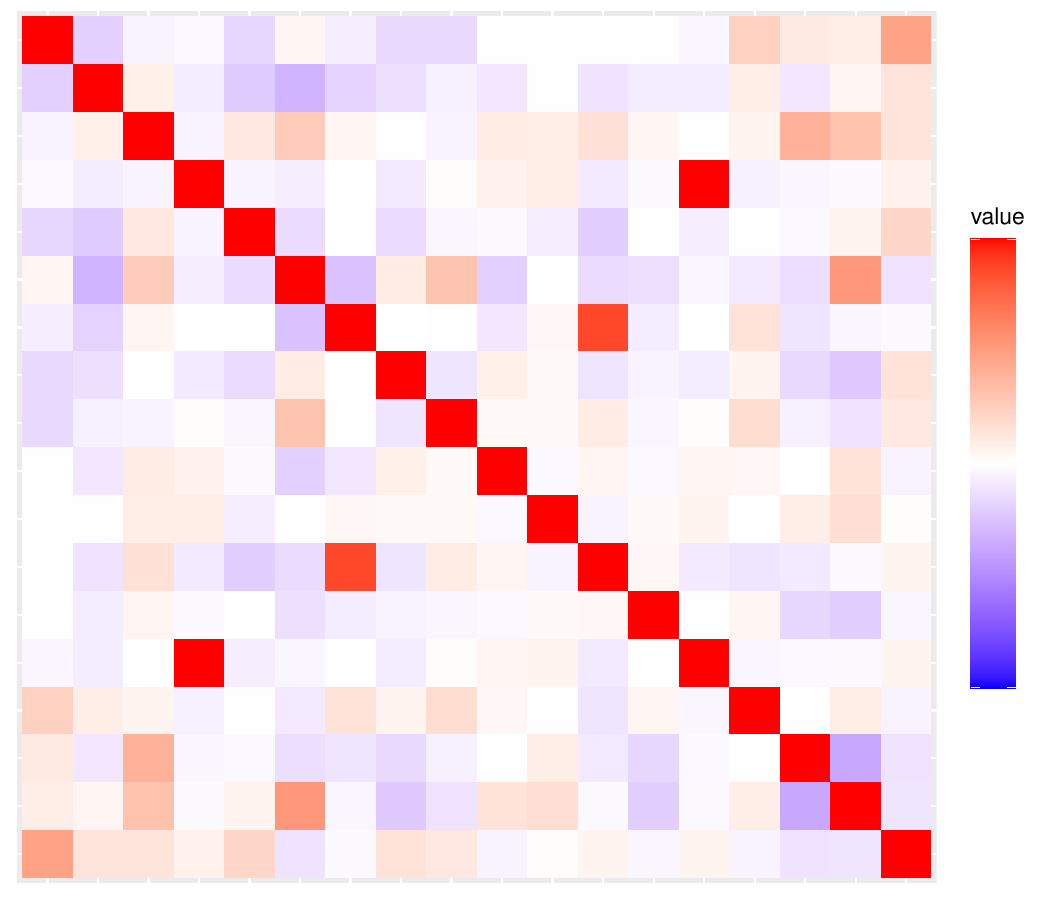}}\\
    \caption{Heatmaps of covariance matrix $\Sigma$ from Guassianized design optimization. The first and second row correspond to the single feature setup and the uniform covariate setup. In each row, from left to right, we show the initial block design, optimized block designs under the operator and nuclear norm, optimized i.i.d. designs under the operator and nuclear norm. Red denotes 1, blue denotes -1, and white denotes 0.}
    \label{fig:corr_3treatments}
\end{figure}

\section{Related Work}
Randomized experiments, such as i.i.d. Bernoulli designs and complete randomizations, are generally viewed as robust designs and are thus desirable in practice. Fisher first demonstrated that complete randomization ensures unbiased estimations and therefore facilitates inference and testing \citep{fisher1925statistical,fisher1926arrangement}. Subsequently, \cite{wu1981robustness} framed robustness in terms of the worst-case mean squared error (MSE) by showing that complete randomization is a minimax design, meaning it minimizes the worst-case MSE. In time-series experiments such as N-of-1 trials, complete randomization has also been shown to be robust against both estimand choices and model misspecifications \citep{liang2023randomization}. See also \cite{kallus2018optimal,Harshaw2019,basse2023minimax,nordin2022properties,bai2023randomize} for related discussions.

In settings where covariate information is available, which 
is the focus of this paper, it is 
reasonable to adapt the experimental design so as to achieve covariate balance across treatment arms. 
These designs are collectively known as {\em covariate-adaptive randomization}.
With few covariates, blocking is the canonical way to reduce unwanted variation and increase precision~\citep{fisher1926arrangement}. 
Matched-pair designs \citep{greevy2004optimal,imbens2015causal} are prime examples of blocking, where each block contains two units, and are optimal under certain conditions~\citep{bai2022optimality}.
However, blocking can be impractical with many covariates. 
This has motivated sampling-based techniques such as rerandomization \citep{morgan2012rerandomization}, 
which follow an accept-reject sampling procedure according to certain covariate balance criteria.
As shown in \cite{wang2023bestchoice,li2020jrssb}, rerandomization procedures reduce estimation variance by adjusting for the linear component in outcomes that covariates can explain, making them optimal given appropriate covariate balance criteria. However, choosing the right trade-off between covariate balance criteria and the computational complexity of sampling can be challenging, especially with high-dimensional covariates. Moreover, blocking and rerandomization mainly focus on binary treatment settings, and thus it remains unclear how to optimally balance covariates with multiple treatments.

More relevant to our work, \cite{Harshaw2019} introduced the Gram-Schmidt Walk (GSW) design that formalizes the trade-off between covariate balance and robustness in binary treatment settings. Specifically, considering $Z$ as a binary treatment assignment vector, \cite{Harshaw2019} proposed $\|\Cov(Z)\|_{\op}$ and $\|X^\top \Cov(Z) X\|_{\op}$ as measures of robustness and covariate balance, respectively. The GSW design navigates the robustness-balance trade-off by proposing a weighted combination of the aforementioned measures, and employs a random walk to sequentially generate treatment assignments that optimize the weighted combination. Their measure of covariate balance (i.e., $\|X^\top \Cov(Z) X\|_{\op}$) motivates us to study similar norm-based objectives; when $K=2$, our objective reduces to the GSW objective.

A recent line of work has addressed inference under 
the covariate-adaptive designs described above, which can be challenging due to the complex covariate-treatment dependencies.
See, for instance, \cite{bugni2018inference, bugni2019inference,ma2020statistical, bai2024inference} for inference under stratified designs; \cite{bai2022inference} for matched-pair designs; and \cite{li2018asymptotic,li2020rerandomization} for rerandomization. 
Our asymptotic results under Gaussianization follow this line of work with different proof techniques.
In addition to asymptotic inference, we also consider Fisherian-style randomization inference in Section \ref{sec:appendix_inference}. Although Fisherian randomization was originally developed to test the sharp null of no treatment effects \citep{fisher1935design}, it has recently been extended to detect heterogeneity \citep{ding2016variation} and interference \citep{basse2019randomization,huang2025monotonespillover}; these extensions can also be combined with flexible machine learning models for higher efficiency \citep{guo2025ml}. Our paper leverages randomization to compute design-based confidence intervals.

Our paper also contributes to the growing literature on continuous treatments. Using techniques from doubly robust methodology, \cite{colangelo2020double, kennedy2017non} studied the estimation of the average potential outcome function, while \cite{hsu2024testing} tested functional properties such as  monotonicity. Recently, \cite{callaway2024difference} analyzed difference-in-differences setups with a continuous treatment, and discussed the identification of response functions and their derivatives. See \cite{dong2023nonparametric, schindl2024incremental,de2022difference} for related studies. However, all these works consider i.i.d. data from observational studies, which is distinct from our experimental design setup. 

\section{A Gaussianization Framework}\label{sec:discrete}
In this section, we formally introduce the Gaussianization framework, which includes both norm-based covariate balance measures and their Gaussianized representations. Our formulation accommodates general experimental setups with the discrete support $\mathbb{D} = \{1, \dots, K\}$. We conclude this section by presenting Mehler’s formula \citep{mehler1866, liang2022mehler}, a key technical insight that motivates our design optimization.

\subsection{General Covariate Balance Measures}
We consider the potential outcome framework as in Section \ref{sec:example}, and focus on uniform designs such that $\P(D_i = k) = 1/K$ for any $i = 1, \dots, n$ and $k = 1, \dots, K$. Non-uniform designs, where $D_i$ follows non-uniform marginal treatment probabilities, can also fit within our Gaussianization framework by slightly adjusting the Gaussianization function $g$. The key requirement is that all treatment assignments share the same marginal distribution to enable effective design optimization.

We define our estimand as follows
\begin{equation*}
    \tau_w = \sum_{k=1}^K w_k \tau_k\;, \quad \tau_k = \frac{1}{n} \sum_{i=1}^n Y_i(k)\;,
\end{equation*}
where $w = (w_1, \dots, w_k)$ is a pre-specified vector. This can be a contrast vector, e.g., $w = (1,-1,0,\dots, 0)$, leading to the average treatment effect of treatment arm 1 over 2. It can also be a weight vector, e.g., $w_k = 1/K$ and $\sum w_k = 1$, which reduces to the estimand in Section \ref{sec:example} given $K = 3$. These estimands encompass a rich class of causal quantities, and thus they are of primary interest in empirical research.

To estimate $\tau_w$, we use the Horvitz-Thompson estimator as mentioned in Section \ref{sec:example}:
\begin{equation*}
    \widehat{\tau}_w = \sum_{k=1}^K w_k \widehat{\tau}_k\;,\quad \widehat{\tau}_k = \frac{K}{n} \sum_{i=1}^n \mathbb{I}\{D_i = k\} Y_i\;.
\end{equation*}
We focus on Horvitz-Thompson estimators, similar to previous works in the design optimality literature \citep{bai2022optimality, Harshaw2019,wang2023bestchoice}. More importantly, the Horvitz-Thompson estimator $\widehat{\tau}_w$ is the optimal linear unbiased sampling estimator of $\tau_w$ \citep{hege1967optimum}, and thus is desirable for design optimization. 
Alternatively, one could consider covariate-adjusted estimators \citep{fisher1935design, Chang2023designbased, list2024using}, but their performance is model-specific, potentially leading to biased estimations \citep{Freedman2008a}. More detailed discussions are provided in Section \ref{sec:cov_adjust}.

While $\widehat{\tau}_w$ is unbiased, its mean squared error (MSE) would depend on specific design structures through the covariance matrix of $D$. The following result has been proved in many works, e.g., \cite{Chang2023designbased}.
\begin{lemma}\label{lem:discrete_var}
Under uniform experimental designs, for $k = 1,\dots, K$, we have
\begin{equation*}
\E(\widehat{\tau}_k - \tau_k)^2 = \frac{K^2}{n^2} Y(k)^\top \Cov_k(D) Y(k)\;,
\end{equation*}
where $Y(k) = (Y_1(k), \dots, Y_n(k))^\top$, and $\Cov_k(D)$ is defined in Section \ref{sec:example}.
\end{lemma}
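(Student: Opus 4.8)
The plan is to read the claim as a pure variance computation in the design-based framework, where the potential outcomes $Y_i(k)$ are treated as fixed constants and all randomness is carried by the treatment vector $D$. The argument proceeds in three short steps: rewrite the estimator as a linear form in the assignment indicators, verify unbiasedness so that the MSE collapses to a variance, and then expand that variance as the claimed quadratic form.

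First I would eliminate the observed outcomes in favor of the single relevant potential outcome. Writing $Z_i \coloneqq \mathbb{I}\{D_i = k\}$ and using the consistency relation $Y_i = \sum_{j=1}^K \mathbb{I}\{D_i = j\}\, Y_i(j)$, the product $\mathbb{I}\{D_i = k\}\, Y_i$ collapses to $Z_i\, Y_i(k)$, since the indicator annihilates every term except $j = k$. Hence
\[
  \widehat{\tau}_k = \frac{K}{n} \sum_{i=1}^n Z_i\, Y_i(k),
\]
which is a linear functional of the random vector $Z = (Z_1, \dots, Z_n)$ with nonrandom coefficients $Y_i(k)$.

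Next I would record unbiasedness and reduce to a variance. Under the uniform design $\P(D_i = k) = 1/K$, so $\E[Z_i] = 1/K$ and $\E[\widehat{\tau}_k] = \frac{K}{n}\sum_{i=1}^n \frac{1}{K} Y_i(k) = \tau_k$; therefore $\E(\widehat{\tau}_k - \tau_k)^2 = \mathrm{Var}(\widehat{\tau}_k)$. Because the $Y_i(k)$ are constants, bilinearity of covariance gives
\[
  \mathrm{Var}\!\left(\sum_{i=1}^n Z_i\, Y_i(k)\right)
  = \sum_{i=1}^n \sum_{j=1}^n Y_i(k)\, Y_j(k)\, \Cov(Z_i, Z_j)
  = Y(k)^\top \Cov_k(D)\, Y(k),
\]
where $\Cov_k(D)$ is precisely the covariance matrix of $(\mathbb{I}\{D_1=k\}, \dots, \mathbb{I}\{D_n=k\})$ from Section \ref{sec:example}. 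Multiplying by the prefactor $K^2/n^2$ yields the stated identity.

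The computation is essentially routine, so I do not expect a genuine obstacle; the only points requiring care are the observed-to-potential-outcome substitution, which relies critically on the indicator structure of the Horvitz-Thompson weights, and the bookkeeping that recognizes the double sum as the quadratic form $Y(k)^\top \Cov_k(D)\, Y(k)$. Keeping in mind that in the design-based view the outcomes are fixed while the assignment indicators supply all the randomness is what makes each step transparent.
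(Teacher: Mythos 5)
Your proof is correct, and it is the standard argument the paper has in mind: the paper itself omits a proof of Lemma \ref{lem:discrete_var}, citing prior work, and the intended derivation is exactly your three steps (consistency to replace $Y_i$ by $Y_i(k)$, unbiasedness under $\P(D_i=k)=1/K$ so MSE equals variance, then the quadratic-form expansion of the variance of a linear functional of the indicators).
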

From Lemma \ref{lem:discrete_var}, the MSE of $k$-th treatment effect is a quadratic form in the covariance matrix of the treatment assignment vector, $\Cov_k(D)$, evaluated at the (unknown) potential outcome vector $Y(k)$. Then, for a general estimator $\widehat{\tau}_w$, we utilize the AM-QM inequality to obtain
\begin{align*}
\E (\widehat{\tau}_w - \tau_w)^2 &= \E \left(\sum_{k=1}^K w_k (\widehat{\tau}_k - \tau_k)\right)^2 \\
&\le K \sum_{k=1}^K w_k^2 \E (\widehat{\tau}_k - \tau_k)^2 = \frac{K^3}{n^2} \sum_{k=1}^3 w_k^2 Y(k)^\top \Cov_k(D) Y(k)\;.
\end{align*}

The MSE bound on $\widehat{\tau}_w$ leads to measures of covariate balance. Specifically, following a similar idea as in \cite{Harshaw2019}, let's assume for the moment that potential outcomes are perfectly linear in the covariates, i.e., $Y(k) = X\beta_k$, for some $\beta_k\in\R^{d}$. This reduces the MSE bound to 
\begin{equation*}
    \mathrm{MB} \coloneqq \frac{K^3}{n^2}\sum_{k=1}^K w_k^2 \beta_k^\top X^\top \Cov_k(D) X\beta_k\;.
\end{equation*}
In practice, even if we can somehow justify perfect linearity, the signals $\{\beta_k\}_{k=1}^K$ are in general unknown. \cite{Harshaw2019} formulate a worst-case MSE by assuming that the signal has a fixed norm with arbitrary directions. Following their idea, we consider a structural assumption that for $k = 1,\dots, K$, $\|\beta_k \| \le M$, leading to a measure of worst-case MSE:
\begin{align}
    \sup_{\|\beta_k\|\le M} \mathrm{MB} &\propto \sup_{\|\beta_k\|\le M} \sum_{k=1}^K w_k^2 \beta_k^\top X^\top \Cov_k(D) X\beta_k \propto \sum_{k=1}^K w_k^2 \sup_{\|\beta_k\|\le 1} \beta_k^\top X^\top \Cov_k(D) X\beta_k\nonumber\\
    &= \sum_{k=1}^K w_k^2 \|X^\top \Cov_k(D) X\|_{\mathrm{op}} \;.\label{eq:opt_op_general}
\end{align}
As an alternative, \cite{Isaki1982} and \cite{Chang2023designbased} have introduced the notion of ``anticipated variance" that measures an averaged MSE under a prior distribution on the potential outcomes. Following their idea, we consider that $\{\beta_k\}_{k=1}^K$ are random signals with mean zero and identity covariance. This leads to a measure of average-case MSE:
\begin{align}
	\E_{\beta_k} \mathrm{MB} &\propto \sum_{k=1}^K w_k^2 \E \beta_k^\top X^\top \Cov_k(D) X\beta_k \nonumber = \sum_{k=1}^K w_k^2 \tr( X^\top \Cov_k(D) X \E \beta_k \beta_k^\top)\nonumber\\
    &= \sum_{k=1}^K w_k^2 \tr( X^\top \Cov_k(D) X) = \sum_{k=1}^K w_k^2 \|X^\top \Cov_k(D) X\|_{\mathrm{nuc}}\;. \label{eq:opt_nuc_general}
\end{align}
The derivation above leads to the formal definition of covariate balance measures.
\begin{definition}
For a uniform design with $K$ treatments, we define the covariate balance measure in the nuclear and operator norm as
\begin{equation*}
    \sum_{k=1}^K w_k^2 \|X^\top \Cov_k(D) X\|_{\norm}\;, \quad \norm\in\{\nuc, \op\}\;.
\end{equation*}
\end{definition}
To summarize, by making various structural assumptions on the value of $\beta_k$, we derive covariate balance measures that only depend on $X$ and the design $D$. Importantly, this motivates the study of objective \eqref{eq:cov_metric} as the basis for optimal experimental design, as we show in the sequel. By setting $w_k = 1/K$ and $K = 3$, one recovers the measures exemplified in Procedure \ref{proc:GDO}. Additionally, when $K=2$, our measure $\sum_{k=1}^K\|X^\top \Cov_k(D) X\|_{\mathrm{op}}$ is equivalent to the covariate balance measure studied in \citep{Harshaw2019}.

\subsection{Gaussianized Representation}
We introduce a Gaussianized representation of the uniform design $D$ through a map $g:\R\to\{1,\dots, K\}$ as defined below:
\begin{equation*}
    g(t) = 
    \begin{cases}
        i \quad &\text{if}~ t\in \left(\Phi^{-1}\left(\frac{i-1}{K}\right), \Phi^{-1}\left(\frac{i}{K}\right)\right]\;,\quad i=1,\dots, K-1\\
        K \quad &\text{if}~ t \in (\Phi^{-1}\left(\frac{K-1}{K}\right), \infty)\\
    \end{cases}\;,
\end{equation*}
where $\Phi(\cdot)$ is the standard normal CDF. In other words, we discretize the Gaussian treatments $T$ according to the equidistance quantiles. This recovers the uniform design since $g(T_i)$ is uniformly distributed on $\{1, \dots, K\}$.

When a uniform design is from a Gaussianized representation, i.e., $D_i = g(T_i)$ for $T\sim\cN(0, \Sigma)$, the variance-covariance matrix of $D$ is completely captured by $\Sigma$. Surprisingly, one can link these two covariance matrices through analytical formulas.
\begin{proposition}\label{prop:f_formula_general}
Under Gaussianization $D_i = g(T_i)$ for $K$ treatment arms, we have $\Cov_k(D) = f_k(\Sigma)$, where $f_k:[-1, 1]\to\R$, $k = 1, \dots, K$ are elementwise functions defined by 
\begin{align*}
    f_k(\rho) = 
    \begin{cases}
        r_{1,1}(\rho) &\text{if}~k = 1\\
        r_{K-1,K-1}(\rho) &\text{if}~k = K\\
        r_{k-1,k-1}(\rho) + r_{k, k}(\rho) - 2  r_{k-1,k}(\rho) &\text{otherwise}
    \end{cases}\;.
\end{align*}
For $i,j=1, \dots, K-1$, we have
\begin{equation}\label{eq:corr_func}
    r_{i,j}(\rho) \coloneqq \Cov(\mathbb{I}\{X\le q_i\}, \mathbb{I}\{Y\le q_j\}) = \int_0^{\rho} \frac{1}{2\pi \sqrt{1-r^2}} \exp( -\frac{q_i^2 + q_j^2 -2r q_i q_j}{2(1-r^2)} ) \dd r\;,
\end{equation}
where $q_i = \Phi^{-1}(i/K)$, and $(X, Y)$ follows the bivariate normal distribution with variance one and correlation $\rho$.
\end{proposition}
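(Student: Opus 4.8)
The plan is to split the proof into two independent parts: first, reducing each matrix $\Cov_k(D)$ to elementwise combinations of the pairwise functions $r_{i,j}$, and second, deriving the integral representation \eqref{eq:corr_func} for a single $r_{i,j}$.

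For the first part, fix two units $a,b\in\{1,\dots,n\}$. Because $T\sim\cN(0,\Sigma)$ has unit diagonal, the pair $(T_a,T_b)$ is bivariate normal with unit variances and correlation $\Sigma_{ab}$, so any expectation of a function of $(T_a,T_b)$ depends on $\Sigma$ only through $\Sigma_{ab}$; this is what makes each $f_k$ an elementwise map. Writing $q_m=\Phi^{-1}(m/K)$ and $A_a^{(m)}=\mathbb{I}\{T_a\le q_m\}$, the definition of $g$ gives $\mathbb{I}\{D_a=1\}=A_a^{(1)}$, $\mathbb{I}\{D_a=K\}=1-A_a^{(K-1)}$, and $\mathbb{I}\{D_a=k\}=A_a^{(k)}-A_a^{(k-1)}$ for intermediate $k$. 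The $(a,b)$ entry of $\Cov_k(D)$ is $\Cov(\mathbb{I}\{D_a=k\},\mathbb{I}\{D_b=k\})$; expanding by bilinearity of covariance, substituting $\Cov(A_a^{(m)},A_b^{(l)})=r_{m,l}(\Sigma_{ab})$, and using the symmetry $r_{m,l}=r_{l,m}$ (which holds since swapping $X$ and $Y$ leaves the bivariate normal law invariant) reproduces exactly the three cases defining $f_k$. The middle case in particular collapses to $r_{k-1,k-1}+r_{k,k}-2r_{k-1,k}$ after combining the two equal cross terms.

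For the second part, the plan is to start from $r_{i,j}(\rho)=\Phi_2(q_i,q_j;\rho)-\Phi(q_i)\Phi(q_j)$, where $\Phi_2(\cdot,\cdot;\rho)$ denotes the standard bivariate normal CDF with correlation $\rho$. At $\rho=0$ the coordinates are independent, so $\Phi_2(q_i,q_j;0)=\Phi(q_i)\Phi(q_j)$ and hence $r_{i,j}(0)=0$. Since the subtracted product does not depend on $\rho$, it suffices to show $\tfrac{d}{d\rho}\Phi_2(q_i,q_j;\rho)=\phi_2(q_i,q_j;\rho)$, the bivariate normal density; integrating this relation from $0$ to $\rho$ and inserting the closed form of $\phi_2$ yields \eqref{eq:corr_func} verbatim.

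The crux, and the main obstacle, is this last identity (Plackett's, or Price's, identity). The plan is to verify that the bivariate normal density satisfies the diffusion-type relation $\partial_\rho\phi_2=\partial^2_{xy}\phi_2$, a direct but careful differentiation of the Gaussian kernel, and then write $\Phi_2(x,y;\rho)=\int_{-\infty}^{x}\!\int_{-\infty}^{y}\phi_2(u,v;\rho)\,dv\,du$, interchange $\partial_\rho$ with the two integrals, and apply the fundamental theorem of calculus in $u$ and $v$, using that $\phi_2$ vanishes as its arguments tend to $-\infty$. The only genuine care is in justifying the interchange of differentiation and integration and the degenerate behavior as $|\rho|\to1$: for $\rho$ in any compact subinterval of $(-1,1)$ the integrand and its $\rho$-derivative are uniformly dominated, so dominated convergence applies and the formula extends continuously to the endpoints. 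The PDE algebra and the final substitution are routine once this is in place. Alternatively, the same identity can be obtained by integrating Mehler's formula termwise, which ties directly into the technical tools introduced next in this section.
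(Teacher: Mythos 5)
Your proposal is correct, and its first half coincides with the paper's argument: both decompose $\mathbb{I}\{D_a=k\}$ as a difference of half-line indicators $\mathbb{I}\{T_a\le q_m\}$ and expand the covariance bilinearly to land on the three cases defining $f_k$; the observation that everything depends on $\Sigma$ only through the single entry $\Sigma_{ab}$ (hence the elementwise structure) is the same. Where you genuinely diverge is in establishing \eqref{eq:corr_func}. The paper computes the Hermite coefficients of the indicator $x\mapsto\mathbb{I}\{x\le q_i\}$ explicitly, feeds them into the Mehler-based covariance expansion of Lemma \ref{lem:mehler}, recognizes the term-by-term derivative of the resulting power series as Mehler's expansion of the bivariate density $p_\rho(q_i,q_j)$, and integrates from $0$ to $\rho$. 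You instead write $r_{i,j}(\rho)=\Phi_2(q_i,q_j;\rho)-\Phi(q_i)\Phi(q_j)$ and invoke Plackett's identity $\partial_\rho\Phi_2=\phi_2$, proved via the diffusion relation $\partial_\rho\phi_2=\partial^2_{xy}\phi_2$ and the fundamental theorem of calculus. Both routes reduce to the same key fact $r_{i,j}'(\rho)=p_\rho(q_i,q_j)$ with $r_{i,j}(0)=0$, and your handling of the interchange of differentiation and integration and of the degenerate endpoints $|\rho|\to 1$ is adequate (the endpoint case matters for the diagonal entries, where the integral still converges). The trade-off: your argument is more elementary and self-contained, requiring no Hermite machinery; the paper's choice of Mehler's formula is deliberate because the identical expansion is reused verbatim for the continuous-treatment representation (Proposition \ref{prop:f_formula}) and for the derivative computations $f_k'$ that drive the PGD-Gauss gradient, so the Hermite route buys uniformity across the paper at the cost of a slightly heavier computation here. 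You correctly flag the Mehler route as an alternative, so nothing is missing.
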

Proposition \ref{prop:f_formula_general} provides a concrete procedure to compute the covariance matrix $\Cov_k(D)$. Importantly, it facilitates design optimization under Gaussianization, since one can formulate the covariate balance measures as objective functions on $\Sigma$: 
\begin{equation}\label{eq:design_opt}
    \sum_{k=1}^K w_k^2 \|X^\top \Cov_k(D) X\|_{\norm} = \sum_{k=1}^K w_k^2 \|X^\top f_k(\Sigma) X\|_{\norm}\;.
\end{equation}
In summary, we propose general covariate balance measures for uniform designs and derive their explicit Gaussianized representations. This Gaussianization enables feasible design optimization algorithms over the space of Gaussian covariance matrices, which will be the focus of Section \ref{sec:opt}.

Proposition \ref{prop:f_formula_general} warrants more technical clarifications. First, its main benefit comes from \eqref{eq:corr_func}, which provides analytical expressions of $\Cov_k(D)$. Alternatively, one may evaluate each covariance in \eqref{eq:corr_func} by Monte Carlo, but such simulation-based methods can be computationally challenging for large-scale randomized experiments.
Second, we illustrate below the shape of $f_k$ through the three-treatment example.
\begin{remark}[Evaluation of $f_k$ in the three-treatment example]\label{rmk:f}
Given $K = 3$, we evaluate $f(\rho) = \sum_{k=1}^3 f_k(\rho)$, which maps $\Sigma$ to $\sum_{k=1}^3 \Cov_k(D)$. This function represents the design optimization objective in Section \ref{sec:example}, since for $w_k = 1/3$, the covariate balance measure in the nuclear norm reduces to
\begin{equation*}
    \sum_{k=1}^3 w_k^2 \|X^\top \Cov_k(D) X\|_{\nuc} \propto \bigl\| X^\top \sum_{k=1}^3\Cov_k(D) X \bigr\|_{\nuc} = \left\| X^\top f(\Sigma) X \right\|_{\nuc}\;.
\end{equation*}
From the visualization of $f(\cdot)$ in Figure \ref{fig:f_function}, we observe that negative (positive resp.) correlations in $\Sigma$ induce negative (positive resp.) correlations in $\sum_k \Cov_k(D)$, with $f(0) = 0$ being a fixed point. More interestingly, $f(-1)$ and $f(0)$ induce similar correlations that are close to zero, implying that perfect negative correlation and zero correlation in $T$ lead to similar MSE performance. Lastly, we highlight that the derivative of $f$ goes to infinity at the endpoints $\pm 1$. This singular behavior of $f^\prime$ will guide us in developing optimization algorithms in Section \ref{sec:opt}.
\begin{figure}[t!]
    \vspace{-3mm}
    \centering
    \includegraphics[width=.9\linewidth]{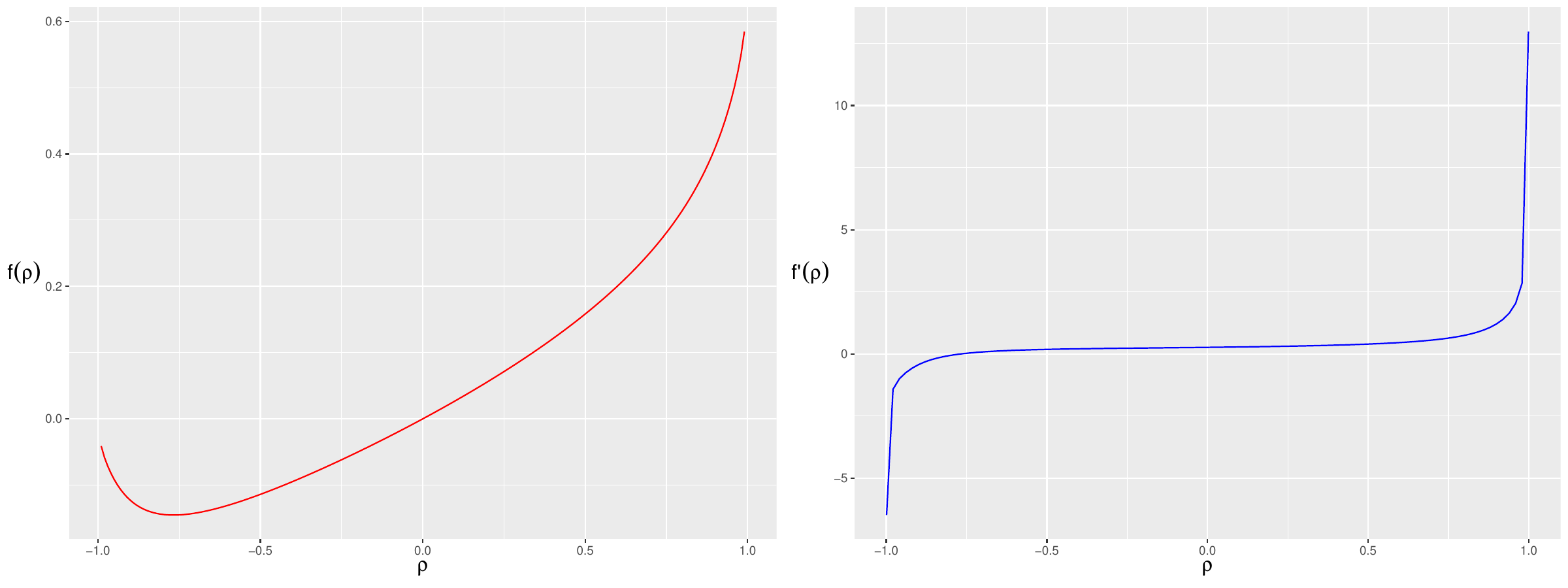}
    \caption{Function $f(\rho)$ and its derivative $f^\prime(\rho)$ on $(-1, 1)$.}
    \label{fig:f_function}
\end{figure}
\end{remark}

\subsection{Mehler's Formula and Proof Sketch of Proposition~\ref{prop:f_formula_general}}
We prove Proposition \ref{prop:f_formula_general} by leveraging a representation \citep{liang2022mehler} of bivariate normal distribution based on Mehler’s formula and Hermite polynomials. To begin with, we define Hermite polynomials in the probabilists' convention. 
\begin{definition}\label{def:hermite}
For non-negative integers $m\ge 0$, the $m$-th order Hermite polynomial is defined by 
\begin{equation*}
    \hermite_m(x) = \frac{(-1)^n}{\phi(x)} \frac{\dd^m}{\dd x^m}\phi(x)\;.
\end{equation*}
Here $\phi$ is the standard normal density function. Define the normalized Hermite polynomials as
\begin{equation*}
	h_m(x):=\frac{1}{\sqrt{m !}} \hermite_m(x)\;.
\end{equation*}
\end{definition}
Let $L^2_\phi$ be the class of square-integrable functions with respect to the standard normal distribution. Then, the set $\{h_m\}_{m=0}^\infty$ forms an orthonormal basis of $L^2_\phi$ as one can verify that
\begin{equation*}
	\E_{Z \sim \cN(0,1)}\left[h_m(Z) h_{m^{\prime}}(Z)\right]=\mathbb{I}\{m=m^{\prime}\}\;.
\end{equation*}
We can then define the Hermite coefficients as follows.
\begin{definition}
For any function $g\in L_\phi^2$, the $m$-th Hermite coefficient is defined by
\begin{equation*}
    \alpha_m[g]:=\underset{Z \sim \cN(0, 1)}{\mathbb{E}}\left[g(Z) h_m(Z)\right]\;.
\end{equation*}
\end{definition}

Let $p_\rho(x, y)$ be the density function of the bivariate normal distribution with variance one and correlation $\rho$. Mehler's formula \citep{mehler1866} connects $p_\rho(x, y)$ to Hermite polynomials as shown below:
\begin{equation*}
    p_\rho(x, y)=\sum_{m=0}^{\infty} \rho^m h_m(x) h_m(y) \phi(x) \phi(y)\;.
\end{equation*}
That is, the density $p_\rho(x, y)$ can be decomposed into a sequence of products of Hermite polynomials and standard normal densities. Based on this result, we establish a representation for the covariance of functions defined over bivariate normal distributions.
\begin{lemma}\label{lem:mehler}
For $g, h\in L_\phi^2$, if $(X, Y)\in\R^2$ follow a bivariate normal distribution with variance one and correlation $\rho$, we have
\begin{equation*}
    \underset{(X,Y)}{\Cov}[g(X), h(Y)] = \sum_{m=1}^{\infty} \alpha_m[g] \alpha_m[h] \rho^m\;.
\end{equation*}
\end{lemma}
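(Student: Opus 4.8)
The plan is to start from the definition $\Cov[g(X), h(Y)] = \E[g(X)h(Y)] - \E[g(X)]\,\E[h(Y)]$ and evaluate each piece with the Hermite machinery. Since $X$ and $Y$ are marginally standard normal and $h_0 \equiv 1$, the two means are just zeroth Hermite coefficients: $\E[g(X)] = \alpha_0[g]$ and $\E[h(Y)] = \alpha_0[h]$. All the content of the lemma is therefore in the cross term $\E[g(X)h(Y)] = \iint g(x)h(y)\,p_\rho(x,y)\,\dd x\,\dd y$, into which I would substitute Mehler's formula for $p_\rho$.

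The cleanest way to organize the resulting computation, and to sidestep the delicate interchange of summation and integration, is to work in $L^2(\phi\otimes\phi)$, the space of square-integrable functions on $\R^2$ under the product of the two standard normal marginals. There the tensor products $\{h_m\otimes h_{m'}\}_{m,m'\ge 0}$ form an orthonormal basis. Writing $G(x,y) = g(x)h(y)$ and letting $M_\rho(x,y) = \sum_{m\ge 0}\rho^m h_m(x)h_m(y)$ denote the Mehler kernel (so that $p_\rho = M_\rho\cdot(\phi\otimes\phi)$), the cross term becomes the inner product $\langle G, M_\rho\rangle_{L^2(\phi\otimes\phi)}$. The expansion of $G$ has coefficients $\alpha_m[g]\,\alpha_{m'}[h]$, while $M_\rho$ is supported on the diagonal with coefficients $\rho^m$, so Parseval's identity immediately yields $\langle G, M_\rho\rangle = \sum_{m\ge 0}\alpha_m[g]\,\alpha_m[h]\,\rho^m$. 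Subtracting the $m=0$ term, which is exactly $\alpha_0[g]\alpha_0[h] = \E[g(X)]\E[h(Y)]$, leaves the claimed series $\sum_{m\ge 1}\alpha_m[g]\alpha_m[h]\rho^m$.

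The one point that requires genuine care, and which I expect to be the main obstacle, is the convergence that Parseval silently packages. For this I would verify membership in $L^2(\phi\otimes\phi)$: the factor $G$ satisfies $\|G\|^2 = \|g\|^2_{L^2_\phi}\|h\|^2_{L^2_\phi} < \infty$, and the Mehler kernel satisfies $\|M_\rho\|^2 = \sum_{m\ge 0}\rho^{2m} = (1-\rho^2)^{-1}<\infty$ whenever $|\rho|<1$, which legitimizes applying Parseval on the product space and hence the term-by-term integration implicit in substituting Mehler's formula. For the degenerate endpoints $\rho = \pm 1$ the kernel is no longer in $L^2$, so I would treat them separately: the right-hand side still converges absolutely by Cauchy--Schwarz (since $\sum_m \alpha_m[g]^2$ and $\sum_m \alpha_m[h]^2$ are finite), and one can either pass to the limit $\rho\to\pm1$ by continuity or argue directly. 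At $\rho=1$ one has $X=Y$ almost surely, and Parseval in $L^2_\phi$ gives $\E[g(X)h(X)] = \sum_{m\ge 0}\alpha_m[g]\alpha_m[h]$; at $\rho=-1$ the reflection identity $\alpha_m[h(-\cdot)] = (-1)^m\alpha_m[h]$, which follows from $h_m(-x)=(-1)^m h_m(x)$, reproduces the alternating series $\sum_{m\ge1}(-1)^m\alpha_m[g]\alpha_m[h]$, matching $\rho^m$ at $\rho=-1$.
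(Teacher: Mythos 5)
Your proposal is correct and follows essentially the same route as the paper's proof: substitute Mehler's formula into $\E[g(X)h(Y)]$, use orthonormality to collapse the double integral to $\sum_m \alpha_m[g]\alpha_m[h]\rho^m$, and subtract the product of means, which are the zeroth Hermite coefficients. Your additional care with the Parseval justification for interchanging sum and integral, and the separate treatment of $\rho=\pm1$, goes beyond what the paper writes out but does not change the argument.
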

Based on Mehler's formula and Lemma \ref{lem:mehler}, we show a sketch proof for Proposition \ref{prop:f_formula_general}. A complete proof can be found in Section \ref{sec:proof_mehler}.
\begin{proof}[Proof Sketch]
Here we focus on the proof of Equation \eqref{eq:corr_func}, which is the key step in proving the result. Let $g(x) = \mathbb{I}\{x \le q_i\}$ and $h(x) = \mathbb{I}\{x\le q_j\}$. We leverage the derivative representation of Hermite polynomials (Definition \ref{def:hermite}) to obtain
\begin{align*}
    \alpha_m[g] &= -\frac{1}{\sqrt{m!}} \phi(q_i)\hermite_{m-1}(q_i)\;,\quad \alpha_m[h] = -\frac{1}{\sqrt{m!}} \phi(q_j)\hermite_{m-1}(q_j)\;.
\end{align*}
Based on Lemma \ref{lem:mehler}, this implies 
\begin{align*}
    r_{ij}(\rho) = \sum_{m=1}^\infty \frac{1}{m!} \hermite_{m-1}(q_i) \hermite_{m-1}(q_i) \phi(q_i) \phi(q_j) \rho^m\;.
\end{align*}
Notice that $r_{ij}(0)= 0$ and 
\begin{align*}
    r_{ij}^\prime(\rho) &= \sum_{m=1}^\infty \frac{1}{(m-1)!} \hermite_{m-1}(q_i) \hermite_{m-1}(q_i) \phi(q_i) \phi(q_j) \rho^{m-1} = p_\rho(q_i, q_j)\;.
\end{align*}
We obtain
\begin{equation*}
    r_{ij}(\rho) = \int_0^{\rho} p_r(q_i, q_j) \dd r =  \int_0^{\rho} \frac{1}{2\pi \sqrt{1-r^2}} \exp( -\frac{q_i^2 + q_j^2 -2r q_i q_j}{2(1-r^2)} ) \dd r\;.
\end{equation*}
\end{proof}
In summary, Proposition \ref{prop:f_formula_general} can be proved by applying Mehler's formula to the covariance expression in \eqref{eq:corr_func}. This trick will be used again in design optimization for the continuous setting (Section \ref{sec:continuous}). Notably, this technical tool is designed for bivariate normal distributions, which further motivates the act of Gaussianization of treatments.

\section{Gaussianized Design Optimization}\label{sec:opt}
In this section, we will focus on solving the following optimization problems: 
\begin{equation}\label{eq:opt_general}
    \min_{\Sigma \in \cE} \|X^\top f(\Sigma) X\|_{\mathrm{norm}}\eqqcolon l_{\mathrm{norm}}(\Sigma)\;, \mathrm{norm} \in \{\mathrm{nuc}, \mathrm{op}\}\;,
\end{equation}
where $f$ is a given elementwise function defined on $[-1, 1]$. Based on the linearity of the nuclear norm, the objective in \eqref{eq:opt_general} in $\|\cdot\|_{\nuc}$ is equivalent to \eqref{eq:design_opt} by setting $f(\rho) = \sum_{k} w_k^2 f_k(\rho)$. Under the operator norm, the design optimization problem \eqref{eq:design_opt} is a weighted sum of objectives in the form of \eqref{eq:opt_general}, and one can slightly modify the algorithm below to solve \eqref{eq:design_opt}. Moreover, the general optimization problem \eqref{eq:opt_general} encompasses other covariate balance objectives in Section \ref{sec:continuous}.

Formally, we propose Algorithm \ref{alg:pgd} to solve \eqref{eq:opt_general} above. This algorithm applies projected gradient descent (PGD-Gauss) on a factorized representation of $\Sigma$, similar to the Burer-Monteiro approach in semidefinite programming \citep{burer2003nonlinear}.
\begin{algorithm}
	\DontPrintSemicolon
	\KwData{$X\in\R^{n\times d}$, an evaluation function $f$, and an initial design $\Sigma^1$. Number of iterations $T$.}
	\KwResult{Optimized covariance matrix $\Sigma^*$.}
	\Begin{
	Parametrize $\Sigma^1 = V^1 (V^1)^\top$, where $V^1\in\R^{n\times k}$, $\|v_i^1\| = 1$, and $k$ equals to the rank of $\Sigma^1$. Here $v_i^1$ is the $i$-th row of $V^1$. \;
	\For{$t = 1, \dots, T$}{
        Compute $\nabla l_{\mathrm{norm}}(\Sigma^t)$. \;
        $V^{t+1} = \big[ I_n - \eta_t \nabla l_{\mathrm{norm}}(\Sigma^t) \big]  V^{t}$ for a proper step size $\eta_t$.\;
        $v_i^{t+1}\gets v_i^{t+1}/\|v_i^{t+1}\|$. \;
        $\Sigma^{t+1} \gets V^{t+1} (V^{t+1})^\top$.\;
	}
    $\Sigma^* \gets \Sigma^T$.\;
	}
	\caption{Projected Gradient Descent for Gaussianized Design Optimization (PGD-Gauss)\label{alg:pgd}}
\end{algorithm}

The function $f$ in design optimization may have an infinite derivative at $\pm 1$ (Remark \ref{rmk:f}). Conceptually, this type of $f$ will set $\pm 1$ to be a barrier. Therefore, in Algorithm \ref{alg:pgd}, we fix the diagonal values of $\Sigma^t$ and only update on the off-diagonal entries. That is, we consider
\begin{align*}
   \nabla l_{\mathrm{nuc}}(\Sigma^t) &= (XX^\top - \mathrm{diag}(XX^\top)) \circ f'(V^t{V^t}^\top)\;,\\
   \nabla l_{\mathrm{op}}(\Sigma^t) &= (Xu_1 u_1^\top X^\top - \mathrm{diag}(Xu_1 u_1^\top X^\top)) \circ f'(V^t{V^t}^\top)\;,
\end{align*}
where $\circ$ is the Hadamard product, $u_1 \in \mathbb{R}^d$ is the leading eigenvector of $X^\top f(\Sigma^t) X$. For diagonal elements in the gradient, we adopt the convention $0 \times f^\prime(\pm 1) = 0$. Notably, $f^\prime$ can be obtained by directly differentiating the analytic functions $f_k$ defined in Proposition \ref{prop:f_formula_general}. That is, Proposition \ref{prop:f_formula_general} enables the direct computation of the gradient in Gaussianized design optimization.

Since the objective function is non-convex in $\Sigma$ in general, the PGD-Gauss only obtains a local optimizer near the initial covariance matrix $\Sigma_1$. As explained in Section \ref{sec:example}, Gaussianized design optimization is not tailored to identify the global solution that perfectly balances the covariates, but rather to serve as a tool for achieving local improvements based on a given input design.

By default, we initialize the design optimization by setting $\Sigma^1 = I_n$, which results in i.i.d. treatments. We view this as the baseline Gaussian design, as it does not take any covariate information, and i.i.d. designs have a robust performance against unknown outcome-generating models \citep{Harshaw2019}. Therefore, the number of steps we run PGD-Gauss is an explicit tradeoff between robustness and covariate balance. In simulations of Section \ref{sec:simu}, the i.i.d. initialization leads to satisfactory performance compared to state-of-the-art designs.

\section{Gaussian Design with Continuous Treatments}\label{sec:continuous}
In this section, we extend Gaussianization to settings with continuous treatments. Specifically, we introduce a new experimental design, called Gaussian design, to give continuous treatments based on multivariate Gaussian distribution. 
\begin{definition}[Gaussian Design]
A Gaussian design allocates treatment $T_i$ to unit $i$, where $T = (T_1, \dots, T_n)\sim\cN(0, \Sigma)$ for some $\Sigma\in\cE$.
\end{definition}
When the actual treatment is restricted to a bounded interval $[a,b]$, one may compute a rescaled treatment assignment $(a+b)/2 + T_i(b-a)/(2z_{0.999})$, where $z_\alpha$ denotes the $\alpha$-quantile of the standard normal distribution. This ensures that the rescaled treatment falls in $[a,b]$ with high probability ($>0.998$). Gaussian designs directly allocate continuous treatments as above, and it is thus mechanically different from the Gaussianization perspective, where we focus on discrete treatments but model them using latent Gaussian variables. Compared to Gaussianization, Gaussian designs capture average structural properties of potential outcome functions as we discuss below.

\subsection{Causal Estimands}\label{sec:cont_estimand}
We denote $Y_i(t)$ to be the response function for unit $i$ and $t\in\R$, which generalizes the potential outcomes to continuous treatments. With an abuse of notation, we use $Y_i = Y_i(T_i)$ to denote the observed outcome for unit $i$. Given continuous treatments and response functions, we work with a class of causal effects of the form
\begin{equation}\label{eq:estimand_cont}
    \tau_w^c = \frac{1}{n} \sum_{i=1}^n \int_{\R} Y_i(t) w(t) \phi(t) \dd t\;,
\end{equation}
where $w(\cdot)$ is a pre-specified weight function on different treatment values. We use the superscript $c$ in $\tau_w^c$ to distinguish it from $\tau_w$ under the discrete setting.

Similar to the discrete setup, we focus on Horvitz-Thompson-type estimators
\begin{equation*}
    \widehat{\tau}^c_w = \frac{1}{n} \sum_{i=1}^n Y_i(T_i) w(T_i) = \frac{1}{n} \sum Y_i W_i\;, \quad W_i\coloneqq w(T_i)\;.
\end{equation*}
Clearly, $\widehat{\tau}^c_w$ is an unbiased estimator of $\tau_w^c$ under Gaussian design. In the following, we provide several leading examples of the weight function $w(\cdot)$ in \eqref{eq:estimand_cont} to get meaningful causal estimands.

\begin{example}[Average Treatment Effects on a Given Interval]
Suppose we want to learn about the average treatment effect on a treatment interval $[r, l]$~\citep{fryges2008exports}. We may set $w(t) = \frac{\mathbb{I}\{t\in[r, l]\}}{(l-r)\phi(t)}$, which leads to
\begin{equation*}
\tau_w^c = \frac{1}{n}\sum_{i=1}^n \frac{1}{l-r}\int_r^l Y_i(x) \dd x\;, \quad \widehat{\tau}_w^c = \frac{1}{n} \sum_{i=1}^n Y_i \frac{\mathbb{I}\{T_i\in[r, l]\}}{(l-r)\phi(T_i)}\;.
\end{equation*}
\end{example}

\begin{example}[First derivative]\label{ex:first_derivative}
Suppose $Y_i(t)$ is differentiable with $\E Y_i^2(T_i) < 0$ and  $\E |Y_i^\prime(T_i)| < 0$. To learn the first derivative of response functions, we consider $w(t) = t$ and obtain
\begin{equation*}
    \tau_w^c =  \frac{1}{n} \sum_{i=1}^n \int_{\R} Y_i(t) t \phi(t) \dd t \stackrel{\text{(i)}}{=} \frac{1}{n} \sum_{i=1}^n \int_{\R} Y_i^{'}(t) \phi(t) \dd t\;, \quad \widehat{\tau}_w^c = \frac{1}{n} \sum_{i=1}^n Y_i T_i\;,
\end{equation*}
where (i) follows from Stein's Lemma. Notably, if we replace the base Gaussian density $\phi(t)$ with $\psi(t) = \frac{1}{2}\delta_{-1} + \frac{1}{2} \delta_1$, the causal estimand reduces to $\tau_w^c = \frac{1}{2n} \sum_{i=1}^n (Y_i(1) - Y_i(-1))$, which resembles the average treatment effect in binary treatment setups.
\end{example}

\begin{example}[Second derivative]\label{ex:second_derivative}
Suppose $Y_i(t)$ is twice differentiable with $\E Y_i^2(T_i) < 0$ and  $\E |Y_i^{\prime\prime}(T_i)| < 0$.
To learn the second derivative, we consider $w(t) = t^2 - 1$ and obtain
\begin{equation*}
    \tau_w^c =  \frac{1}{n} \sum_{i=1}^n \int_{\R} Y_i(t) (t^2-1) \phi(t) \dd t \stackrel{\text{(i)}}{=} \frac{1}{n} \sum_{i=1}^n \int_{\R} Y_i^{''}(t) \phi(t) \dd t\;, \quad \widehat{\tau}_w^c = \frac{1}{n} \sum_{i=1}^n Y_i (T_i^2 - 1)\;,
\end{equation*}
where (i) follows from an extension of Stein's Lemma \citep{mamis2022stein}. $\widehat{\tau}^c_w$ serves as an unbiased estimator for the average second derivative of response functions.
\end{example}


\subsection{Variance Formula and Measures of Covariate Balance}
To get traction on estimating the variance of the estimators, we decompose $Y_i(t)$ as follows:
\begin{equation}\label{eq:model1}
    Y_i(t) = a_i Y_0(t) + b_i\;, \quad a_i = X_i^\top \beta_1\;, \quad b_i = X_i^\top \beta_2\;.
\end{equation}
In this decomposition, $a_i$ and $b_i$ control the scale and location of the $i$-th response function, and they are perfectly linear in covariates. $Y_0(t)$ is a baseline response function, which is assumed to be known by the researcher. This assumption is justified as researchers often have prior knowledge of the shape of response functions, such as sigmoid dose-response curves in clinical trials \citep{meddings1989analysis}, and exponential utility functions in economics \citep{arrow1971theory}.

Under \eqref{eq:model1}, we analyze the variance of $\widehat{\tau}_w^c$. For two random vectors $X, Y\in\R^{d}$, we use the notation $\Cov(X, Y) := \E[(X - \E X) (Y - \E Y)^\top]$ and $\Cov(X) := \Cov(X, X)$. Then, one can show that under Equation \eqref{eq:model1}, it holds that
\begin{align}
    \Var(\widehat{\tau}_w^c) &= \frac{1}{n^2} \left( \beta_1^\top X^\top  \mathrm{Cov}(Y_0 \circ W) X \beta_1 + \beta_2^\top X^\top \mathrm{Cov}(W) X \beta_2 + 2 \beta_1^\top X^\top \mathrm{Cov}(Y_0 \circ W, W) X \beta_2 \right)\nonumber\\
    &\le \frac{2}{n^2} \left( \beta_1^\top X^\top  \mathrm{Cov}(Y_0 \circ W) X \beta_1 + \beta_2^\top X^\top \mathrm{Cov}(W) X \beta_2 \right)\;.\label{eq:cont_var2}
\end{align}
With a slight abuse of notation, we define $Y_0 = (Y_0(T_1), \dots, Y_0(T_n))^\top$, $W = (W_1, \dots, W_n)^\top$, $\circ$ is the Hadamard (elementwise) product, and the second line follows from the AM-GM inequality. 
From Equation \eqref{eq:cont_var2}, the estimation performance is characterized by quadratic forms similar to the discrete setting (Lemma \ref{lem:discrete_var}). 
In addition, the variance in inequality~\eqref{eq:cont_var2} depends on the coefficients $\beta_1, \beta_2$, which are unknown in general. 

To make progress, we adopt a similar approach as in Section \ref{sec:discrete}. We first assume that $\beta_1, \beta_2$ are random signals with mean zero and identity covariance matrix, which leads to a measure of average-case MSE:
\begin{align*}
\underset{\beta_1, \beta_2}{\E} \Var(\widehat{\tau}_w^c) &\le \frac{2}{n^2}\tr(X^\top (\Cov(Y_0 \circ W) + \Cov(W)) X) \\
&\propto \|X^\top \Cov(Y_0 \circ W) X\|_{\mathrm{nuc}} + \|X^\top \Cov(W) X\|_{\mathrm{nuc}}\;.
\end{align*}
Alternatively, by assuming $\|\beta_1\| \le M$, $|\beta_2\|\le M$, we obtain an upper bound on the worst-case MSE: 
\begin{align*}
    \sup_{\|\beta_1\|\le M, \|\beta_2\|\le M} \Var(\widehat{\tau}_w^c) &\le \sup_{\|\beta_1\|\le M, \|\beta_2\|\le M} \frac{2}{n^2} \left( \beta_1^\top X^\top  \mathrm{Cov}(Y_0 \circ W) X \beta_1 + \beta_2^\top X^\top \mathrm{Cov}(W) X \beta_2 \right)\\
    &\propto\|X^\top \Cov(Y_0 \circ W) X\|_{\mathrm{op}} + \| X^\top \Cov(W)) X\|_{\mathrm{op}}\;.
\end{align*}
These analytical steps lead to the formal definition of covariate balance measures under Gaussian design in the continuous setting.
\begin{definition}
For Gaussian designs with a baseline response function $Y_0$ and 
a weight function $w$, define the average and worst-case covariate balance measures as
\begin{equation}\label{eq:cov_measure_cont}
    \|X^\top \mathrm{Cov}(Y_0 \circ W) X\|_{\mathrm{norm}} + \|X^\top \mathrm{Cov}(W) X\|_{\mathrm{norm}}\;, \mathrm{norm} \in \{\mathrm{nuc}, \mathrm{op}\}\;.
\end{equation}
\end{definition}

\subsection{Gaussianized Representation}
Using Mehler's formula and Hermite coefficients in Section \ref{sec:discrete}, we derive the following result, which is a direct application of Lemma \ref{lem:mehler}.
\begin{proposition}\label{prop:f_formula}
Suppose that $Y_0 w:t\mapsto Y_0(t)w(t)\in L^2_\phi$ and $w \in L^2_\phi$. Then we have
\begin{equation*}
\Cov(Y_0\circ W) = f_{Y_0,w}(\Sigma)\;,\quad \Cov(W) = f_w(\Sigma)\;.
\end{equation*}
Here, $f_{Y_0,w}$ and $f_{w}$ are elementwise functions defined by 
\begin{align*}
f_{Y_0, w}(\rho) = \sum_{m=1}^\infty \alpha_m[Y_0 w]^2 \rho^m\;,\quad 
f_w(\rho) = \sum_{m=1}^\infty \alpha_m[w]^2 \rho^m\;, \quad \rho\in[-1,1]\;,
\end{align*}
where $\alpha_m[g]$ is the $m$-th Hermite coefficient of the function $g$. 
\end{proposition}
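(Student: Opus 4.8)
The plan is to recognize that Proposition~\ref{prop:f_formula} is essentially a componentwise application of Lemma~\ref{lem:mehler}, so the proof should reduce to identifying the right pair of functions and the right correlation for each matrix entry. First I would fix indices $i,j$ and compute the $(i,j)$ entry of each covariance matrix. Since $T=(T_1,\dots,T_n)\sim\cN(0,\Sigma)$ has standardized marginals ($\Sigma_{ii}=1$ because $\Sigma\in\cE$), the pair $(T_i, T_j)$ follows a bivariate normal with unit variances and correlation $\rho=\Sigma_{ij}$. This is exactly the setup of Lemma~\ref{lem:mehler}, which is the key reduction: the joint law of any coordinate pair is governed by a single scalar correlation, and the full $n\times n$ problem decouples into $\binom{n}{2}$ bivariate problems plus the diagonal.

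Next I would apply Lemma~\ref{lem:mehler} twice with the appropriate choices of $g$ and $h$. For $\Cov(W)$, recall $W_i = w(T_i)$, so the $(i,j)$ entry is $\Cov(w(T_i), w(T_j))$; taking $g=h=w$ in Lemma~\ref{lem:mehler} yields $\sum_{m=1}^\infty \alpha_m[w]^2 \Sigma_{ij}^m = f_w(\Sigma_{ij})$, which is precisely the elementwise action claimed. For $\Cov(Y_0\circ W)$, the $(i,j)$ entry is $\Cov\big(Y_0(T_i)w(T_i),\, Y_0(T_j)w(T_j)\big)$, so I take $g=h=Y_0 w$ (the pointwise product function $t\mapsto Y_0(t)w(t)$), giving $\sum_{m=1}^\infty \alpha_m[Y_0 w]^2 \Sigma_{ij}^m = f_{Y_0,w}(\Sigma_{ij})$. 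Matching these entrywise expressions to $f_w$ and $f_{Y_0,w}$ as defined in the statement completes the identification for every $(i,j)$, including the diagonal case $\rho=1$ where the series returns the variance.

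The only genuine content beyond bookkeeping is verifying the hypotheses that license Lemma~\ref{lem:mehler}, namely square-integrability with respect to $\phi$. This is exactly what the assumptions $Y_0 w\in L^2_\phi$ and $w\in L^2_\phi$ in the statement supply: they guarantee the Hermite coefficients $\alpha_m[w]$ and $\alpha_m[Y_0 w]$ are well-defined and that the Mehler series converge so that Lemma~\ref{lem:mehler} applies verbatim. I would therefore state these integrability conditions at the outset, note that they are the precise requirement for the covariance representation to hold, and then invoke Lemma~\ref{lem:mehler} directly. I do not anticipate a serious obstacle here—the main subtlety is simply being careful that the $m=0$ term is correctly excluded (the covariance strips out the mean, so the sum starts at $m=1$, consistent with how Lemma~\ref{lem:mehler} is stated). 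The proof is short precisely because the heavy lifting was already done in establishing Mehler's formula and Lemma~\ref{lem:mehler} in Section~\ref{sec:discrete}.
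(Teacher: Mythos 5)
Your proposal is correct and matches the paper's intended argument exactly: the paper explicitly omits the proof of Proposition~\ref{prop:f_formula}, stating it is a direct consequence of Lemma~\ref{lem:mehler}, and your entrywise application of that lemma with $g=h=w$ and $g=h=Y_0w$ (using $\Sigma_{ii}=1$ so each pair $(T_i,T_j)$ is standard bivariate normal with correlation $\Sigma_{ij}$) is precisely that argument.
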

Proposition \ref{prop:f_formula} demonstrates that the covariance matrices in covariate balance measures can be explicitly written as functions of $\Sigma$. This result facilitates optimization over $\Sigma$, similar to the role of Proposition \ref{prop:f_formula_general} in the uniform design setup. Combining the results above, we formulate covariate balance measures
\begin{equation*}
    \|X^\top f_{Y_0, w}(\Sigma) X\|_{\mathrm{norm}} + \|X^\top f_{w}(\Sigma) X\|_{\mathrm{norm}}\;, \mathrm{norm} \in \{\mathrm{nuc}, \mathrm{op}\}\;.
\end{equation*}
Consequently, one may directly apply the algorithm proposed in Section \ref{sec:opt} to Gaussian design and optimize the covariate balance. We evaluate this design concretely in Section \ref{sec:real}.

\section{Asymptotics and Inference}\label{sec:asymp}
In this section, we study asymptotic properties and inference under the Gaussianization $T\sim\cN(0, \Sigma)$, where $\Sigma$ is a solution obtained from the PGD-Gauss algorithm in Section \ref{sec:opt}, within the design-based framework. In design-based inference \citep{imbens2015causal}, we view the potential outcomes as fixed and the only randomness comes from the treatment assignment, i.e., the Gaussian treatment $T$.
Here, we prove asymptotic normality under Gaussianization, and provide concrete procedures to compute confidence intervals.
The key takeaway is that Gaussianization under the PGD-Gauss solution results in smaller variance compared to i.i.d. Gaussianization, and thus improves estimation efficiency. Notably, our asymptotic theory allows high-dimensional covariates, where $d$ can grow with, or even be larger than $n$.

For presentation purposes, we focus on the uniform design setup in Section \ref{sec:discrete} where the treatments are modeled by $D_i = g(T_i)$. Inferential procedures for continuous treatments are discussed in Section \ref{sec:appendix_inference}.

\subsection{Asymptotic Normality}
Here, we focus on the PGD-Gauss under the nuclear norm objective $\|X^\top f_k(\Sigma) X\|_{\mathrm{nuc}}$. Recall that $f_k$ defined in Proposition \ref{prop:f_formula_general} is the covariance mapping with respect to treatment $k$, and thus this objective serves as a covariate balance measure for $k$-th average treatment effect. Similar normality results can be shown under the operator norm, but we will focus on the nuclear norm for simplicity.

We study the asymptotic properties of the average treatment effect for arm $k$: 
\begin{equation*}
\widehat{\tau}_k = \frac{K}{n} \sum_{i=1}^n Y_i \mathbb{I}\{D_i = k\}\;,\quad D_i = g(T_i)\;.
\end{equation*}
We focus on implementing one step of PGD-Gauss with step size $\eta$ using the default initialization $\Sigma^1 = I_n$, and denote the obtained solution by $\Sigma_{\eta}$. 
We impose the following assumption on the step size $\eta$ and covariates $X$.
\begin{assumption}\label{asmp:stepsize}
The covariates $X\in\R^{n\times d}$ satisfy $\|X_i\| = 1$, i.e., each row of $X$ has unit norm. The step size in PGD-Gauss satisfies $\eta\|XX^\top - I_n\|_{\op} = o(1)$.
\end{assumption}
Assumption \ref{asmp:stepsize} requires that $\Sigma_{\eta}$ is from a local perturbation of $I_n$ by controlling the step size, which is the key condition to establish asymptotic normality. To better understand the step size condition, we may consider $X_{i} \stackrel{iid}{\sim} \cN(0, \frac{1}{d}I_d)$, so that $\|X_i\| \approx 1$ in expectation. Then, the random matrix theory suggests that $\|XX^\top\|_{\op} = O(n/d)$ with high probability \citep{tropp2015introduction}. If, for intuition, we assume that $n>d$, the step size condition boils down to $\eta = o\left(\frac{d}{n}\right)$.

To characterize the asymptotic distribution under the one-step PGD-Gauss, we define a sequence of ancillary potential outcomes. Specifically, using the $f_k$ in Proposition \ref{prop:f_formula_general}, we define
\begin{equation*}
    \tilde{Y}(k) = f_k(I_n)^{-1/2} f_k(\Sigma_\eta)^{1/2} Y(k)\;.
\end{equation*}
\begin{theorem}\label{thm:normality} 
Suppose Assumption \ref{asmp:stepsize} holds. Consider Gaussianization $T\sim\cN(0, \Sigma_{\eta})$, where $\Sigma_{\eta}$ is the obtained solution from the one-step PGD-Gauss. If, as $n$ goes to infinity, 
\begin{enumerate}
    \item $\max_{i=1, \dots, n} \tilde{Y}_i^2(k) / n \to 0$,
    \item $n\Var(\widehat{\tau}_k) = \frac{K-1}{n} \sum_{i=1}^n \tilde{Y}_i^2(k)$ has a positive limit,
    \item $\|Y(k)\|^2 \le nM$ for some constant $M>0$,
\end{enumerate}
it holds that
$$
\sqrt{n}\left(\widehat{\tau}_k-\tau_k\right) \stackrel{d}{\to} \cN(0, \lim_{n\to\infty} n\Var(\widehat{\tau}_k)) = \cN\left(0, \lim_{n\to\infty}\frac{K-1}{n} \|\tilde{Y}(k)\|^2\right)\;.
$$
\end{theorem}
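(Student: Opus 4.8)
The plan is to split the argument into an exact second-moment calculation, which pins down the limiting variance and reconciles it with the $\tilde Y(k)$ formula, and a central limit theorem for a weighted sum of coordinatewise functions of a nearly independent Gaussian vector. First I would record the moments. Writing $Z_i = \mathbb{I}\{D_i = k\} = \mathbb{I}\{T_i \in I_k\}$ for the interval $I_k = (\Phi^{-1}(\tfrac{k-1}{K}), \Phi^{-1}(\tfrac{k}{K})]$ and using $Y_i\mathbb{I}\{D_i=k\} = Y_i(k)Z_i$, the estimator is exactly centered: $\widehat{\tau}_k - \tau_k = \tfrac{K}{n}\sum_{i=1}^n Y_i(k)(Z_i - \tfrac1K)$. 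By Lemma \ref{lem:discrete_var} and Proposition \ref{prop:f_formula_general}, $n\Var(\widehat{\tau}_k) = \tfrac{K^2}{n} Y(k)^\top f_k(\Sigma_\eta) Y(k)$. The one elementary fact needed here is $f_k(I_n) = \tfrac{K-1}{K^2} I_n$: its off-diagonal entries vanish since $r_{ij}(0)=0$, hence $f_k(0)=0$, while its diagonal equals $f_k(1) = \Var(Z_i) = \tfrac1K(1-\tfrac1K)$. Substituting $f_k(I_n)^{-1} = \tfrac{K^2}{K-1} I_n$ into the definition of $\tilde Y(k)$ gives $\tfrac{K-1}{n}\|\tilde Y(k)\|^2 = \tfrac{K^2}{n} Y(k)^\top f_k(\Sigma_\eta) Y(k) = n\Var(\widehat{\tau}_k)$, matching the claimed variance, and condition 2 then supplies a positive limit $\sigma^2$.

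Next I would quantify how close the design is to i.i.d. Since $\Sigma^1 = I_n$, the one-step update yields $\Sigma_\eta = I_n + \Delta$ with zero diagonal and, to leading order, $\Delta_{ij} = -2\eta\, f_k'(0)\langle X_i, X_j\rangle$, so that $\|\Delta\|_{\op} = O(\eta\|XX^\top - I_n\|_{\op}) = o(1)$ under Assumption \ref{asmp:stepsize}. Because $f_k$ is smooth near $0$ with $f_k(0)=0$, the centered-indicator covariance obeys $f_k(\Sigma_\eta) - f_k(I_n) = f_k'(0)\Delta + o(\|\Delta\|_{\op})$ in operator norm, so its off-diagonal part is $o(1)$ in operator norm. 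This is the precise sense in which $\{Z_i\}$ is asymptotically independent, and it also shows, via condition 3, that replacing $f_k(\Sigma_\eta)$ by $f_k(I_n)$ perturbs $n\Var(\widehat\tau_k)$ by only $\tfrac{K^2}{n}|Y(k)^\top(f_k(\Sigma_\eta)-f_k(I_n))Y(k)| = O(\|\Delta\|_{\op}) = o(1)$.

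For the limit law I would work with the standardized statistic $S_n := \sqrt n(\widehat\tau_k - \tau_k) = \tfrac{K}{\sqrt n}Y(k)^\top(Z - \tfrac1K)$, whose centered-indicator vector has covariance $f_k(\Sigma_\eta)$; whitening by $f_k(\Sigma_\eta)^{1/2}$ turns the effective weights into $f_k(\Sigma_\eta)^{1/2}Y(k) \propto \tilde Y(k)$, which is exactly why the hypotheses are phrased through $\tilde Y(k)$: condition 1 is the coordinatewise negligibility $\max_i\tilde Y_i(k)^2/n\to0$ and condition 2 is the convergence of $\tfrac{K-1}{n}\|\tilde Y(k)\|^2$ to $\sigma^2>0$. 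Given these, a Lindeberg CLT would deliver asymptotic normality \emph{if} the whitened summands were independent. The crux, and the main obstacle, is that they are not: $Z_i - \tfrac1K = \sum_{m\ge1}\alpha_m h_m(T_i)$ is a genuinely nonlinear functional of the correlated Gaussian vector $T$, with all Hermite levels contributing to the variance through $f_k(\rho)=\sum_m\alpha_m^2\rho^m$, so a first-chaos linearization is insufficient; moreover the indicator is non-smooth, so a naive Gaussian interpolation along $(1-s)I_n + s\Sigma_\eta$ generates ill-defined derivative-of-indicator terms that must be handled distributionally. I expect the cleanest rigorous route is a Malliavin--Stein / second-order Poincar\'e bound for the Gaussian functional $S_n$, whose distance to $\cN(0,\sigma^2)$ I would control by a quadratic functional of the off-diagonal perturbation $\Delta$; the essential point is to organize that bound so that $\|\Delta\|_{\op}=o(1)$ (Assumption \ref{asmp:stepsize}) together with $\|Y(k)\|^2=O(n)$ (condition 3) suffices, avoiding any entrywise or $\ell_1$-type control of $\Delta$, and then to conclude $S_n \stackrel{d}{\to} \cN(0,\sigma^2)$.
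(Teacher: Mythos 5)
Your setup is sound: the variance identity $n\Var(\widehat{\tau}_k)=\tfrac{K^2}{n}Y(k)^\top f_k(\Sigma_\eta)Y(k)=\tfrac{K-1}{n}\|\tilde Y(k)\|^2$, the observation that $f_k(I_n)=\tfrac{K-1}{K^2}I_n$, and the bound $\|\Delta\|_{\op}=O(\eta\|XX^\top-I_n\|_{\op})=o(1)$ all match what the paper establishes (its Lemma \ref{lem:N_op} additionally tracks the normalization matrix $D$ from the row-rescaling step, but this only contributes higher-order terms). You also correctly identify the crux: the summands $Y_i(k)(\mathbb{I}\{g(T_i)=k\}-1/K)$ are dependent nonlinear functionals of a correlated Gaussian vector, so whitening the covariance does not produce independent terms and Lindeberg cannot be applied directly. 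The problem is that at exactly this point your argument stops being a proof and becomes a plan: you state that you ``expect'' a Malliavin--Stein / second-order Poincar\'e bound to work and that you would ``organize'' it so that $\|\Delta\|_{\op}=o(1)$ suffices, but you neither state the bound nor verify its hypotheses. This is the entire content of the theorem, so the gap is not cosmetic.

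Moreover, the proposed tool faces a concrete obstruction for this functional. The second-order Poincar\'e inequality requires the functional to be twice Malliavin differentiable; for $g(x)=\mathbb{I}\{x\le q\}$ the Hermite coefficients are $\alpha_m=-\tfrac{1}{\sqrt{m!}}\phi(q)\hermite_{m-1}(q)$, so $\sum_m m\,\alpha_m^2=\phi(q)^2\sum_m \hermite_{m-1}(q)^2/(m-1)!=\infty$ (this is the divergence of $p_\rho(q,q)$ as $\rho\to1$), i.e.\ the indicator is not even in $\mathbb{D}^{1,2}$. Making that route rigorous would require a chaos truncation with uniform-in-$n$ error control, which is substantial unexecuted work. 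The paper avoids the issue entirely with a H\'ajek coupling: it sets $T^{opt}=\Sigma_\eta^{1/2}T^{iid}$ with $T^{iid}\sim\cN(0,I_n)$, defines an ancillary estimator $\widehat{\tau}^{iid}$ built from the transformed outcomes $\tilde Y(k)$ so that its mean and variance match $\widehat{\tau}^{opt}$ exactly, shows $n\E(\widehat{\tau}^{iid}-\widehat{\tau}^{opt})^2\to0$ by reducing the cross-covariance (via Mehler's formula, since $\Cov(T^{iid},T^{opt})=\Sigma_\eta^{1/2}$) to the operator norm of $f_k(\Sigma_\eta)^{1/2}-f_k(\Sigma_\eta^{1/2})$, which Taylor expansions of the elementwise map and the matrix square root show is $o(1)$, and then applies the classical Lindeberg CLT to the genuinely independent summands of $\widehat{\tau}^{iid}$ using your conditions 1--3. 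If you want to complete your argument, you would need either to carry out this coupling or to supply an actual quantitative CLT for non-smooth functionals of correlated Gaussians under operator-norm perturbations; as written, neither is done.
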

Theorem \ref{thm:normality} establishes the asymptotic normality of $\widehat{\tau}_k$ under Gaussianization. The proof of Theorem \ref{thm:normality} relies on the asymptotic equivalence between $\widehat{\tau}_k$ under $\Sigma_{\eta}$ and an ancillary estimator under i.i.d. Gaussianization. Due to the asymptotic equivalence, it suffices to prove the asymptotic normality for the ancillary estimator using Lindeberg's central limit theorem. A full proof and a generalization to multi-step PGD-Gauss can be found in Section \ref{sec:proof_normality}.

Notably, the variance term in Theorem \ref{thm:normality} indicates the benefit of running PGD-Gauss for covariate balance. To see this, we may define
\begin{equation*}
    V(\Sigma) \coloneqq \frac{K-1}{n}\|\tilde{Y}(k)\|^2\;.
\end{equation*}
From the proof of Theorem \ref{thm:normality}, $V(\Sigma)$ not only captures the variance limit, but also exactly matches the finite-sample variance of $\sqrt{n}(\widehat{\tau}_k - \tau_k)$, i.e., the MSE of $\widehat{\tau}_k$ after rescaling. The following proposition shows that $V(\Sigma_{\eta})$ is strictly smaller than $V(I_n)$ on average. Denote by $\|A\|_F$ the Frobenius norm of a matrix $A$. Write $a_n = \Omega(b_n)$ if there exists a constant $c$ such that $a_n \ge c b_n$ for $n$ large enough.
\begin{proposition}\label{prop:asymp_var}
Suppose $Y(k) = X\beta_k^\top$, where $\beta_k$ is a random signal with zero mean and identity covariance matrix. In addition, suppose Assumption \ref{asmp:stepsize} holds and 
$$
\eta = o(1)\;,\quad n\eta^3 \|XX^\top - I_n\|_{\op}^4 = o(\|XX^\top - I_n\|_F^2)\;.
$$
If $f_k^\prime(0) \neq 0$, we have
\begin{equation*}
    \E_{\beta_k} V(I_n) - \E_{\beta_k} V(\Sigma_{\eta}) = \Omega\left(\frac{\eta}{n}\|XX^\top - I_n\|_{F}^2\right) > 0\;.
\end{equation*}
\end{proposition}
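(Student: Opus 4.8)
The plan is to turn both functionals into traces against $XX^\top$, expand $f_k(\Sigma_\eta)$ entrywise around $0$, isolate a strictly positive first-order term of order $\tfrac{\eta}{n}\|XX^\top - I_n\|_F^2$, and absorb everything else into remainders that the stated rates force to be lower order. First I would simplify $V$. Since $f_k$ acts elementwise and $I_n$ has unit diagonal and zero off-diagonals, Proposition \ref{prop:f_formula_general} gives $f_k(I_n) = f_k(1) I_n$ with $f_k(1) = (K-1)/K^2$, so $\tilde Y(k) = f_k(I_n)^{-1/2} f_k(\Sigma)^{1/2} Y(k)$ yields $V(\Sigma) = \tfrac{K^2}{n} Y(k)^\top f_k(\Sigma) Y(k)$. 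With $Y(k) = X\beta_k$ and $\E\beta_k\beta_k^\top = I_d$, taking $\E_{\beta_k}$ converts the quadratic form into a trace, giving $\E_{\beta_k} V(\Sigma) = \tfrac{K^2}{n}\tr(f_k(\Sigma) XX^\top)$ and
\[ \E_{\beta_k} V(I_n) - \E_{\beta_k} V(\Sigma_\eta) = \frac{K^2}{n}\tr\bigl((f_k(I_n) - f_k(\Sigma_\eta)) XX^\top\bigr). \]
Because $\Sigma_\eta$ has unit diagonal (the row normalization in Algorithm \ref{alg:pgd}) and $XX^\top$ has unit diagonal (Assumption \ref{asmp:stepsize}), the diagonal contributions cancel and this reduces to $-\tfrac{K^2}{n}\sum_{i\ne j} f_k((\Sigma_\eta)_{ij}) A_{ij}$, where $A := XX^\top - I_n$.

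Next I would compute $\Sigma_\eta$ explicitly. Starting from $\Sigma^1 = I_n$ (so $V^1 = I_n$), the nuclear-norm gradient reduces to $f_k'(0)\,A$: indeed $f_k'(I_n)$ has off-diagonal entries $f_k'(0)$, and the mask $XX^\top - \diag(XX^\top)$ kills the diagonal (so the singularity of $f_k'$ at $\pm 1$ never enters). The pre-normalized factor is $I_n - \eta f_k'(0) A$; after normalizing rows and forming $\Sigma_\eta = V^2 (V^2)^\top$, a direct expansion gives, for $i\ne j$,
\[ (\Sigma_\eta)_{ij} = -2\eta f_k'(0) A_{ij} + \eta^2 f_k'(0)^2 (A^2)_{ij} + \eta^3 f_k'(0)^3(\|A_{i,\cdot}\|^2 + \|A_{j,\cdot}\|^2) A_{ij} + O(\eta^4), \]
where the cubic term comes from the normalization factors $(1+\eta^2 f_k'(0)^2\|A_{i,\cdot}\|^2)^{-1/2}$. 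Taylor expanding $f_k$ at $0$ as $f_k(x) = f_k'(0)x + O(x^2)$ (legitimate since $f_k(0)=0$ and, by Assumption \ref{asmp:stepsize}, $\max_{ij}|(\Sigma_\eta)_{ij}| = O(\eta\|A\|_{\op}) = o(1)$, keeping every $(\Sigma_\eta)_{ij}$ in a fixed neighborhood of $0$ where $f_k''$ is bounded), the leading contribution is $-\sum_{i\ne j} f_k'(0)\cdot(-2\eta f_k'(0) A_{ij}) A_{ij} = 2\eta f_k'(0)^2 \|A\|_F^2$. Hence the difference equals $\tfrac{2 K^2 f_k'(0)^2 \eta}{n}\|A\|_F^2$ plus remainders, and since $f_k'(0)\ne 0$ this leading term is $\Omega\bigl(\tfrac{\eta}{n}\|A\|_F^2\bigr)$ and strictly positive.

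The main work, and the main obstacle, is to show every remainder is $o\bigl(\tfrac{\eta}{n}\|A\|_F^2\bigr)$. The remainders have three sources: (i) the quadratic Taylor remainder of $f_k$, controlled through $\sum_{i\ne j}|A_{ij}|^3 \le \|A\|_{\op}\|A\|_F^2$, giving $O(\eta^2\|A\|_{\op}\|A\|_F^2)$; (ii) the $\eta^2$ correction $(A^2)_{ij}$ in $(\Sigma_\eta)_{ij}$, which after pairing with $A_{ij}$ produces a $\tr(A^3)$-type term bounded by $\eta^2\|A\|_{\op}\|A\|_F^2$; and (iii) the normalization-induced higher-order terms, which feed into sums such as $\sum_i \|A_{i,\cdot}\|^4 \le n\|A\|_{\op}^4$. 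Dividing by the leading order $\tfrac{\eta}{n}\|A\|_F^2$, terms (i) and (ii) are $O(\eta\|A\|_{\op}) = o(1)$ by Assumption \ref{asmp:stepsize}, while the normalization term is precisely what the rate condition $n\eta^3\|XX^\top - I_n\|_{\op}^4 = o(\|XX^\top - I_n\|_F^2)$, together with $\eta = o(1)$, is designed to neutralize. The delicate point is to track the normalization factors $\|A_{i,\cdot}\|$ and the $C^2$ remainder of $f_k$ simultaneously and uniformly over all $i\ne j$, verifying that no remainder exceeds the positive first-order term; once that bookkeeping is done, the leading term dominates and the claimed lower bound $\Omega\bigl(\tfrac{\eta}{n}\|XX^\top - I_n\|_F^2\bigr) > 0$ follows.
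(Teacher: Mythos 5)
Your proposal is correct and follows essentially the same route as the paper's proof: reduce $\E_{\beta_k}V(\Sigma)$ to a trace against $XX^\top - I_n$, plug in the explicit one-step PGD-Gauss formula for $\Sigma_\eta$ (including the row normalization $D^{-1}$), Taylor-expand $f_k$ elementwise at $0$ to extract the leading term $2\eta f_k'(0)^2\|XX^\top-I_n\|_F^2$, and kill the remainders using $\eta\|XX^\top-I_n\|_{\op}=o(1)$, $\eta=o(1)$, and the rate condition on $n\eta^3\|XX^\top-I_n\|_{\op}^4$ exactly where the paper uses them. The only cosmetic difference is that you track the normalization and higher-order corrections entrywise, whereas the paper packages them via the trace inequality for diagonal-times-PSD matrices and a Frobenius bound on $\Delta_R = O(\|\Delta\|_F^2)$; both yield the same bounds.
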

Proposition \ref{prop:asymp_var} suggests that for $n$ large enough, there is a nonzero improvement in $V(\Sigma_{\eta})$ in the average sense above. Therefore, $\widehat{\tau}_k$ has a smaller variance under $\Sigma_{\eta}$ compared to the initial design $I_n$, which reveals the benefit of covariate balance. However, we clarify that the improvement in Proposition \ref{prop:asymp_var} is with respect to the non-asymptotic variance $V(\Sigma_{\eta})$, which does not directly translate into an improvement in the limiting variance of the asymptotic distribution. Theoretical conditions under which the one-step PGD-Gauss reduces the limiting variance remain an open and complex problem, which we consider as future work.

\subsection{Inference}
To make inference under Gaussianized designs, we need to estimate the variance of $\widehat{\tau}_k$. Under the one-step PGD-Gauss with a covariance matrix $\Sigma_{\eta}$, by Theorem \ref{thm:normality} and Proposition \ref{prop:f_formula_general}, we write the asymptotic variance of $\widehat{\tau}_k$ as
\begin{align*}
    V(\Sigma_{\eta}) &= \frac{K-1}{n} \|\tilde{Y}(k)\|^2 = \frac{K^2}{n} Y(k)^\top f_k(\Sigma_{\eta}) Y(k) = \frac{K^2}{n} \sum_{i,j=1}^n Y_i(k) Y_j(k) f_k(\Sigma_{\eta, ij})\;.
\end{align*}
We use a Horvitz-Thompson estimator to estimate the variance as below: 
\begin{equation}\label{eq:V_HT}
    \widehat{V}_{\eta} = \frac{K^2}{n} \sum_{i,j}^n Y_i Y_j f_k(\Sigma_{\eta, ij}) \frac{\mathbb{I}\{g(T_i) = k, g(T_j) = k\}}{\P(g(T_i) = k, g(T_j) = k)} \;.
\end{equation}
Note that the joint treatment probabilities $\P(g(T_i)=k, g(T_j)=k)$ are determined by the design, and can be computed using $f_k$ based on Proposition \ref{prop:f_formula_general}, i.e, $\P(g(T_i) = k, g(T_j) = k) = f_k(\Sigma_{\eta, ij}) + 1/K^2$.
The following result shows that $\widehat{V}_{\eta}$ is a consistent variance estimator.
\begin{theorem}\label{thm:var_HT}
Suppose that $\max_{i} |Y_i(k)| = O(1)$ and Assumption \ref{asmp:stepsize} holds with $\eta$ satisfying $n^2 \eta^2 \|XX^\top - I_n\|_{\op}^2 = o(1)$. Then, $\widehat{V}_\eta$ is a well-defined variance estimator with 
\begin{equation*}
    \E \widehat{V}_\eta = V(\Sigma_\eta)\;, \quad \Var(\widehat{V}_\eta) = o(1)\;. 
\end{equation*}
\end{theorem}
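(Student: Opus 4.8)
The plan is to prove the three claims---well-definedness, unbiasedness, and the vanishing variance---in that order, treating $\widehat{V}_\eta$ as a quadratic form in the treatment indicators. Throughout I write $A_i = \mathbb{I}\{g(T_i)=k\}$, $p_{ij} = \P(g(T_i)=k, g(T_j)=k)$, and $M = XX^\top - I_n$; since Assumption \ref{asmp:stepsize} gives $\|X_i\|=1$, the matrix $M$ has zero diagonal, and by Proposition \ref{prop:f_formula_general} one has $p_{ij} = f_k(\Sigma_{\eta,ij}) + 1/K^2$ for $i\neq j$ and $p_{ii}=1/K$. The first step is to record two perturbation bounds obtained by unrolling one step of PGD-Gauss from $\Sigma^1=I_n$: the off-diagonal gradient at $I_n$ equals $f_k'(0)\,M$, so the row-normalized update yields $\Sigma_\eta - I_n = -2\eta f_k'(0)\,M + O(\eta^2)$ entrywise, and hence $\|\Sigma_\eta - I_n\|_{\op} = O(\eta\|M\|_{\op})$ and $\|\Sigma_\eta - I_n\|_F = O(\eta\|M\|_F)$. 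The step-size condition $n^2\eta^2\|M\|_{\op}^2 = o(1)$, equivalently $n\eta\|M\|_{\op}=o(1)$, is what I will invoke repeatedly.

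For well-definedness and unbiasedness: because $\max_{i\neq j}|\Sigma_{\eta,ij}| \le \|\Sigma_\eta - I_n\|_{\op} = O(\eta\|M\|_{\op}) = o(1/n)\to 0$ and $f_k$ is continuous with $f_k(0)=0$, the denominators $p_{ij}$ converge uniformly to $1/K^2>0$, so for $n$ large each is bounded below by $1/(2K^2)$; together with $p_{ii}=1/K$ this makes $\widehat{V}_\eta$ well-defined with $1/p_{ij}=O(1)$ uniformly. Unbiasedness is then immediate: on $\{g(T_i)=k, g(T_j)=k\}$ the observed products satisfy $Y_iY_j = Y_i(k)Y_j(k)$, so taking the design expectation (potential outcomes fixed) gives $\E[Y_iY_j\,\mathbb{I}\{g(T_i)=k,g(T_j)=k\}] = Y_i(k)Y_j(k)\,p_{ij}$, and the $p_{ij}$ cancel to yield $\E\widehat{V}_\eta = \frac{K^2}{n}\sum_{i,j}Y_i(k)Y_j(k)f_k(\Sigma_{\eta,ij}) = V(\Sigma_\eta)$.

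For the variance, write $\widehat{V}_\eta = \sum_{i,j}c_{ij}A_iA_j$ with $c_{ij} = \frac{K^2}{n}Y_i(k)Y_j(k)f_k(\Sigma_{\eta,ij})/p_{ij}$, and split off the diagonal, $\widehat{V}_{\mathrm{diag}}=\sum_i c_{ii}A_i$ and $\widehat{V}_{\mathrm{off}}=\sum_{i\neq j}c_{ij}A_iA_j$, so that $\Var(\widehat{V}_\eta)\le 2\Var(\widehat{V}_{\mathrm{diag}})+2\Var(\widehat{V}_{\mathrm{off}})$. Using $\max_i|Y_i(k)|=O(1)$, $1/p_{ij}=O(1)$, the Lipschitz bound $|f_k(\Sigma_{\eta,ij})|\le L|\Sigma_{\eta,ij}|$ valid for the uniformly small off-diagonal entries, and Cauchy--Schwarz $\sum_{i\neq j}|\Sigma_{\eta,ij}|\le n\|\Sigma_\eta-I_n\|_F$, I obtain $c_{ii}=O(1/n)$ and $\sum_{i\neq j}|c_{ij}| = O(\eta\|M\|_F)$. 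The diagonal term is $\Var(\widehat{V}_{\mathrm{diag}})=\sum_i c_{ii}^2\Var(A_i)+\sum_{i\neq j}c_{ii}c_{jj}f_k(\Sigma_{\eta,ij})=O(1/n)+O(\eta\|M\|_F/n)=o(1)$. For the off-diagonal term I use the crude but sufficient bound $\Var(\widehat{V}_{\mathrm{off}})\le\E[\widehat{V}_{\mathrm{off}}^2]=\sum_{i\neq j,\,i'\neq j'}c_{ij}c_{i'j'}\E[A_iA_jA_{i'}A_{j'}]\le(\sum_{i\neq j}|c_{ij}|)^2=O(\eta^2\|M\|_F^2)$, where the last inequality holds because each $\E[A_iA_jA_{i'}A_{j'}]\in[0,1]$. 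Finally $\eta^2\|M\|_F^2\le n\eta^2\|M\|_{\op}^2 = n^{-1}(n^2\eta^2\|M\|_{\op}^2)=o(1/n)$ and $\eta\|M\|_F/n\le \eta\|M\|_{\op}/\sqrt{n} = o(n^{-3/2})$, so both contributions vanish and $\Var(\widehat{V}_\eta)=o(1)$.

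The main obstacle is not the variance summation itself---the quadratic-form decomposition together with the crude bound $\E[\widehat{V}_{\mathrm{off}}^2]\le(\sum|c_{ij}|)^2$ sidesteps any fourth-order joint-probability analysis of the indicators---but rather the careful propagation of the one-step PGD-Gauss update through the row normalization to secure the operator- and Frobenius-norm perturbation bounds on $\Sigma_\eta - I_n$, and the verification that the single step-size condition simultaneously keeps the denominators $p_{ij}$ bounded away from zero (so the estimator is well-defined with $1/p_{ij}=O(1)$) and forces every variance contribution to $o(1)$. A secondary subtlety is confirming that the linearization of $f_k$ near $0$ is uniform over all off-diagonal entries, which follows once $\max_{i\neq j}|\Sigma_{\eta,ij}|\to 0$.
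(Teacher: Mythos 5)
Your proof is correct and takes essentially the same route as the paper's: the same diagonal/off-diagonal split of the quadratic form, the same one-step perturbation bound on $\Sigma_\eta - I_n$ (the paper's Lemmas \ref{lem:N_op} and \ref{lem:op}) to keep the denominators near $1/K^2$ and the off-diagonal $f_k(\Sigma_{\eta,ij})$ uniformly small, and the same device of bounding the fourth-order joint indicator probabilities by $1$ rather than computing them. The only minor difference is that you aggregate the off-diagonal coefficients through $\|\Sigma_\eta - I_n\|_F$ via Cauchy--Schwarz, yielding $O(\eta^2\|XX^\top - I_n\|_F^2)=o(1/n)$, whereas the paper applies the cruder entrywise bound $|f_k(\Sigma_{\eta,ij})| = O(\eta\|XX^\top - I_n\|_{\op})$ to each of the $n^2\times n^2$ terms to get $O(n^2\eta^2\|XX^\top - I_n\|_{\op}^2)=o(1)$; both suffice under the stated step-size condition.
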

Theorem \ref{thm:var_HT} enables inference under the Gaussianized design $\Sigma_{\eta}$, as one can combine Theorems \ref{thm:normality} and \ref{thm:var_HT} to derive the design-based confidence interval
\begin{equation}\label{eq:CI}
	[\widehat{\tau}_k - z_{\alpha/2} \sqrt{{\widehat{V}_\eta}/{{n}}}\;, \;\widehat{\tau}_k + z_{\alpha/2} \sqrt{{\widehat{V}_\eta}/{{n}}}]\;.
\end{equation}
Here, we set $z_{\alpha/2} = \Phi^{-1}(1-\alpha/2)$ to obtain an asymptotic $(1-\alpha)$ confidence interval. Compared to Theorem \ref{thm:normality}, Theorem \ref{thm:var_HT} requires a stronger condition $n^2 \eta^2 \|XX^\top - I_n\|_{\op}^2 = o(1)$ on the stepsize $\eta$, as we need to bound higher moments of treatment assignments in the variance estimator.

In this section, we have focused on $\widehat{\tau}_k$ under the one-step PGD-Gauss. However, it is also desirable to construct confidence intervals for $\widehat{\tau}_w$ under a general Gaussian covariance matrix $\Sigma$, which may be obtained by running PGD-Gauss until convergence. To this end, we discuss two general approaches in Section \ref{sec:appendix_inference}: first, we apply existing variance bounding techniques to analyze the variance of $\widehat{\tau}_w$ and compute conservative confidence intervals, similar to \eqref{eq:CI}; second, we propose an alternative procedure to construct randomization-based confidence intervals using the design distribution and an imputation model learned from data.

\section{Simulations}\label{sec:simu}
Here we conduct comprehensive experiments on different designs under a factorial setup. Additional numerical results can be found in Section \ref{sec:appendix_simu}, including a real data example in the continuous treatment setting and simulation details on the 3-treatment setup in Section \ref{sec:example}.

We set $n = 100$, $d = 5$, and $X_i \stackrel{iid}{\sim} \cN(0, I_d)$. Consider a factorial design with two treatments $A_i\in\{0, 1\}$, $B_i\in\{0, 1\}$ with potential outcomes:
$Y_i(A_i, B_i) = X_i^\top \beta_1 + A_i (X_i^\top \beta_2) + B_i (0.2+X_i^\top \beta_3) + 0.5 A_i B_i + \eps_i$, where $\beta_1 = (-1, -1, -2/3, -6/5, 0)$, $\beta_2 = (0,0,-8/5,8/5,8/5)$, $\beta_3 = (2,2,2,0,0)^\top$, $\eps_i\sim\cN(0, 0.1^2)$, and we fix $\eps_i$ for different potential outcomes. To translate the factorial design to a standard uniform design, we encode the treatments by $D_i = 1+2A_i + B_i \in\{1, 2, 3, 4\}$. Then, we apply the Gaussianization techniques in Section \ref{sec:discrete} to model the treatments by $D_i = g(T_i)$ for the map $g$ in Section \ref{sec:discrete}, enabling Gaussianized design optimization.

In the factorial design under the potential outcome framework, one is usually interested in estimating main effects and interaction effects \citep{dasgupta2015causal}: 
\begin{align*}
	\tau_1 &\coloneqq \frac{1}{2n}\sum_{i=1}^n (-Y_i(0, 0) - Y_i(0, 1) + Y_i(1, 0) + Y_i(1,1)) = 0.25 + \frac{1}{n} \sum_{i=1}^n X_i^\top \beta_2\;, \\
	\tau_2 &\coloneqq \frac{1}{2n}\sum_{i=1}^n (-Y_i(0,0) + Y_i(0,1) - Y_i(1,0) + Y_i(1,1)) = 0.45 + \frac{1}{n} \sum_{i=1}^n X_i^\top \beta_3\;, \\
	\tau_{12} &\coloneqq \frac{1}{2n}\sum_{i=1}^n (Y_i(0,0) - Y_i(0,1) - Y_i(1,0) + Y_i(1,1)) = 0.25\;.
\end{align*}
We will estimate these quantities based on Horvitz-Thompson estimators.

We evaluate the MSE of Horvitz-Thompson estimators under different designs. We implement baseline Gaussianization (BG) with $\Sigma = I_n$, and the optimized Gaussianization (OG) with $\Sigma^*$. The optimized covariance matrix $\Sigma^*$ is obtained from PGD-Gauss for solving the nuclear-norm objective with i.i.d. initialization and 200 iterations. For comparison purposes, we implement  complete randomization (CR) \citep{dasgupta2015causal}, recursive matching (RM) \citep{bai2024inference} and rerandomization (RR) \citep{li2020rerandomization}. RM and RR can be considered as state-of-the-art designs for covariate balance in the factorial setup.

The MSEs are presented in boxplots in Figure \ref{fig:mse}, where we evaluate the MSEs based on 1,000 simulations and generate different covariates and potential outcomes for 100 times. We observe across all three estimation problems, OG achieves the smallest MSE among five designs. In Figure \ref{fig:cov_vs_mse}, we provide a scatter plot of the MSEs for $\tau_1$ over the covariate balance objective in the nuclear norm, evaluated under all different designs. We observe that 1) a smaller covariate balance measure indicates smaller MSE on average, and 2) OG achieves the smallest covariate balance measure across all designs, trailed by RM and RR. In Section \ref{sec:appendix_simu}, we provide further simulation details on design-based confidence intervals.

\begin{figure}[t!]
    \vspace{-7mm}
    \centering
    \subfloat[$\tau_1$]{\includegraphics[width=.3\linewidth]{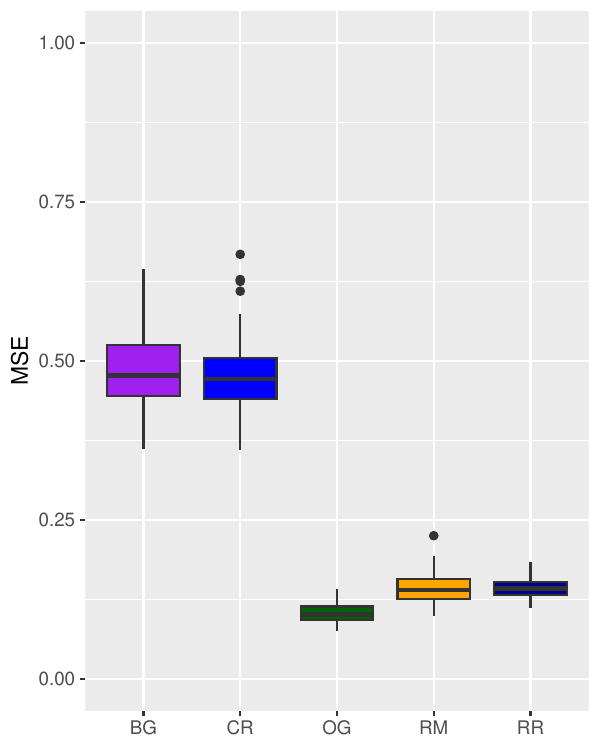}}
    \subfloat[$\tau_2$]{\includegraphics[width=.3\linewidth]{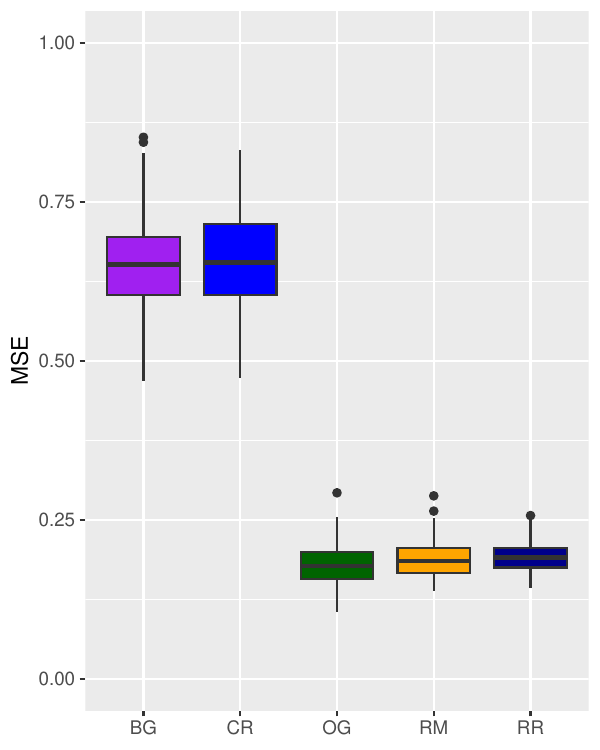}}
    \subfloat[$\tau_{12}$]{\includegraphics[width=.3\linewidth]{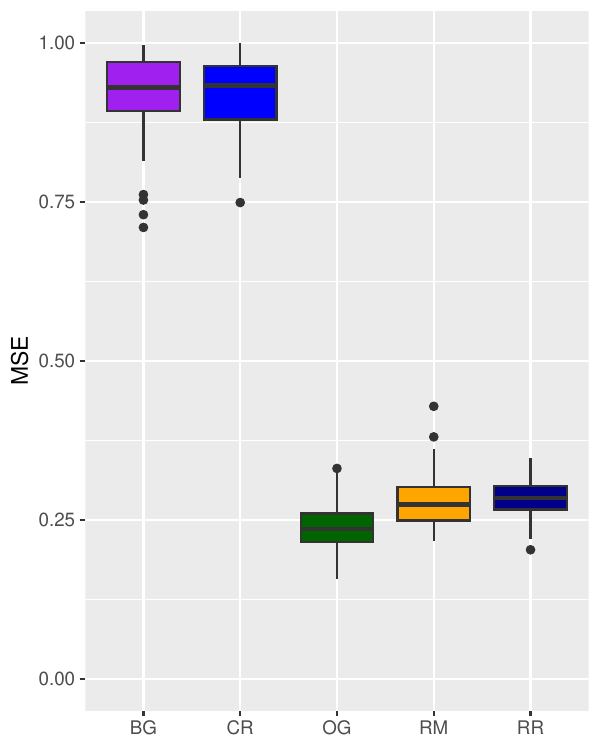}}
    \caption{MSEs for estimating $\tau_1, \tau_2, \tau_{12}$ under different designs.}
    \label{fig:mse}
\end{figure}

\begin{figure}[t!]
    \vspace{-5mm}
    \centering
    \includegraphics[width=.35\linewidth]{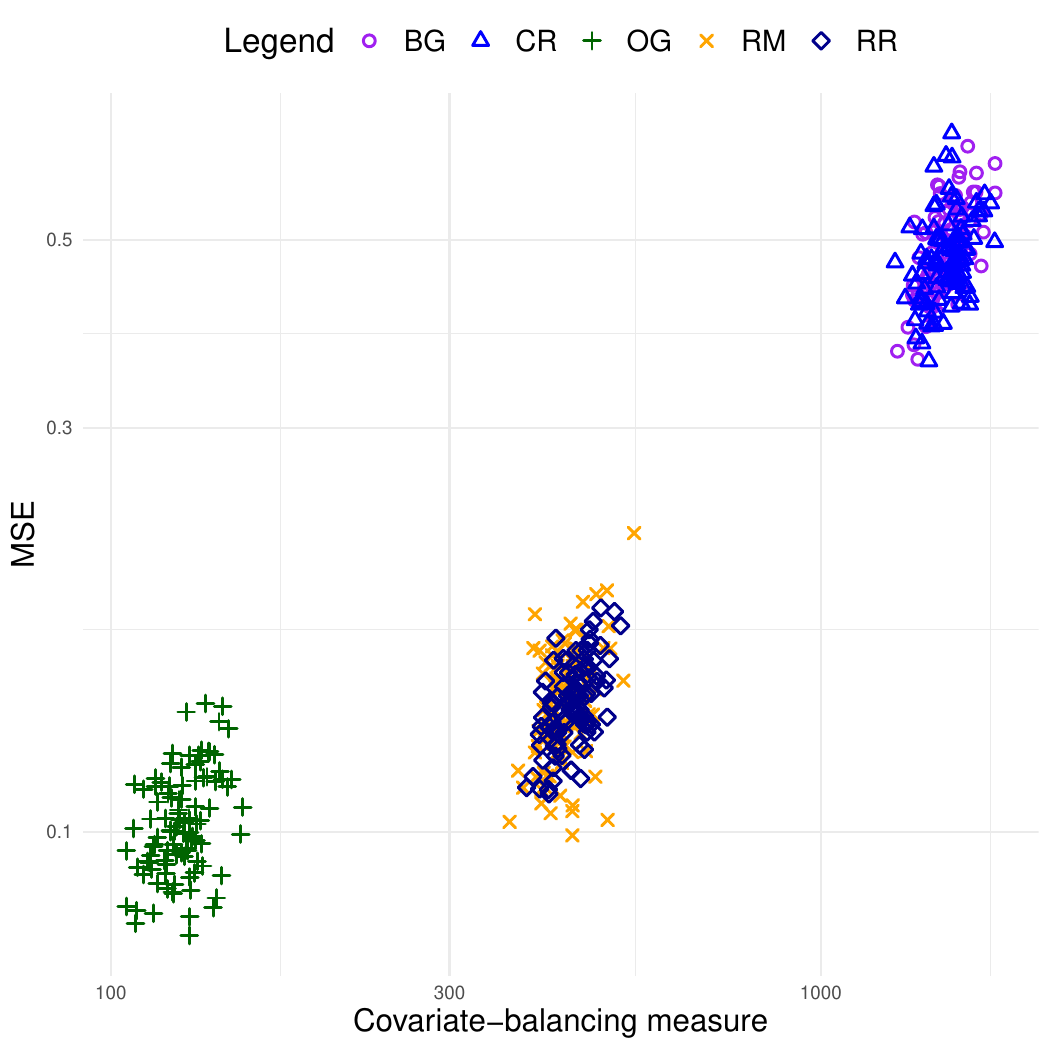}
    \caption{Scatter plot of MSEs over the covariate balance objective (nuclear norm) evaluated under different designs. }
    \label{fig:cov_vs_mse}
\end{figure}

\section{Conclusion}
In our paper, we develop a Gaussianization framework to optimize experimental designs for covariate balance. This approach accommodates general covariates and multiple treatment arms, offering a key advantage over existing methods. Moreover, Gaussianization seamlessly extends to continuous treatments via the Gaussian design, which may be of independent interest in practical applications. 
As an extension, it would be interesting to consider more complex settings, such as those involving interference. Second, developing a general asymptotic theory for Gaussianized designs that extends beyond local perturbations remains an open problem. We consider these areas promising topics for future work.

\section*{Acknowledgments}
TL is supported by the NSF Career Award (DMS-2042473) and by the Wallman Society of Fellows at the University of Chicago. PT acknowledges support from  NSF award SES-2419009.

\bibliography{ref}
\bibliographystyle{natbib}

\newpage
\appendix 

\section{Supplementary Inferential Results}\label{sec:appendix_inference}
In this section, we discuss the extension of Section \ref{sec:asymp} to construct confidence intervals for $\widehat{\tau}_w$ with a general Gaussian covariance matrix. Recall that $\widehat{\tau}_w$ in Section \ref{sec:discrete} is defined as 
\begin{equation*}
    \widehat{\tau}_w = \sum_{k=1}^K w_k \widehat{\tau}_k\;,\quad \widehat{\tau}_k = \frac{K}{n} \sum_{i=1}^n \mathbb{I}\{D_i = k\} Y_i\;.
\end{equation*}

\paragraph{Aronow-Samii Variance Bound.}
By simple algebra, we have
\begin{align*}
    n\Var(\widehat{\tau}_w) &= \sum_{k, l=1}^K w_k w_l \Cov(\widehat{\tau}_k, \widehat{\tau}_l) \\
    &= \frac{K^2}{n} \sum_{k,l=1}^K w_k w_l \sum_{i,j=1}^n Y_i(k) Y_j(l) \Cov(\mathbb{I}\{g(T_i) = k\}, \mathbb{I}\{g(T_j) = l\})\;.
\end{align*}
Different from the $\widehat{\tau}_k$ case, the variance of $\widehat{\tau}_w$ is not directly estimable, as it involves products of different potential outcomes that can not be jointly observed. Concretely, consider the decomposition below:
\begin{align*}
    n\Var(\widehat{\tau}_w) &= \frac{K^2}{n} \sum_{k=1}^K w_k^2 \sum_{i=1}^n Y_i(k)^2 \Var(\mathbb{I}\{g(T_i) = k\}) \\
    &+ \frac{K^2}{n} \sum_{k,l=1}^K w_k w_l \sum_{i\neq j} Y_i(k) Y_j(l) \Cov(\mathbb{I}\{g(T_i) = k\}, \mathbb{I}\{g(T_j) = l\}) \\
    &+ \underbrace{\frac{K^2}{n} \sum_{k\neq l} w_k w_l \sum_{i=1}^n Y_i(k) Y_i(l) \Cov(\mathbb{I}\{g(T_i) = k\}, \mathbb{I}\{g(T_i) = l\})}_\text{(I)}\;.
\end{align*}
The last term (I) is not estimable, since $\P(g(T_i)=k, g(T_i) = l) = 0$ for any $k\neq l$. 

To address this fundamental problem, we follow the design-based inference literature and resort to estimating a variance bound. Specifically, since $\Cov(\mathbb{I}\{g(T_i) = k\}, \mathbb{I}\{g(T_i) = l\}) = -1/K^2$, we have
\begin{align*}
    \text{(I)} = - \frac{1}{n} \sum_{k\neq l} w_k w_l \sum_{i=1}^n Y_i(k) Y_i(l)\;.
\end{align*}
Then, we apply the Anorow-Samii variance bound \citep{aronow2017varbound} to obtain 
\begin{equation*}
    \text{(I)} \le \frac{1}{n} \sum_{k\neq l} |w_k| |w_l| \sum_{i=1}^n \frac{Y_i(k)^2 + Y_i(l)^2}{2}\;.
\end{equation*}
Therefore, we have
\begin{align*}
    n \Var(\widehat{\tau}_w) &\le \text{VB}\;,\\
    \text{VB} &\coloneqq \frac{K^2}{n} \sum_{k=1}^K w_k^2 \sum_{i=1}^n Y_i(k)^2 \Var(\mathbb{I}\{g(T_i) = k\}) \\
    &+ \frac{K^2}{n} \sum_{k,l=1}^K w_k w_l \sum_{i\neq j} Y_i(k) Y_j(l) \Cov(\mathbb{I}\{g(T_i) = k\}, \mathbb{I}\{g(T_j) = l\}) \\
    &+ \frac{1}{2 n} \sum_{k\neq l} |w_k| |w_l| \sum_{i=1}^n (Y_i(k)^2 + Y_i(l)^2)\;.
\end{align*}
Notice that all quantities in the variance bound VB become estimable, and we propose the following conservative variance estimator.\footnote{For simplicity, we assume that $\P(g(T_i)=k,g(T_j)=l)> 0$ for any $i\neq j$. }
\begin{align*}
    \widehat{\mathrm{VB}} &= \frac{K^2}{n} \sum_{k=1}^K w_k^2 \sum_{i=1}^n Y_i^2 \Var(\mathbb{I}\{g(T_i) = k\})\frac{\mathbb{I}\{g(T_i)=k\}}{\P(g(T_i)=k)} \\
    &+ \frac{K^2}{n} \sum_{k,l=1}^K w_k w_l \sum_{i\neq j} Y_i Y_j \Cov(\mathbb{I}\{g(T_i) = k\}, \mathbb{I}\{g(T_j) = l\}) \frac{\mathbb{I}\{g(T_i)=k,g(T_j)=l\}}{\P(g(T_i)=k,g(T_j)=l)} \\
    &+ \frac{1}{2 n} \sum_{k\neq l} |w_k| |w_l| \sum_{i=1}^n Y_i^2 \left( \frac{\mathbb{I}\{g(T_i)=k\}}{\P(g(T_i)=k)} + \frac{\mathbb{I}\{g(T_i)=l\}}{\P(g(T_i)=l)}\right)\;.
\end{align*}
Based on $\widehat{\mathrm{VB}}$, we compute conservative confidence intervals by
\begin{equation*}
	\left[\widehat{\tau}_w - z_{\alpha/2} \sqrt{{\widehat{\mathrm{VB}}}/{{n}}}\;, \;\widehat{\tau}_w + z_{\alpha/2} \sqrt{{\widehat{\mathrm{VB}}}/{{n}}}\right]\;.
\end{equation*}

\paragraph{Randomization-based Confidence Interval.}
In practice, the variance bound above can present a large numerical gap to the true variance, leading to over-conservative confidence intervals. To mitigate the conservativeness, we propose a randomization-based confidence interval as below. This can be viewed as a variant of parametric bootstrap. Recall that $T_i$ and $Y_i$ denote the observed treatment and outcome for unit $i$, respectively.
\begin{testprocedure}[ (Randomization-Based Confidence Interval for $\widehat{\tau}_w$)]\label{proc}
    \begin{enumerate}
	\item For $k = 1, \dots, K$, fit a model $\widehat{m}_k(X_i)$ by regressing $Y_i$ over $X_i$ for all units with treatment $k$.
	\item Generate $\{T^b\}_{b=1}^B\stackrel{iid}{\sim}\cN(0, \Sigma)$. For each randomization $T^b$, impute outcomes by
    \begin{equation*}
    Y_i^b = 
        \begin{cases}
        Y_i &\text{if}~g(T_i^b) = g(T_i)\\
        \widehat{m}_k(X_i)\;, k = g(T_i^b) &\text{if}~g(T_i^b) \neq g(T_i)
        \end{cases}\;.
    \end{equation*}
    Compute the randomization-based estimate $\widehat{\tau}_w^b$ based on $T_i^b$ and $Y_i^b$, $i=1, \dots, n$.
	\item Construct the randomization-based confidence interval $\left[\widehat{c}(\alpha/2)\;,\;\widehat{c}(1 - \alpha/2)\right]$, where $\widehat{c}(\alpha)$ is the $\alpha$-sample quantile for $\{\widehat{\tau}_w^b\}_{b=1}^B$.
\end{enumerate}
\end{testprocedure}
Procedure \ref{proc} conducts simulation-based inference by first learning a regression model to impute all potential outcomes under treatment $k$, and then generating new treatments and outcomes to simulate the distribution of the estimator $\widehat{\tau}_k$. The validity of Procedure \ref{proc} hinges on step 1, i.e., how well the fitted model captures the true outcome functions, which will be numerically validated in Section \ref{sec:appendix_simu}.

Conceptually, Procedure \ref{proc} follows similar ideas as \cite{imbens2018bootstrap}, which introduce a causal bootstrap to construct confidence intervals for the average treatment effect. However, \cite{imbens2018bootstrap} focused exclusively on a binary treatment setting under complete randomization, whereas Procedure \ref{proc} accommodates multiple treatment arms and general Gaussianized designs.

Notably, Procedure \ref{proc} can be extended to the continuous setting, which will be used to construct confidence intervals in Section \ref{sec:real}.
\begin{testprocedure}[ (Randomization-Based Confidence Interval for $\widehat{\tau}_w^c$)]\label{proc2}
    \begin{enumerate}
	\item Fit a model $\widehat{m}(X_i, T_i)$ by regressing $Y_i$ over $X_i$ and $T_i$, $i=1, \dots, n$.
	\item Generate $\{T^b\}_{b=1}^B\stackrel{iid}{\sim}\cN(0, \Sigma)$. For each randomization $T^b$, impute outcomes by
    \begin{equation*}
    Y_i^b = 
        \begin{cases}
        Y_i &\text{if}~T_i^b = T_i\\
        \widehat{m}(X_i, T_i^b) &\text{if}~T_i^b \neq T_i
        \end{cases}\;.
    \end{equation*}
    Compute the randomization-based estimate $\widehat{\tau}_w^{c,b}$ based on $T_i^b$ and $Y_i^b$, $i=1, \dots, n$.
	\item Construct the randomization-based confidence interval
    \begin{equation*}
	\left[\widehat{c}(\alpha/2),~\widehat{c}(1 - \alpha/2)\right]\;.
    \end{equation*}
    Here, $\widehat{c}(\alpha)$ is the $\alpha$-sample quantile for $\{\widehat{\tau}_k^{c,b}\}_{b=1}^B$.
\end{enumerate}
\end{testprocedure}

\section{Additional Simulation Results}\label{sec:appendix_simu}
\subsection{Simulation Details of the 3-treatment Experiment}
For the example in Section~\ref{sec:example}, we consider the following setup. 
\begin{itemize}
    \item $n = 18, d = 5$.
    \item For covariates, we consider (a) single feature: $X_{i1}\sim\cN(2, 3^2)$ and $X_{ij}\sim\cN(0,0.1^2)$ for $j=2, \dots, d$. $\beta_{1k}\sim 2 + 2\exp(1)$ and $\beta_{jk}\sim 2\exp(1)$; (b) uniform covariates: $X_{ij}\sim\cN(0, 3.6^2)$ and $\beta_{jk}\sim 2\exp(1)$. Here, $\beta_{jk}$ denotes the $j$-th entry of the vector $\beta_k$ for $j\in\{1,\dots,d\}$ and $k\in\{1,2,3\}$, and $\exp(1)$ is the exponential distribution with the rate parameter equal to one.
    \item Generate potential outcomes based on $Y(k) = X \beta_k$. 
\end{itemize}
In the block initialization, we first construct the size-3 blocks by sorting the first coordinate of $X$. Then, for each block matrix, we set diagonals to be 1 and off-diagonal entries to be -0.5. We run the PGD-Gauss in Section \ref{sec:opt} for 200 iterations.

\subsection{Simulation Details under the Factorial Setup}
In CR, one assigns same number of units to different treatments uniformly at random, which serves as a baseline that does not leverage covariate information. In RR, we repeatedly generate treatment assignments from CR according to the covariate balance criteria on Mahalanobis distance  with the asymptotic acceptance probability $p_a = 0.01$ as defined in \cite[Section 4]{li2020rerandomization}. In RM, we recursively match the experimental units for different treatment factors following \citep{bai2024inference}.

Here, we follow the simulation setup in Section \ref{sec:simu} and compare the computable confidence intervals under different designs. The confidence intervals for BG and OG can be constructed based on Procedure \ref{proc} in Section \ref{sec:appendix_inference}, where we fit a linear model $\widehat{m}_k$ of outcomes over covariates for each treatment arm. For RR and CR, we adopt the variance estimators proposed in \citep{li2020rerandomization,dasgupta2015causal}, respectively, and construct confidence intervals based on asymptotics. Here, we exclude the recursive matching design (RM) because, although it is a powerful design, the inferential results in \cite{bai2022inference} are derived in a superpopulation framework which is distinct from our design-based framework.

We present in Figure \ref{fig:confidence} the boxplots of the width of confidence intervals, along with the coverage rates. For simplicity, we focus on $\tau_1$, as the results for $\tau_2$ and $\tau_{12}$ are similar. Note that all methods achieve a correct coverage of $95\%$, while some of them are conservative. In term of the width, we observe that OG returns shortest confidence intervals, which reveals practical benefits of our design.
\begin{figure}[htbp!]
    \vspace{-7mm}
    \centering
    \subfloat[Width]{\includegraphics[width=.45\linewidth]{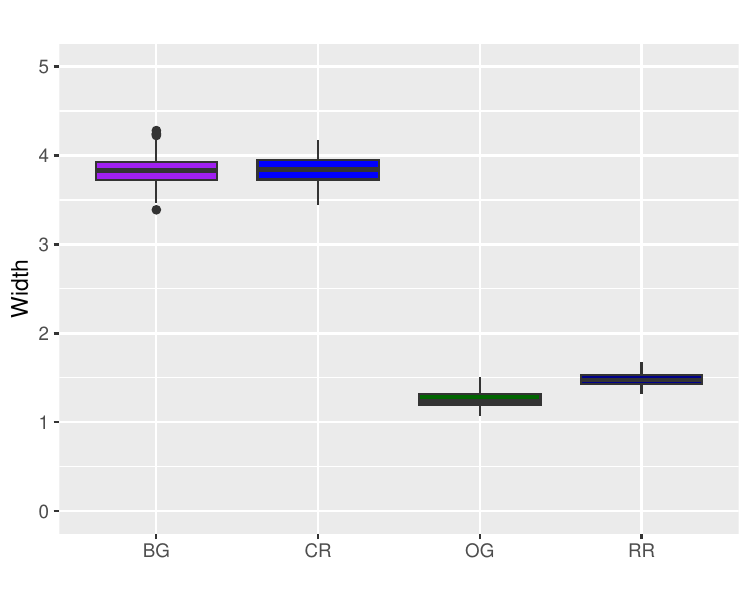}}
    \subfloat[Coverage]{\includegraphics[width=.45\linewidth]{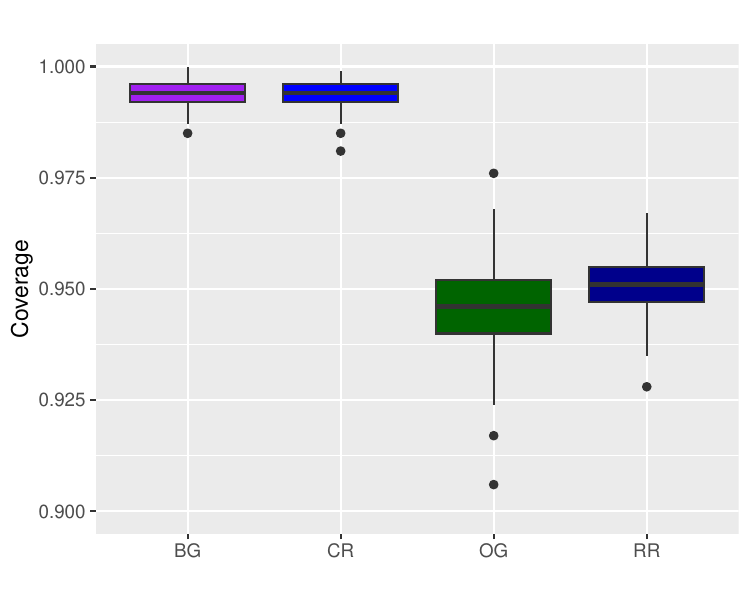}}
    \caption{Width of confidence intervals and their coverage rates for $\tau_1$ under different designs.}
    \label{fig:confidence}
\end{figure}

\subsection{Bed Nets Study on Continuous Treatments}\label{sec:real}
In \cite{Dupas2014}, the authors conducted a field experiment in Kenya, where households in different regions were encouraged to purchase insecticide-treated bed nets designed to prevent malaria. \cite{Dupas2014} treated households by sending vouchers with different discounted prices for the bed nets, effectively inducing a continuous price variable. The original outcome was a binary variable indicating whether a household purchased the bed nets using the voucher, and \cite{Dupas2014,gerber2012field} analyzed the effect of voucher on purchase rates of bed nets. Here, we implement Gaussian designs to assign continuous price treatments, and evaluate their performance compared with the original design in their study.

The original experiment in \cite{Dupas2014} was a 2-stage randomization (referred to as 2S), which fixes the price variable at discrete levels: 
\begin{itemize}
\item Stage 1 (region-level): Assign treatment levels (discounted prices) for six different regions in Kenya. These values are fixed once assigned throughout the experiment.
\item Stage 2 (household-level): Randomly assign treatments for households in each region, with the treatment levels determined in Stage 1.
\end{itemize}
The design and results of the bed nets study are presented in Table \ref{tab:sales}, which reports the proportion of households which purchased a bet net given a region and a discounted price. For instance, in region 2 with price 40, there were 61 households who received the voucher and 75.4\% of them eventually redeemed the voucher and purchased bed nets. Clearly, the rate at which bed nets were purchased declines steadily as the price increases: 75.4\% of households offered a price of 40 shillings purchased a net, compared to only 17.0\% of those offered a price of 200 shillings.

\renewcommand{\arraystretch}{1.1}
\begin{table}[t]
    \centering
    \begin{tabular}{p{9.5em}p{4em}p{4em}p{4em}p{4em}p{4em}p{4em}}
    \hline
	\hline
	Price \\(in Kenyan Shillings) & Region 1 & Region 2&  Region 3&  Region 4&  Region 5&  Region 6 \\
    \hline
	0 & &&&& 96.9(64) & 98.1 (53) \\
	\hline
	40 && 75.4(61) & & \\
	\hline
	50 &&& 72.4 (58) & 40.0 (35) & \\
	\hline
	60 &&&&& 73.0(37) & \\
	\hline 
	70 & 55.2(29) &&&&& \\
	\hline
	80 & & 57.1(70) &&&& \\
	\hline
	90 &&& 55.0(60) &&& \\
	\hline 
	100 & 34.0(47) &&& 28.6(49) & & 61.1(18) \\
	\hline 
	110 &&&&& 32.4(37) & \\
	\hline 
	120 & & 28.1(64) &&&& \\
	\hline 
	130 & 24.5(49) &&&&& \\
	\hline 
	140 &&&&& 37.9(29) & \\
	\hline 
	150 &&& 31.0(58)& 35.6(45)& & 22.2 (18) \\
	\hline
	190 & 17.9(28) &&&&& \\
	\hline 
	200 & & 17.0(59) && 10.3(29) &&\\
	\hline 
	210 & && 18.8 (48) &&&\\
	\hline
	250 & 6.7(30) &&& 7.7 (26) &&\\
	\hline
    \end{tabular}
    \bigskip
    \caption{Rates at which anti-malaria bed nets are purchased, by sales price (after subtracting the value of a randomly assigned voucher). The total number of households per group is in parentheses, and the exchange rate at the time of this study was 65 shillings = \$1.00.
    }
    \label{tab:sales}
\end{table}

\subsubsection{Estimation of Linear Effect}\label{sec:real_linear}
In our numerical study, we define each experimental unit as a cluster of households corresponding to a data point in Table \ref{tab:sales}, with outcome defined as the proportion of households who purchased bed nets. This results in an experiment on 26 cluster-level units. Each unit $i$ has a dummy covariate vector $U_i\in\R^{6}$ indicating the region of the unit, and a cluster-level covariate vector $X_i\in\R^{3}$, including the proportion of male heads sampled to receive the voucher, the proportion of households that have ever shopped at the shop, and the average age of the female heads in households. The three covariates are selected due to their statistical significance in an OLS regression of outcome over all collected covariates.


To assess the performance of different designs, we need to impute the outcome value at any counterfactual price level. To this end, we use the following imputation model:
\begin{equation*}
Y_i(t) = X_i^\top {\alpha}_1 + U_i^\top \alpha_2 + U_i^\top {\beta} t + \eps_{i}\;,
\end{equation*}
The coefficients ${\alpha}_1, \alpha_2, {\beta}$ are OLS estimates for this linear model based on the observed data, and $\eps_i\stackrel{iid}{\sim}\cN(0, \sigma^2)$ where $\sigma^2$ is the OLS estimate of the error variance. Our goal is to estimate the average linear treatment effect under the imputation model
\begin{equation*}
	\tau_L^c \coloneqq \frac{1}{n} \sum_{i=1}^n U_i^\top {\beta}\;.
\end{equation*}
Under the original design, one can unbiasedly estimate $\tau_L^c$ by
\begin{equation*}
	\widehat{\tau}_{2S} = \frac{1}{n} \sum_{i=1}^n Y_i^{2S} \sum_{j=1}^6 U_{ij} \frac{D_i^{2S}-\mu_j}{\sigma_j^2}
\end{equation*}
$D_i^{2S}$ in $\widehat{\tau}_{2S}$ denotes the treatment in the 2-stage (2S) design, i.e., randomly selected from the discrete set of price levels for each region. Accordingly, $\mu_j$ and $\sigma_j^2$ are the mean and variance of $D_i^{2S}$ in region $j$.

We discuss implementation details about Gaussian design toward estimating $\tau_L^c$. First, since the price treatment takes values in $[0, 250]$, we implement $T_i\sim \cN(\mu, \sigma^2)$ with $\mu = 125$ and $\sigma = 41.67 = 250/6$, ensuring that that $T_i$ falls in $[0, 250]$ with high probability. Then, noticing that the estimand $\tau_L^c$ is same as the average of $Y_i^\prime(t)$, we follow Example \ref{ex:first_derivative} to obtain an unbiased estimator
\begin{equation*}
	\widehat{\tau}_{L}^c = \frac{1}{n} \sum_{i=1}^n Y_i w_L(T_i)\;, \quad w_L(t) =  \frac{t-\mu}{\sigma^2}\;.
\end{equation*}
Lastly, to perform Gaussianized design optimization in Section~\ref{sec:continuous}, we specify a linear baseline response function
\begin{equation}\label{eq:baseline}
	Y_0(t) = - \frac{t}{250} + 1\;.
\end{equation}
The specified response function captures the true imputation model in the sense that
\begin{equation*}
    Y_i(t) = a_i Y_0(t) + b_i \;, \quad a_i = -250 U_i^\top \beta~\text{and}~b_i = X_i^\top \alpha_1 +  U_i^\top \alpha_2 + \eps_i - 250 U_i^\top \beta\;,
\end{equation*}
which validates the modeling assumption in Equation \eqref{eq:model1}. We initialize PGD-Gauss from i.i.d. Gaussian design with covariates $\{(X_i, U_i)\}_{i=1}^{26}$, and obtain the optimized Gaussian design after 200 iterations. We focus on the baseline response function \eqref{eq:baseline}, i.i.d. initialization, and the nuclear norm objective in design optimization throughout the bed nets study.

Table \ref{tab:sales_lin} presents the MSE and inference properties for different designs. We implement the baseline i.i.d. Gaussian design (BG) and the original 2S design for comparison. To conduct inference, we use the randomization-based confidence intervals (Section~\ref{sec:appendix_inference}).\footnote{We apply Procedure \ref{proc2} by fitting a linear model $\widehat{m}(x,t)$ of outcomes over treatments and all covariates, i.e., $Y_i\sim X_i + U_i + T_i$. Note that the original inference procedure in \cite{Dupas2014} is no longer applicable under our setup, as we are considering a different causal estimand.}
We observe that OG achieves the smallest MSE as well as the shortest confidence interval. We observe a numerical gap between the actual coverage rates and the expected 95\% coverage for BG, which is due to the small sample size ($n = 26$).
\begin{table}[htbp!]
    \centering
    \begin{tabular}{lp{5em}p{8em}p{7em}}
	\hline
    design & $n\times$ MSE & average CI width $\times\sqrt{n}$ & coverage (\%) \\
	\hline 
	\multicolumn{3}{l}{$\tau_L^c = -3.75 \times 10^{-3}$} & \\
	\hline
	BG &  $1.2\times 10^{-4}$ & $4.00\times 10^{-2}$ & 90.7\\
	OG & $0.5\times 10^{-4}$ & $2.59\times 10^{-2}$ & 97.9\\
	2S & $0.8\times 10^{-4}$ & $3.26\times 10^{-2}$ & 100.0\\
	\hline
    \end{tabular}
    \bigskip
    \caption{MSE properties and inference for linear effects based on 1,000 simualtions. }
    \label{tab:sales_lin}
\end{table}

We visualize the optimized Gaussian covariance matrix in Figure \ref{fig:corr_map}. The covariance matrix ---initialized from the identity matrix--- automatically learns the block structure for units from regions 1-6 under PGD-Gauss. In addition, within each block, it reveals an approximate equicorrelation structure, which resembles the covariance matrix of complete randomization. In short, the OG design in this setup performs a continuized block randomization.
\begin{figure}[t!]
    \centering
    \includegraphics[width=.45\linewidth]{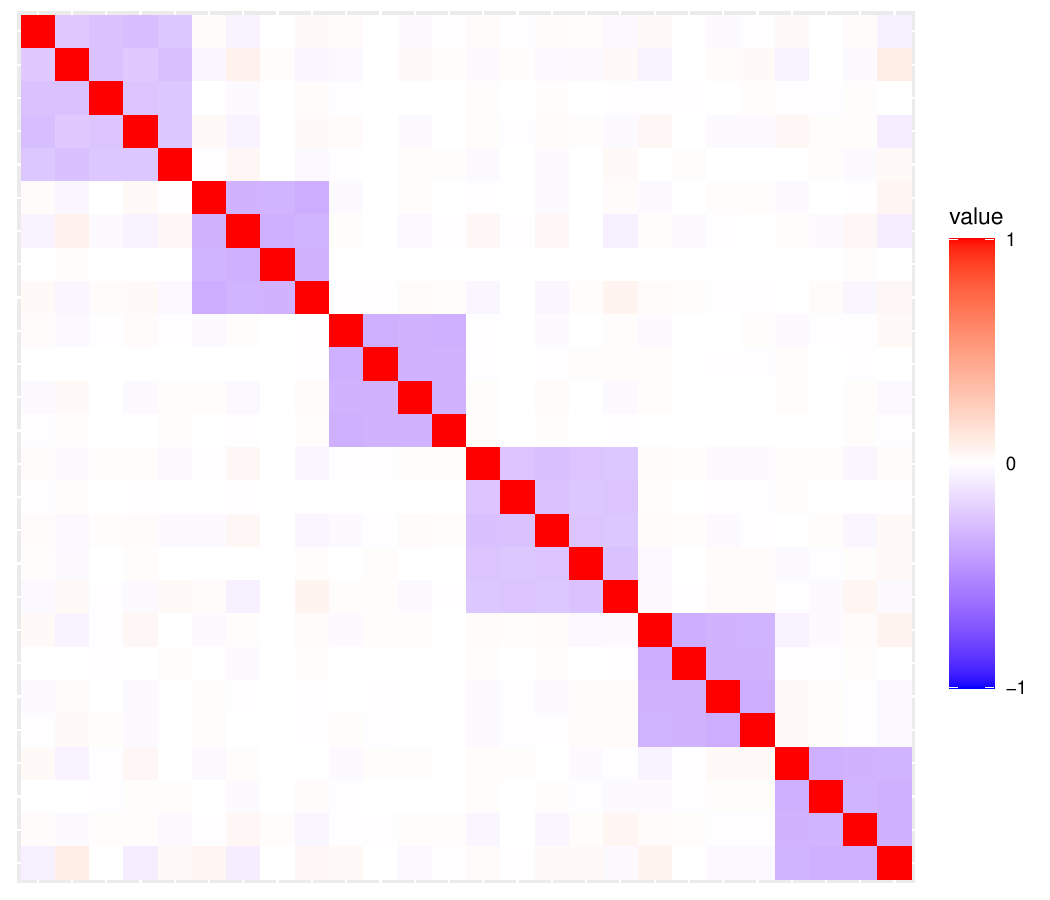}
    \caption{Heatmap of the optimized Gaussian covariance in OG.}
    \label{fig:corr_map}
\end{figure}

\subsubsection{Testing Monotonicity and Convexity}
\paragraph{Monotonicity.}
Testing the monotonicity, say, non-decreasingness, can be formulated as the following hypothesis on underlying response functions:
\begin{equation*}
	H_0^M: Y^{\prime}_i(t)\ge 0\;, ~\text{for any $i=1,\dots,n$ and $t\in\R$}.
\end{equation*}
Directly testing for $H_0^M$ is impossible, since we have only one observation for each response function. We consider a weaker null hypothesis of $H_0^M$
\begin{equation*}
	H_{0,g}^M: \frac{1}{n} \sum_{i=1}^n \underset{Z\sim\cN(\mu, \sigma^2)}{\E} Y^{\prime}_i(Z)\ge 0\;.
\end{equation*}
This weak null hypothesis is motivated by Gaussian design, and it indicates that the derivative averaged over units and treatments is non-negative. The design-induced hypothesis $H_{0,g}^M$ allows us to check monotonicity through Gaussian design.

Similar to Section \ref{sec:real_linear}, we consider an imputation model with a nonlinear component
\begin{equation*}
Y_i(t) = X_i^\top {\alpha}_1 + U_i^\top \alpha_2 + b U_i^\top {\beta} t^3 + \eps_i\;,
\end{equation*}
The coefficients ${\alpha}_1, \alpha_2, {\beta}$ are OLS estimates for this linear model based on the observed data and $b=1$, and $\eps_i\stackrel{iid}{\sim}\cN(0, \sigma^2)$ where $\sigma^2$ is the OLS estimate of the error variance. We report that each element of ${\beta}$ is negative, indicating that the null hypothesis $H_0^M$ is false. To evaluate the power under different degree of monotonicity, we inspect $b = 0, 0.5, 1, 1.5, 2$, where a larger $b$ indicates more significant decreasingness in the data.

Same as Section \ref{sec:real_linear}, under Gaussian design, one can use $\widehat{\tau}^c_L$ to unbiasedly estimate 
\begin{equation*}
    \tau_M^c \coloneqq \frac{1}{n} \sum_{i=1}^n \underset{Z\sim\cN(\mu, \sigma^2)}{\E} Y^{\prime}_i(Z)\;,
\end{equation*}
which is guaranteed by Example \ref{ex:first_derivative}. Hence, we implement BG and OG to test for $H_{0, g}^M$ by checking whether the computed confidence interval for $\tau_M^c$ is below zero. Confidence intervals are computed in the same way as in Section \ref{sec:real_linear}.\footnote{In Procedure \ref{proc2}, we fit a linear model $\widehat{m}(x, t)$ based on $Y_i \sim X_i + U_i + U_iT_i^3$.} For comparison purposes, we employ a parametric approach that first fits an OLS regression on 
\begin{equation*}
Y_i \sim X_i + U_i + T_i\;,
\end{equation*}
and then applies a $t$-test for $T_i$ as a surrogate method to check monotonicity. We evaluate the parametric linear model approach (LM) under all three designs BG, OG, and 2S.

From Figure \ref{fig:monotonicity}(a), OG is more powerful for testing $H_{0,g}^{M}$ compared to BG, justifying the benefits of covariate balance. Under the LM approach, the original 2S design provides the highest power. However, we note that LM approaches are not directly comparable with BG and OG approach, as they target the null hypothesis that whether the OLS coefficient is negative, which is different from $H_{0,g}^M$.
\begin{figure}[t!]
    \vspace{-7mm}
    \centering
    \subfloat[Monotonicity]{\includegraphics[width=.45\linewidth]{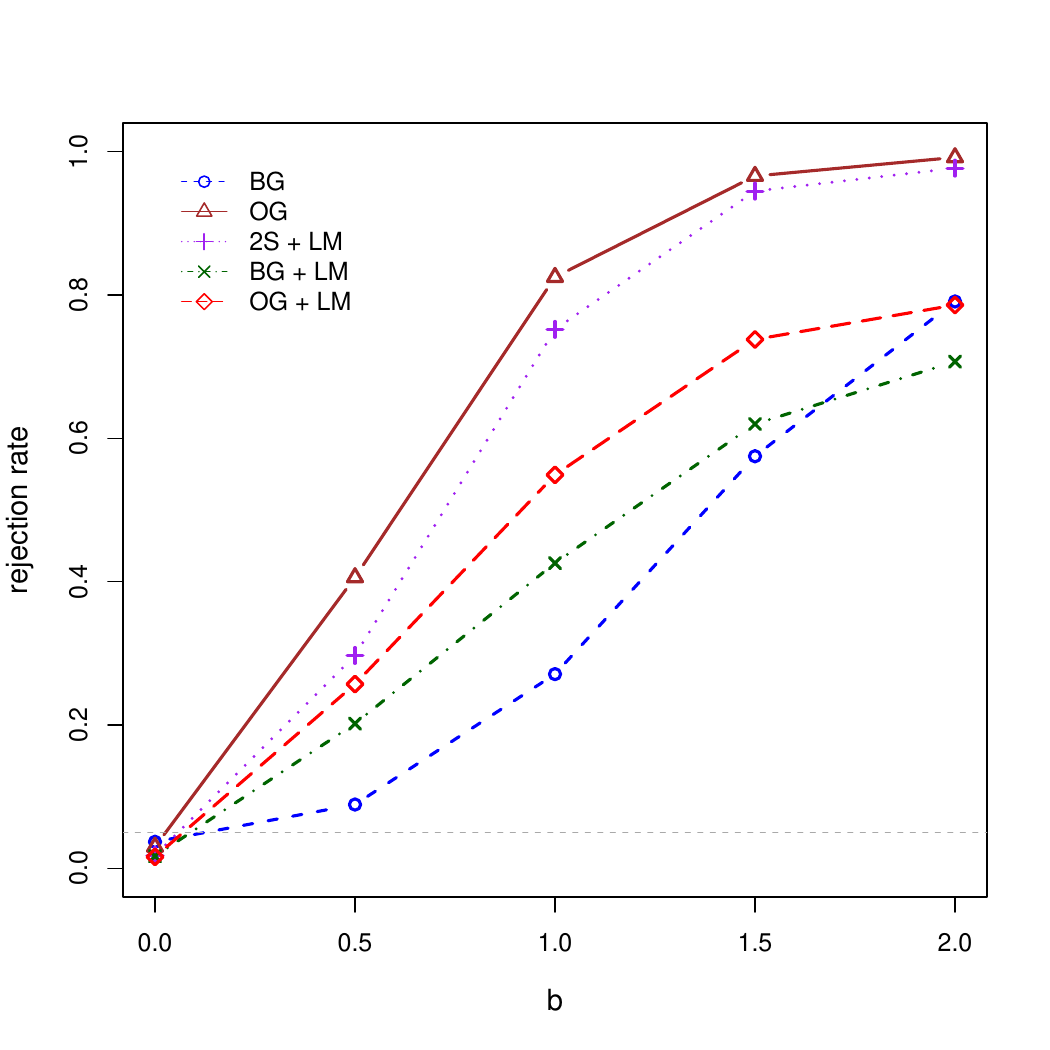}}
    \subfloat[Convexity]{\includegraphics[width=.45\linewidth]{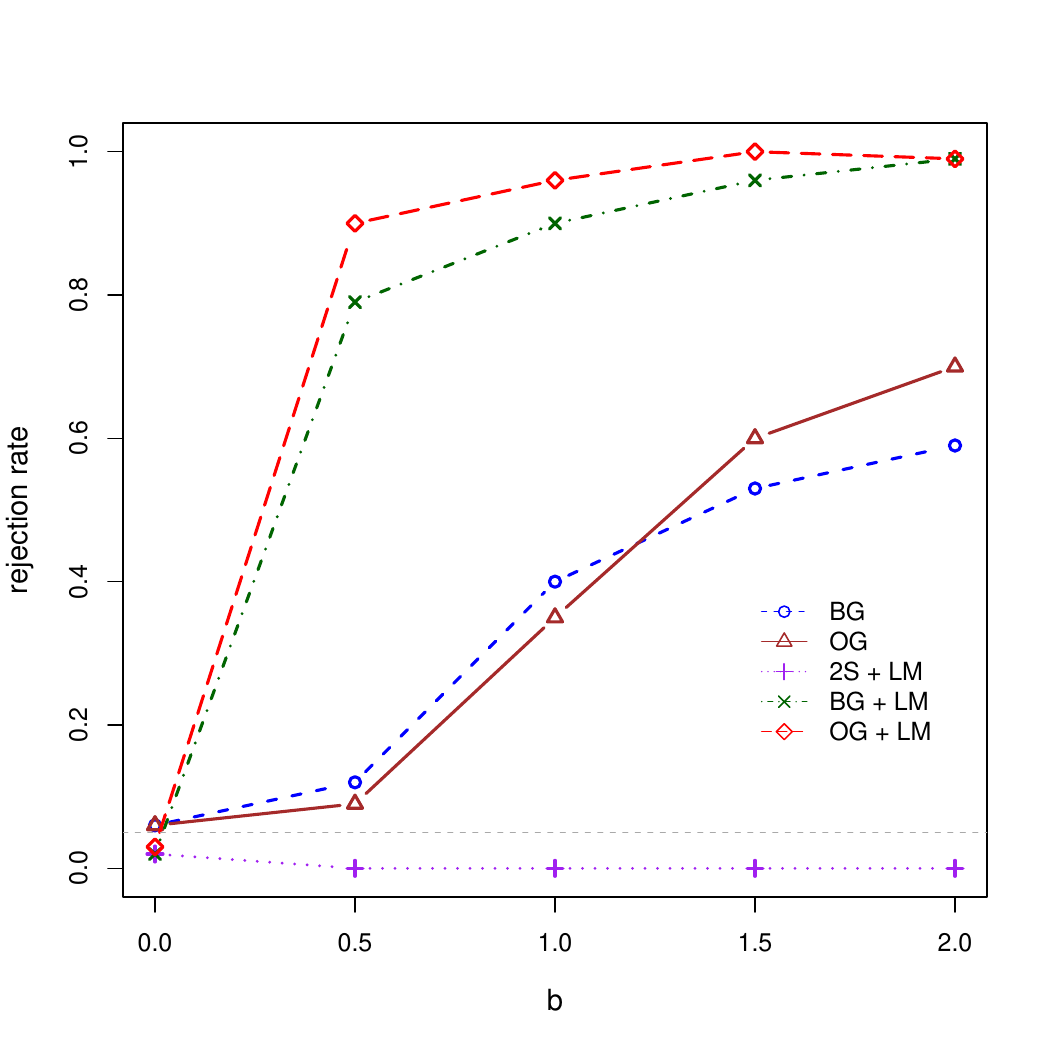}}
    \caption{Rejection rates for testing monotonicity and convexity over different $b$. Rejection means that the confidence interval for the parameter of interest is strictly below zero.}
    \label{fig:monotonicity}
\end{figure}

\paragraph{Convexity.}
Similar to the monotonicity case, we test $H_0^C : Y_i^{''}(t) >0$ for any $i$ through a weaker null hypothesis
\begin{equation*}
	H_{0,g}^C: \frac{1}{n} \sum_{i=1}^n \underset{Z\sim\cN(\mu, \sigma^2)} {\E} Y^{''}_i(Z) \ge 0\;.
\end{equation*}
We consider an imputation model 
\begin{equation*}
	Y_i(t) = X_i^\top {\alpha}_1 + U_i^\top\alpha_2 + b U_i^\top {\beta} t^2 + \eps_i \;.
\end{equation*}
The coefficients ${\alpha}_1, \alpha_2, {\beta}$ are OLS estimates for this linear model based on the observed data and $b=1$, and $\eps_i\stackrel{iid}{\sim}\cN(0, \sigma^2)$ where $\sigma^2$ is the OLS estimate of the error variance. Since each element of $\beta$ is negative, the imputation model implies that the null hypothesis $H_0^C$ is false, i.e., the response functions are concave. Again, we inspect $b = 0, 0.5, 1, 1.5, 2$, where a larger $b$ indicates more significant concavity in the data.

Convexity reflects the second-order information of the response functions, which typically requires larger sample sizes to gain any meaningful conclusions. Hence, to make nontrivial power comparisons, we simulate a new set of covariates of size $n = 500$, by sampling uniformly from the original covariates of 26 samples. The covariates are fixed once generated. This ends up with a new experimental setup with 500 units.

Under Gaussian designs, we compute the following estimator
\begin{equation*}
	\widehat{\tau}^c_{C} = \frac{1}{n}\sum_{i=1}^n Y_i w_C(T_i)\;,\quad w_C(t) = \frac{((t-\mu)^2/\sigma^2-1)}{\sigma^2}\;,
\end{equation*}
which is an unbiased estimator of $\tau_C^c \coloneqq \frac{1}{n}\sum \E Y_i^{''}(Z)$ based on Example \ref{ex:second_derivative}.\footnote{The subscript C denotes the convexity, whereas the superscript c denotes the continuous setting.} Hence, we implement BG and OG to test for $H_{0, g}^C$ by checking whether the computed confidence interval for $\tau_C^c$ is below zero.\footnote{In Procedure \ref{proc2}, we fit a linear model $\widehat{m}(x, t)$ based on $Y_i \sim X_i + U_i + U_iT_i^2$.} For comparison purposes, we implement a parametric approach that fits a linear regression model
\begin{equation*}
    Y_i \sim X_i + U_i + T_i + T_i^2
\end{equation*}
and applies the $t$-test for the coefficient of $T_i^2$ to check for convexity. We evaluate the parametric linear model approach (LM) under BG, OG, and 2S.

From Figure \ref{fig:monotonicity}(b), OG and BG have similar performance, and OG achieves higher power only for $b = 1.5, 2$. This is because the optimized covariance matrix for OG is numerically similar to that for BG, the identity matrix, as we will explain below. Among all methods, OG combined with LM (OG + LM) yields highest power. Note that the LM approach under the original design fails to reject convexity, as the 2S design focuses on discrete treatment values, making it difficult to probe the concave structure.

\paragraph{Estimands and Optimized Gaussian Designs.}
We conclude our numerical study by showing how different estimands lead to different structures in the optimized covariance matrix of OG. In Figure \ref{fig:corr}(a)-(b), we visualize the function $f$ in the covariate balance objective $\|X^\top f(\Sigma) X\|_{\nuc}$ for monotonicity and convexity. That is, based on Section \ref{sec:continuous}, we compute
\begin{equation*}
    f(\rho) \coloneqq f_{Y_0, w}(\rho) + f_{w}(\rho)\;, \quad \rho \in [-1, 1]\;,
\end{equation*}
where $w$ corresponds to $w_L$ and $w_C$ defined before, and $Y_0$ is the linear baseline response function \eqref{eq:baseline}. Observe that they are approximately linear and quadratic functions. In the second row, we visualize the scatter plot for the off-diagonal entries $\Sigma_{ij}$ in the optimized covariance and a pairwise covariate-similarity $X_i^\top X_j / \|X_i\|\|X_j\|$ for all $i\neq j$. In (c), the optimized design balances the covariates by assigning negative correlations to pairs of units with higher similarities. In (d), the optimized covariance  assigns a constant correlation (a negligible negative value) to all pairs of units, and hence the optimized design performs similarly as the i.i.d. Gaussian design, as seen in Figure~\ref{fig:monotonicity}(b). It is because $f^\prime(0)$ is almost zero in Figure \ref{fig:corr}(b), and thus the PGD-Gauss algorithm stops at the identity matrix, which is already a local optimizer.
\begin{figure}[t!]
    \vspace{-7mm}
    \centering
    \subfloat[Function $f$ for monotonicity]{\includegraphics[width=.35\linewidth]{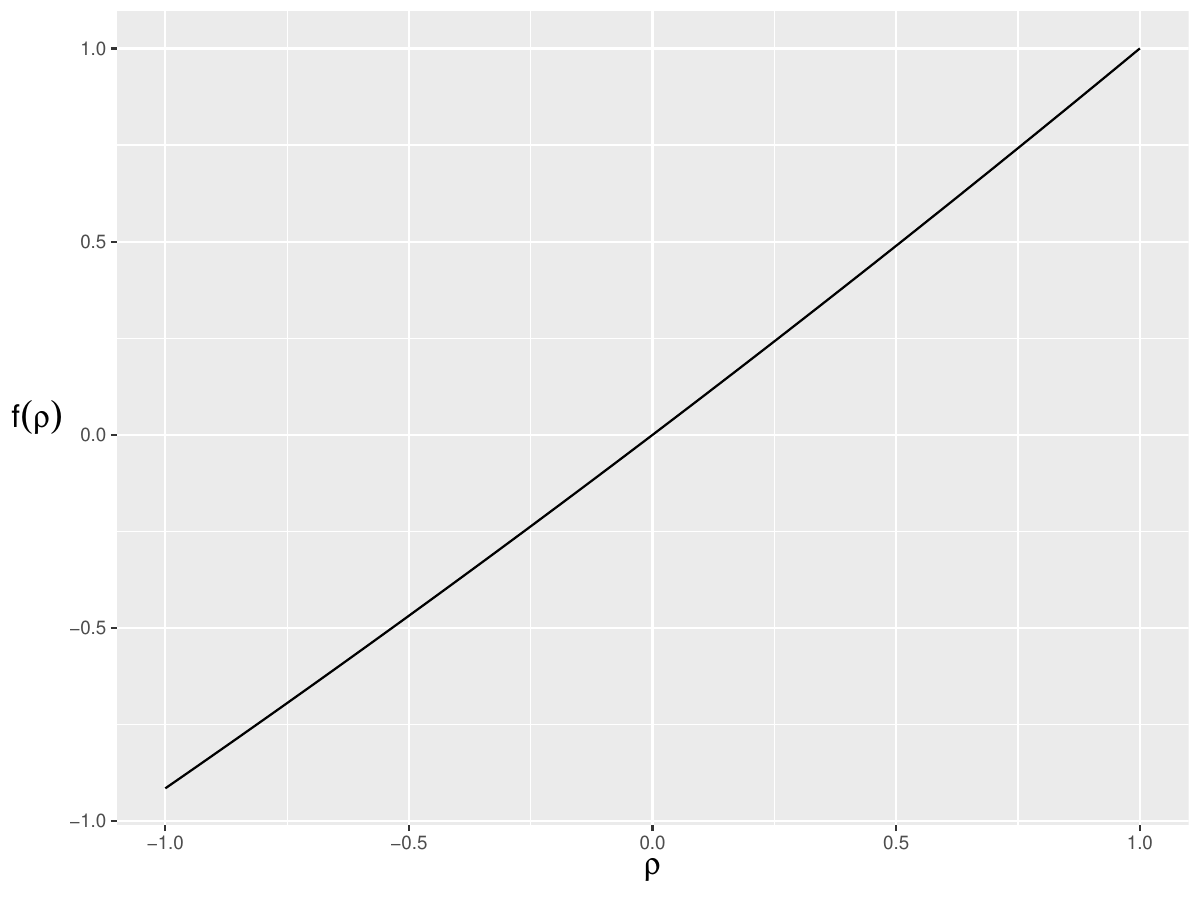}}
	\subfloat[Function $f$ for convexity]{\includegraphics[width=.35\linewidth]{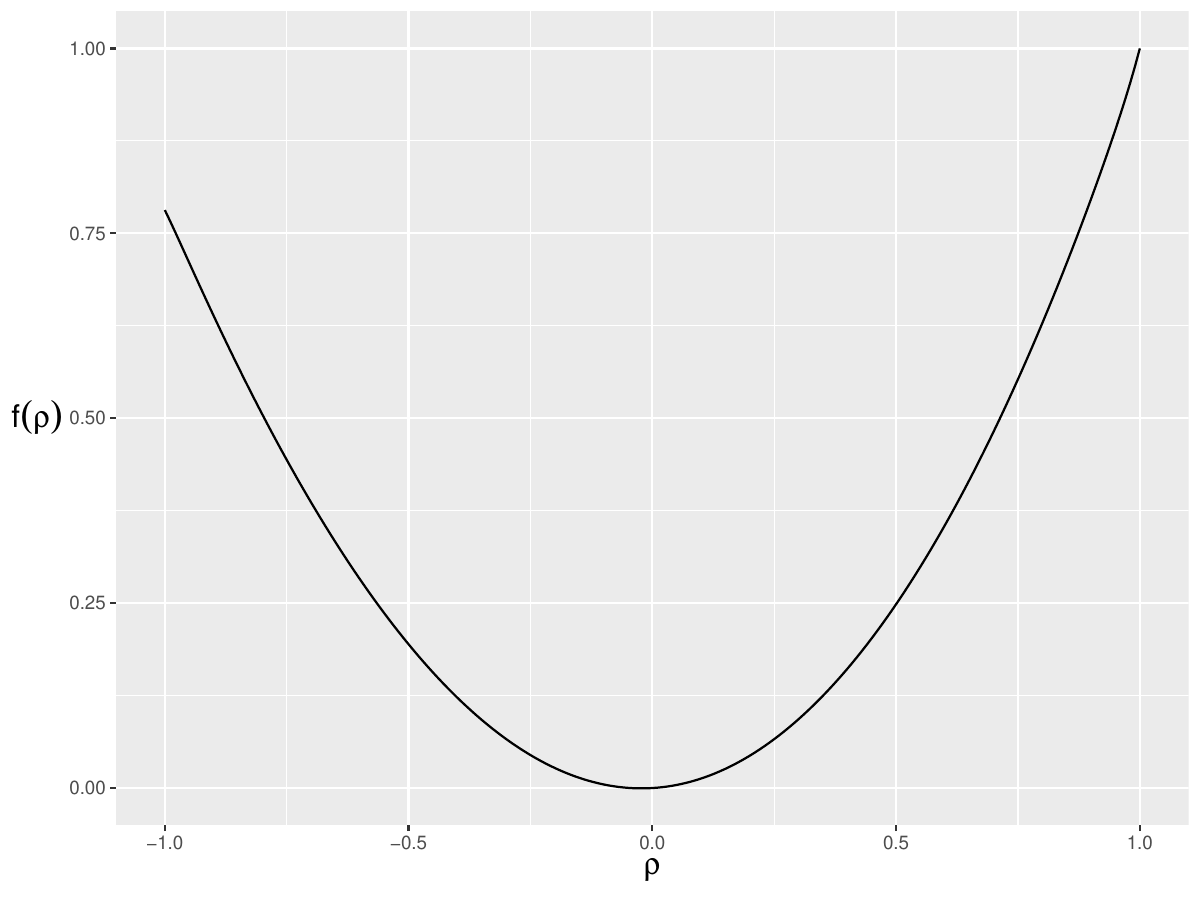}}\\
	\subfloat[Scatter plot for monotonicity]{\includegraphics[width=.35\linewidth]{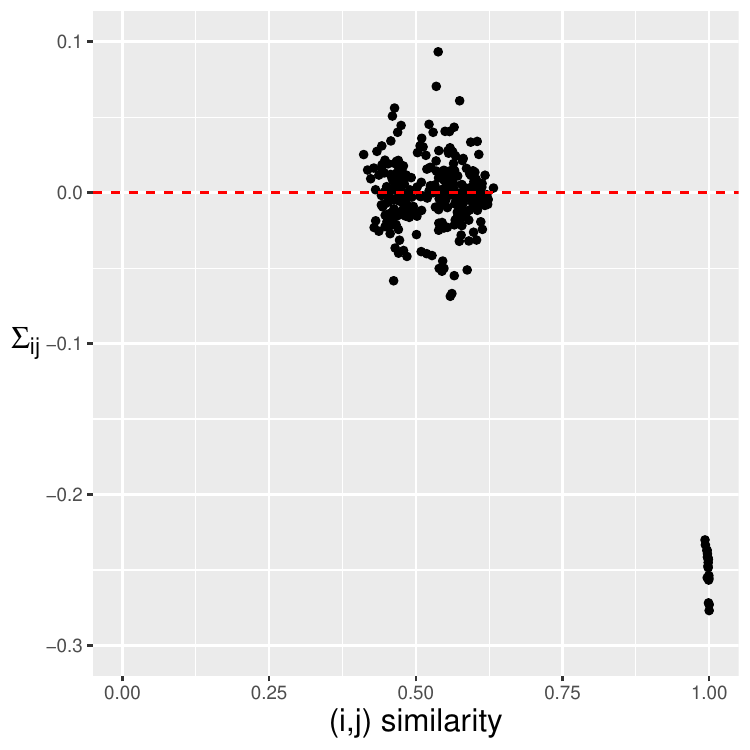}}
	\subfloat[Scatter plot for convexity]{\includegraphics[width=.35\linewidth]{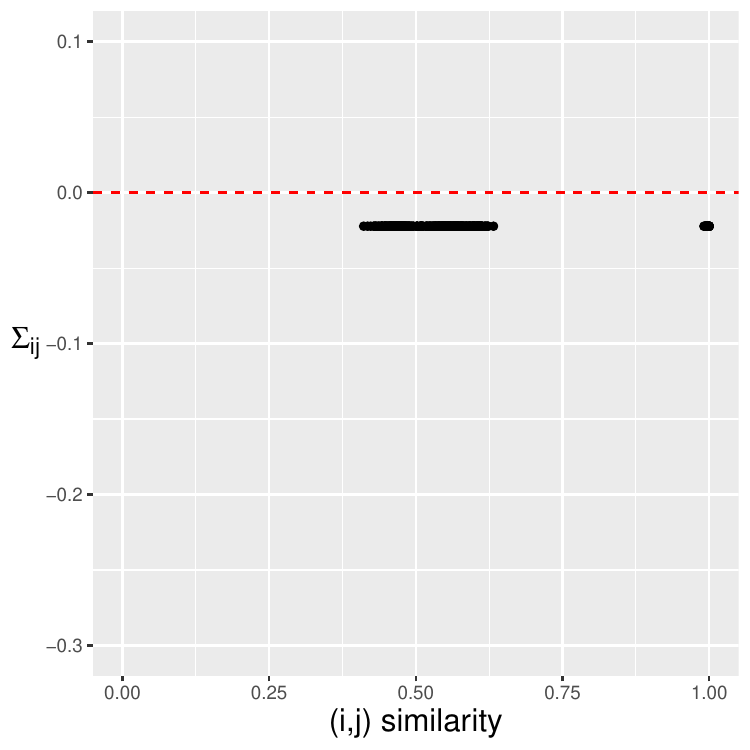}}
    \caption{Function $f$ and the correlation structure across different designs. The red dashed line indicates zero correlation, which corresponds to the i.i.d. Gaussian design. }
    \label{fig:corr}
\end{figure}

\section{Main Proofs}
Here we provide the core proofs related to Mehler's formula, asymptotic normality, and variance estimation. We also discuss inferential procedures under the continuous setting by the end of this section.
\subsection{Mehler's Formula and Related Proofs}\label{sec:proof_mehler}
Here we prove results that are based on Mehler's formula, namely, Lemma \ref{lem:mehler} and Proposition \ref{prop:f_formula_general}. Proposition \ref{prop:f_formula} is a direct result of Lemma \ref{lem:mehler} and its proof is omitted. 
\begin{proof}[Proofs of Lemma \ref{lem:mehler}]
From Mehler's formula, for any $|\rho|\le 1$, it holds that
\begin{equation*}
    p_\rho(x, y)=\sum_{m=0}^{\infty} \rho^m h_m(x) h_m(y) \phi(x) \phi(y)\;.
\end{equation*}
Therefore,
\begin{align*}
\E_{X, Y}g(X)h(Y) &= \int g(x)h(y) p_{\rho}(x,y)\dd x \dd y\\
&= \sum_{m=0}^\infty \rho^m \int g(x)h(y)h_m(x) h_m(y)\phi(x) \phi(y)\dd x \dd y\\
&= \sum_{m=0}^\infty \rho^m \int g(x)h_m(x)\phi(x) \dd x \int h(y) h_m(y) \phi(y)\dd y\\
&= \sum_{m=0}^\infty \alpha_m[g] \alpha_m[h]\rho^m\;.
\end{align*}
At the same time, by $h_0(x) = 1$ we notice
\begin{equation*}
\E g(X) = \E g(X) h_0(X) = \alpha_0[g]\;,\quad \E h(X) = \E h(X) h_0(X) = \alpha_0[h]\;.
\end{equation*}
We have
\begin{equation*}
\Cov_{X, Y}(g(X), h(Y)) = \sum_{m=0}^\infty \alpha_m[g] \alpha_m[h]\rho^m - \alpha_0[g]\alpha_0[h] = \sum_{m=1}^\infty \alpha_m[g] \alpha_m[h]\rho^m\;.
\end{equation*}
\end{proof}

\begin{proof}[Proof of Proposition \ref{prop:f_formula_general}]
By definition, for any $i\neq j$, the $(i,j)$-th entry of $\Cov_k(D)$ is 
\begin{equation*}
    \Cov(\mathbb{I}\{D_i = k\}, \mathbb{I}\{D_j = k\})\;.
\end{equation*}
Without loss of generality, we focus on $k = 2, \dots, K-1$. The extreme cases $k = 1, K$ can be proved using a similar argument. Under the Gaussianization $D_i = g(T_i)$, we have
\begin{align*}
\Cov(\mathbb{I}\{D_i = k\}, \mathbb{I}\{D_j = k\}) &= \Cov(\mathbb{I}\{T_i \in (q_{k-1}, q_k]\}, \mathbb{I}\{T_j \in (q_{k-1}, q_k]\})\\
&= \Cov(\mathbb{I}\{T_i \le q_k\} - \mathbb{I}\{T_i\le q_{k-1}\}, \mathbb{I}\{T_j \le q_k\} - \mathbb{I}\{T_j\le q_{k-1}\})\\
&= \Cov(\mathbb{I}\{T_i \le q_k\}, \mathbb{I}\{T_j \le q_k\}) + \Cov(\mathbb{I}\{T_i\le q_{k-1}\}, \mathbb{I}\{T_j\le q_{k-1}\}) \\
&-2 \Cov(\mathbb{I}\{T_i\le q_{k-1}\}, \mathbb{I}\{T_j \le q_k\})\\
&= r_{k,k}(\Sigma_{ij}) + r_{k-1. k-1}(\Sigma_{ij}) - 2 r_{k, k-1}(\Sigma_{ij})\;,
\end{align*}
where the last line follows by the definition of $r_{k,l}$ in Proposition \ref{prop:f_formula_general}. 

Then it suffices to prove \eqref{eq:corr_func}, i.e., 
\begin{equation*}
\Cov(\mathbb{I}\{X\le q_i\}, \mathbb{I}\{Y\le q_j\}) = \int_0^{\rho} \frac{1}{2\pi \sqrt{1-r^2}} \exp( -\frac{q_i^2 + q_j^2 -2r q_i q_j}{2(1-r^2)} ) \dd r\;.
\end{equation*}
Let $g(x) = \mathbb{I}\{x \le q_i\}$ and $h(x) = \mathbb{I}\{x\le q_j\}$. According to Lemma \ref{lem:mehler}, for any $|\rho| \le 1$, it holds that
\begin{equation*}
r_{ij}(\rho) = \sum_{m=1}^{\infty} \alpha_m[g] \alpha_m[h] \rho^m\;.
\end{equation*}
For $\alpha_m[g]$, we derive that
\begin{align*}
    \alpha_m[g] &= \int_{-\infty}^{q_i} h_m(x) \phi(x) \dd x \\
    &= \frac{1}{\sqrt{m!}} \int_{-\infty}^{q_i} \hermite_m(x) \phi(x) \dd x\\
    &\stackrel{\text{(i)}}{=}\frac{1}{\sqrt{m!}} \int_{-\infty}^{q_i} (-1)^m \frac{\dd^m}{\dd x^m}\phi(x) \dd x\\
    & = \frac{-1}{\sqrt{m!}} \phi(x)\hermite_{m-1}(x)\bigg|_{-\infty}^{q_i}\\
    &\stackrel{\text{(ii)}}{=} -\frac{1}{\sqrt{m!}} \phi(q_i)\hermite_{m-1}(q_i)\;.
\end{align*}
In the derivation above, (i) follows from the Definition \ref{def:hermite}, and (ii) follows from $\lim_{x\to-\infty}\phi(x) \hermite_m(x) = 0$. Hence, we have
\begin{align*}
    \alpha_m[g] &= -\frac{1}{\sqrt{m!}} \phi(q_i)\hermite_{m-1}(q_i)\;,\\
    \alpha_m[h] &= -\frac{1}{\sqrt{m!}} \phi(q_j)\hermite_{m-1}(q_j)\;,
\end{align*}
where the proof of $\alpha_m[h]$ is identical. Based on Lemma \ref{lem:mehler}, this implies 
\begin{align*}
    r_{ij}(\rho) = \sum_{m=1}^\infty \frac{1}{m!} \hermite_{m-1}(q_i) \hermite_{m-1}(q_i) \phi(q_i) \phi(q_j) \rho^m\;.
\end{align*}
Notice that 
\begin{align*}
    r_{ij}^\prime(\rho) &= \sum_{m=1}^\infty \frac{1}{(m-1)!} \hermite_{m-1}(q_i) \hermite_{m-1}(q_i) \phi(q_i) \phi(q_j) \rho^{m-1} = p_\rho(q_i, q_j)\;,\\
    r_{ij}(0) &= 0\;,
\end{align*}
where the first line follows from Mehler's formula. Then, by Newton–Leibniz theorem we obtain
\begin{equation*}
    r_{ij}(\rho) = \int_0^{\rho} p_r(q_i, q_j) \dd r =  \int_0^{\rho} \frac{1}{2\pi \sqrt{1-r^2}} \exp( -\frac{q_i^2 + q_j^2 -2r q_i q_j}{2(1-r^2)} ) \dd r\;.
\end{equation*}
\end{proof}

\subsection{Asymptotics and Inference}\label{sec:proof_normality}
In this section, we study the asymptotic properties of 
\begin{equation*}
\widehat{\tau}_k = \frac{K}{n} \sum_{i=1}^n Y_i \mathbb{I}\{D_i = k\}\;,\quad D_i = g(T_i)\;.
\end{equation*}
We prove its asymptotic normality in Theorem~\ref{thm:normality} and discuss the extensions. We defer the proof of Proposition \ref{prop:asymp_var} to Section \ref{sec:proof_supp}, as its proof follows a similar idea as those in supporting lemmas.

Our asymptotic analysis relies on H\'{a}jek's lemma \citep{lehmann1975nonparametrics}, which establishes the asymptotic equivalence between two sequences of random variables. We state below for completeness.
\begin{lemma}[H\'{a}jek's Lemma]\label{lem:hajek}
	If $(T_n - \E T_n)/\sqrt{\Var(T_n)}$ has a limit distribution $\cL$ and if 
	\begin{equation}\label{eq:hajek_condition}
		\frac{\E (T_n - S_n)^2}{\Var(T_n)} \to 0\;,
	\end{equation}
then $\Var(T_n)/\Var(S_n)\to 1$ and $(S_n - \E S_n)/\sqrt{\Var(S_n)}$ has the limit distribution $\cL$.
\end{lemma}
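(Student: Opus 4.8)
The plan is to deduce the limiting behavior of the standardized $S_n$ from that of the standardized $T_n$ by two applications of Slutsky's theorem, after first pinning down the ratio of variances. I treat $\Var(T_n)>0$ as implicit, since otherwise the standardized $T_n$ appearing in the hypothesis would be undefined. The first move is to split the hypothesis into a variance contribution and a mean contribution: because
\[
\E(T_n-S_n)^2 = \Var(T_n-S_n) + \bigl(\E T_n-\E S_n\bigr)^2,
\]
with both summands nonnegative, condition \eqref{eq:hajek_condition} forces simultaneously $\Var(T_n-S_n)/\Var(T_n)\to 0$ and $(\E T_n-\E S_n)^2/\Var(T_n)\to 0$.

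Next I would establish the variance ratio. Viewing $X\mapsto\sqrt{\Var(X)}$ as a seminorm on centered square-integrable variables, the reverse triangle inequality gives $\bigl|\sqrt{\Var(S_n)}-\sqrt{\Var(T_n)}\bigr|\le\sqrt{\Var(T_n-S_n)}$; dividing by $\sqrt{\Var(T_n)}$ and invoking the previous step yields $\sqrt{\Var(S_n)}/\sqrt{\Var(T_n)}\to 1$, hence $\Var(T_n)/\Var(S_n)\to 1$, which is the first assertion.

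With the variance ratio in hand, I would transfer the limit distribution. By Chebyshev's inequality the $L^2$ bound $\Var(T_n-S_n)/\Var(T_n)\to 0$ together with the vanishing normalized mean gap gives $(T_n-S_n)/\sqrt{\Var(T_n)}\stackrel{P}{\to}0$. Writing
\[
\frac{S_n-\E T_n}{\sqrt{\Var(T_n)}}=\frac{T_n-\E T_n}{\sqrt{\Var(T_n)}}-\frac{T_n-S_n}{\sqrt{\Var(T_n)}},
\]
Slutsky's theorem shows $(S_n-\E T_n)/\sqrt{\Var(T_n)}\stackrel{d}{\to}\cL$. To convert this into the intrinsic standardization of $S_n$, I decompose
\[
\frac{S_n-\E S_n}{\sqrt{\Var(S_n)}}=\frac{S_n-\E T_n}{\sqrt{\Var(T_n)}}\cdot\frac{\sqrt{\Var(T_n)}}{\sqrt{\Var(S_n)}}+\frac{\E T_n-\E S_n}{\sqrt{\Var(S_n)}},
\]
where the first factor converges in distribution to $\cL$, the deterministic ratio tends to $1$, and the trailing term equals $\tfrac{\E T_n-\E S_n}{\sqrt{\Var(T_n)}}\cdot\tfrac{\sqrt{\Var(T_n)}}{\sqrt{\Var(S_n)}}\to 0$; a second application of Slutsky's theorem then yields $(S_n-\E S_n)/\sqrt{\Var(S_n)}\stackrel{d}{\to}\cL$.

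The calculations are routine, so the only real subtlety is that the hypothesis is phrased through the raw second moment $\E(T_n-S_n)^2$ rather than a pure variance. This forces me to isolate the mean-difference term at the outset and, crucially, to recheck in the final decomposition that this term still vanishes once it is renormalized by $\sqrt{\Var(S_n)}$ instead of $\sqrt{\Var(T_n)}$ — which is exactly where the variance-ratio convergence from the second step becomes indispensable.
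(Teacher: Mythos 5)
Your proposal is correct, and there is nothing to compare it against within the paper itself: the paper states Lemma \ref{lem:hajek} without proof, citing \cite{lehmann1975nonparametrics}. Your argument --- splitting $\E(T_n-S_n)^2$ into $\Var(T_n-S_n)$ plus the squared mean gap, obtaining $\Var(T_n)/\Var(S_n)\to 1$ from the reverse triangle inequality for the $L^2$-seminorm of centered variables, and then transferring the limit law via Chebyshev and two applications of Slutsky's theorem (with the correct recheck that the mean gap still vanishes after renormalization by $\sqrt{\Var(S_n)}$) --- is precisely the standard proof of this classical result.
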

In words, $S_n$ and $T_n$ share the same asymptotic distribution if the second moment of their difference is asymptotically smaller than $\Var(T_n)$.

\subsubsection{Proof of Theorem \ref{thm:normality}}\label{sec:proof_thm1}
The crux of the proof is to establish an asymptotic equivalence between $\widehat{\tau}_k$ (hereafter denoted as $\widehat{\tau}^{opt}$) under $T\sim\cN(0,\Sigma_{\eta})$ and an ancillary estimator $\widehat{\tau}^{iid}$ under $T\sim\cN(0,I_n)$. To this end, we proceed with the following steps. 
\begin{enumerate}
    \item Construct a H\'{a}jek coupling $(\widehat{\tau}^{iid}, \widehat{\tau}^{opt})$.
    \item Establish the aforementioned asymptotic equivalence of $(\widehat{\tau}^{iid}, \widehat{\tau}^{opt})$ using H\'{a}jek's lemma.
    \item Prove the asymptotic normality for $\widehat{\tau}^{iid}$. 
\end{enumerate}

\noindent\underline{Step 1. Construct H\'{a}jek's coupling.}
To construct $(\widehat{\tau}^{iid}, \widehat{\tau}^{opt})$, we first define
\begin{align*}
T^{iid} \sim \cN (0, I_n)\;, \quad T^{opt }= \Sigma_{\eta}^{1/2} T^{iid}\;.
\end{align*}
One can easily check that $T^{opt}\sim\cN(0, \Sigma_{\eta})$ and $\Cov(T^{iid}, T^{opt}) = \Sigma_{\eta}^{1/2}$. Then, define
\begin{align*}
& \widehat{\tau}^{iid}=\frac{K}{n}\sum_{i=1}^n \mathbb{I}\{g(T_i^{iid})=k\} \tilde{Y}_i(k)+\frac{1}{n} \sum_{i=1}^n \left(Y_i(k)-\widetilde{Y}_i(k)\right)\;, \\
& \widehat{\tau}^{{opt}}=\frac{K}{n} \sum_{i=1}^n \mathbb{I}\{g(T_i^{opt})=k\} {Y}_i(k)\;.
\end{align*}
$\widehat{\tau}^{iid}$ matches the distribution of $\widehat{\tau}^{opt}$, since 
\begin{equation*}
    \E \widehat{\tau}^{iid} = \E \widehat{\tau}^{opt} = \tau_k\;.
\end{equation*}
More importantly, their variances also match, since
\begin{align*}
    \Var(\widehat{\tau}^{iid}) &= \Var\left(\frac{1}{n}\sum_{i=1}^n K \mathbb{I}\{g(T_i^{iid})=k\} \tilde{Y}_i(k)\right) \\
    &\stackrel{\text{(i)}}{=} \frac{K^2}{n^2} \tilde{Y}(k)^\top \Cov(D_k^{iid}) \tilde{Y}(k)\\
    &\stackrel{\text{(ii)}}{=} \frac{K^2}{n^2} \tilde{Y}(k)^\top f_k(I_n) \tilde{Y}(k)\\
    &\stackrel{\text{(iii)}}{=} \frac{K^2}{n^2} Y(k)^\top f_k(\Sigma_\eta)^{1/2} f_k(I_n)^{-1/2} f_k(I_n) f_k(I_n)^{-1/2} f_k(\Sigma_\eta) Y(k)\\
    &= \frac{K^2}{n^2} Y(k)^\top f_k(\Sigma_\eta) Y(k) = \Var(\widehat{\tau}^{opt})\;.
\end{align*}
In (i), $D_k^{iid}$ denotes the treatment vector $(\mathbb{I}\{g(T_1^{iid})=k\}, \dots, \mathbb{I}\{g(T_n^{iid})=k\})$; (ii) follows from Mehler's formula and Proposition \ref{prop:f_formula_general}; (iii) follows from the definition of $\tilde{Y}(k)$.

\noindent\underline{Step 2. Establish asymptotic equivalence.}
Based on H\'{a}jek's Lemma (Lemma \ref{lem:hajek}), we need to verify \eqref{eq:hajek_condition} for $(\widehat{\tau}^{iid}, \widehat{\tau}^{opt})$, that is
\begin{equation*}
    \frac{\E (\widehat{\tau}^{iid} - \widehat{\tau}^{opt})^2}{\Var(\widehat{\tau}^{iid})} \to 0\;.
\end{equation*}
Observe that
\begin{equation*}
    \Var(\widehat{\tau}^{iid}) = \frac{K-1}{n^2} \sum_{i=1}^n \tilde{Y}_i(k)^2\;.
\end{equation*}
Under the assumptions in Theorem \ref{thm:normality}, $n\Var(\widehat{\tau}^{iid})$ converges to a positive limit, and thus $\Var(\widehat{\tau}^{iid})\asymp 1/n$. Then, it suffices to verify 
\begin{equation}\label{eq:hajek_n}
    n \E (\widehat{\tau}^{iid} - \widehat{\tau}^{opt})^2 \to 0\;.
\end{equation}
Notice that 
\begin{align*}
\E (\widehat{\tau}^{iid} - \widehat{\tau}^{opt})^2 &\stackrel{\text{(i)}}{=} \Var (\widehat{\tau}^{iid} - \widehat{\tau}^{opt})\\
&= \frac{K^2}{n^2} \Var \left(\sum_{i=1}^n (\mathbb{I}\{g(T_i^{iid}) = k\} \tilde{Y}_i(k) - \mathbb{I}\{g(T_i^{opt}) = k\} Y_i(k))\right)\\
&= \frac{K^2}{n^2} \left( \tilde{Y}(k)^\top \Cov(D_k^{iid}) \tilde{Y}(k) - 2 \tilde{Y}(k)^\top \Cov(D_k^{iid}, D_k^{opt}) Y(k) + Y(k)^\top \Cov(D_k^{opt}) Y(k) \right)\\
&\stackrel{\text{(ii)}}{=} \frac{K^2}{n^2} \Bigl( \tilde{Y}(k)^\top f_k(I_n) \tilde{Y}(k) - 2\tilde{Y}(k)^\top f_k(\Sigma_{\eta}^{1/2}) Y(k) + Y(k)^\top f_k(\Sigma_{\eta}) Y(k) \Bigr)\;,
\end{align*}
where (i) follows from $\E\widehat{\tau}^{iid}=\E \widehat{\tau}^{opt}$, and (ii) follows from Proposition \ref{prop:f_formula_general}. 

By Proposition \ref{prop:f_formula_general}, it is easy to verify that $f_k(0) = 0$ and $f_k(1) = (K-1)/K^2$. From now on, without loss of generality, we may rescale $f_k$ such that $f_k(0) = 0$ and $f_k(1) = 1$. This does not affect the order of the quantity above, since $K$ is a fixed constant. After rescaling, we have $f_k(I_n) = I_n$, which simplifies the derivation below. By definition of $\tilde{Y}(k)$, we have
\begin{align}
\E (\widehat{\tau}^{iid} - \widehat{\tau}^{opt})^2 
&= \frac{K^2}{n^2} {Y}(k)^\top \left( 2 f_k(\Sigma_{\eta}) - 2 f_k(\Sigma_\eta)^{1/2} f_k(\Sigma_{\eta}^{1/2}) \right) {Y}(k)^\top\;,\nonumber\\
&\le \frac{2 K^2}{n^2} \|f_k(\Sigma_{\eta}) - f_k(\Sigma_\eta)^{1/2} f_k(\Sigma_{\eta}^{1/2})\|_{\op} \|Y(k)\|^2\nonumber\\
 &\stackrel{\text{(i)}}{\le} \frac{2 M K^2}{n} \|f_k(\Sigma_{\eta})\|_{\op}^{1/2} \|  f_k(\Sigma_\eta)^{1/2} - f_k(\Sigma_{\eta}^{1/2})\|_{\op} \;,\label{eq:norm_bound_fk}
\end{align}
where (i) follows from $\|Y(k)\|^2\le n M$. 

To analyze the operator norm above, we first give a decomposition of $\Sigma_{\eta}$. 
\begin{lemma}\label{lem:N_op}
Suppose Assumption \ref{asmp:stepsize} holds. In the one-step PGD-Gauss, the obtained solution $\Sigma_{\eta}$ satisfies a decomposition
\begin{equation*}
    \Sigma_{\eta} = I_n + \eta N\;,
\end{equation*}
where $N$ is a symmetric matrix with zero diagonal values, and 
$$
\|N\|_{\op} = O(\|XX^\top - I_n\|_{\op} + \eta \|XX^\top - I_n\|_{\op}^2)\;.
$$
\end{lemma}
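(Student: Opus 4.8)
The plan is to carry out the single PGD-Gauss update in closed form and then bound the operator norm of the resulting off-diagonal perturbation term by term. First I would record the initialization: with $\Sigma^1 = I_n$ the factorization in Algorithm \ref{alg:pgd} may be taken as $V^1 = I_n$, whose rows $v_i^1 = e_i$ have unit norm. Evaluating the nuclear-norm gradient at this point, I would use that $f'(V^1 (V^1)^\top) = f'(I_n)$ equals, elementwise, $f'(1)$ on the diagonal and $f'(0)$ off the diagonal, while $XX^\top - \mathrm{diag}(XX^\top)$ vanishes on the diagonal; combined with the convention $0 \times f'(\pm 1) = 0$, this annihilates the (possibly singular) diagonal contribution. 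Since Assumption \ref{asmp:stepsize} gives $\|X_i\| = 1$ and hence $\mathrm{diag}(XX^\top) = I_n$, the gradient collapses to $G \coloneqq \nabla l_{\nuc}(I_n) = f'(0)(XX^\top - I_n)$, a symmetric matrix with zero diagonal and $\|G\|_{\op} = |f'(0)|\,\|XX^\top - I_n\|_{\op}$.

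Next I would compute the update together with its row normalization. The unnormalized factor is $\widetilde{V} = (I_n - \eta G)V^1 = I_n - \eta G$, so, using symmetry of $G$, one has $\widetilde{V}\widetilde{V}^\top = I_n - 2\eta G + \eta^2 G^2$. The key observation is that, because $G$ has zero diagonal, the $i$-th row $\widetilde{v}_i = e_i - \eta g_i$ satisfies $\|\widetilde{v}_i\|^2 = 1 + \eta^2 \|g_i\|^2$: the order-$\eta$ cross term $\langle e_i, g_i\rangle = G_{ii}$ vanishes, so renormalization is only a second-order effect. Writing $D \coloneqq \mathrm{diag}(\|\widetilde{v}_1\|, \dots, \|\widetilde{v}_n\|)$, the normalized factor is $V^2 = D^{-1}\widetilde{V}$, whence $\Sigma_\eta = V^2 (V^2)^\top = D^{-1}(I_n - 2\eta G + \eta^2 G^2)D^{-1}$. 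This is symmetric with unit diagonal, so $N \coloneqq (\Sigma_\eta - I_n)/\eta$ is symmetric with zero diagonal, yielding the asserted decomposition $\Sigma_\eta = I_n + \eta N$.

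For the norm bound I would first control $D$. Each row of the symmetric matrix $G$ satisfies $\|g_i\| = \|Ge_i\| \le \|G\|_{\op} = O(\|XX^\top - I_n\|_{\op})$, so Assumption \ref{asmp:stepsize} ($\eta\|XX^\top - I_n\|_{\op} = o(1)$) gives $\max_i \eta^2\|g_i\|^2 = o(1)$; consequently $\|D^{-1}\|_{\op} \le 1$ and $\|D^{-2} - I_n\|_{\op} = \max_i \eta^2\|g_i\|^2/(1+\eta^2\|g_i\|^2) = O(\eta^2\|XX^\top - I_n\|_{\op}^2)$. I would then expand $\Sigma_\eta - I_n = (D^{-2} - I_n) - 2\eta D^{-1}G D^{-1} + \eta^2 D^{-1}G^2 D^{-1}$ and apply the triangle inequality with these estimates, together with $\|G\|_{\op} = O(\|XX^\top - I_n\|_{\op})$ and $\|G^2\|_{\op} = O(\|XX^\top - I_n\|_{\op}^2)$. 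The three terms are bounded by $O(\eta^2\|XX^\top-I_n\|_{\op}^2)$, $O(\eta\|XX^\top-I_n\|_{\op})$, and $O(\eta^2\|XX^\top-I_n\|_{\op}^2)$ respectively; dividing by $\eta$ gives $\|N\|_{\op} = O(\|XX^\top - I_n\|_{\op} + \eta\|XX^\top - I_n\|_{\op}^2)$, as claimed.

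The only genuinely delicate point is the renormalization step: a priori one might fear that dividing by the row norms contributes at order $\eta$ and spoils the leading term $\|XX^\top - I_n\|_{\op}$. The plan handles this by exploiting $G_{ii} = 0$, which forces the normalization to enter only through $D^{-2} - I_n$ at order $\eta^2\|XX^\top - I_n\|_{\op}^2$, so that everything reduces to a routine operator-norm expansion. I would also note in passing that the finiteness of $f'(0)$ (in contrast to the blow-up of $f'$ at $\pm 1$ highlighted in Remark \ref{rmk:f}) is precisely what makes $G$, and hence the entire perturbation, well-defined.
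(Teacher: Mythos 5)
Your proposal is correct and follows essentially the same route as the paper: both compute the one-step update in closed form as $\Sigma_\eta = D^{-1}(I_n - 2\eta f_k'(0)(XX^\top - I_n) + \eta^2 (f_k'(0))^2(XX^\top - I_n)^2)D^{-1}$, exploit that the gradient has zero diagonal (so the row-norm correction $D$ deviates from $I_n$ only at order $\eta^2\|XX^\top - I_n\|_{\op}^2$), and bound the three resulting terms by the triangle inequality. The only cosmetic difference is that you absorb the diagonal correction directly via $D^{-2}-I_n$, whereas the paper writes $N = (M-\mathrm{diag}(M))/\eta$ and bounds $\|\mathrm{diag}(M)\|_{\op}$ separately; these are equivalent.
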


Next we introduce the following result based on Taylor expansions.
\begin{lemma}\label{lem:matrix_expansion}
For $\Sigma\in\cE$, define $\Delta = \Sigma - I_n$, the residual matrix with zero diagonal values. Suppose $\|\Delta\|_{\op} = o(1)$.
We have
\begin{align}
f_k(\Sigma) &= I_n + f_k^\prime(0) \Delta + R_1\;,\quad \|R_1\|_{\op} = O(\|\Delta\|_{\op}^2)\;,\quad \|f_k(\Sigma)\|_{\op} = O(1)\nonumber\\
f_k(\Sigma)^{1/2} &= I_n + \frac{1}{2}f_k^\prime(0) \Delta + R_2\;,\quad \|R_2\|_{\op} = O(\|\Delta\|_{\op}^2)\;,\nonumber\\
f_k(\Sigma^{1/2}) &= I_n + \frac{1}{2}f_k^\prime(0) \Delta + R_3\;, \quad \|R_3\|_{\op} = o(1) + O(\|\Delta\|_{\op}^2)\;.\nonumber
\end{align}
Moreover, the operator norm of $R_1, R_2, R_3$ are all of order $o(1)$ since $\|\Delta\|_{\op} = o(1)$.
\end{lemma}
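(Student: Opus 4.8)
The plan is to reduce all three expansions to two elementary facts about symmetric matrices together with a scalar Taylor expansion of $f_k$. The two matrix facts are: (i) for any symmetric $M$, $\|M\|_{\op}\le\max_i\sum_j|M_{ij}|$, i.e.\ the spectral norm is controlled by the maximal absolute row sum; and (ii) for any symmetric $\Delta$, $\sum_j\Delta_{ij}^2=(\Delta^2)_{ii}\le\|\Delta\|_{\op}^2$, so that $\max_i\sum_j\Delta_{ij}^2\le\|\Delta\|_{\op}^2$. Throughout I use $|\Delta_{ij}|\le\|\Delta\|_{\op}=o(1)$, the fact that $f_k$ is real-analytic near $0$ with $f_k(0)=0$ (Proposition~\ref{prop:f_formula_general}), hence $f_k(x)=f_k'(0)x+\rho_k(x)$ with $|\rho_k(x)|\le C x^2$ on a neighborhood of $0$, and the fact that $f_k$ is continuous on $[-1,1]$ with $f_k(1)=1$ (after the rescaling used in the proof of Theorem~\ref{thm:normality}).

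For the first expansion, applying $f_k$ entrywise to $\Sigma=I_n+\Delta$ gives $f_k(1)=1$ on the diagonal and $f_k(\Delta_{ij})=f_k'(0)\Delta_{ij}+\rho_k(\Delta_{ij})$ off it, so $f_k(\Sigma)=I_n+f_k'(0)\Delta+R_1$ with $R_1$ supported off the diagonal and $|(R_1)_{ij}|\le C\Delta_{ij}^2$. By (i) and (ii),
\[
\|R_1\|_{\op}\le\max_i\sum_j|(R_1)_{ij}|\le C\max_i\sum_j\Delta_{ij}^2\le C\|\Delta\|_{\op}^2,
\]
and then $\|f_k(\Sigma)\|_{\op}\le 1+|f_k'(0)|\,\|\Delta\|_{\op}+\|R_1\|_{\op}=O(1)$. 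For the second expansion, write the (symmetric, PSD) matrix $f_k(\Sigma)=I_n+E$ with $E=f_k'(0)\Delta+R_1$ and $\|E\|_{\op}=O(\|\Delta\|_{\op})=o(1)$. Since $|\sqrt{1+x}-1-\tfrac12 x|\le C'x^2$ for $|x|\le\tfrac12$, functional calculus on the eigenvalues of $E$ yields $(I_n+E)^{1/2}=I_n+\tfrac12 E+O(\|E\|_{\op}^2)$ in operator norm, so $f_k(\Sigma)^{1/2}=I_n+\tfrac12 f_k'(0)\Delta+R_2$ with $R_2=\tfrac12 R_1+O(\|E\|_{\op}^2)$ and $\|R_2\|_{\op}=O(\|\Delta\|_{\op}^2)$.

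The third expansion, $f_k(\Sigma^{1/2})$ with $f_k$ applied entrywise to the matrix square root, is where the two roles of ``square root'' differ. The same square-root expansion applied to $\Sigma=I_n+\Delta$ gives $\Sigma^{1/2}=I_n+\tfrac12\Delta+S$ with $\|S\|_{\op}=O(\|\Delta\|_{\op}^2)$; crucially $\Sigma^{1/2}$ no longer has unit diagonal, with $(\Sigma^{1/2})_{ii}=1+S_{ii}\le 1$ and $|S_{ii}|\le\|S\|_{\op}=o(1)$, the upper bound following from Cauchy--Schwarz on the spectral decomposition of $\Sigma^{1/2}$. Applying $f_k$ entrywise and splitting into off-diagonal and diagonal parts: off the diagonal, $(\Sigma^{1/2})_{ij}=\tfrac12\Delta_{ij}+S_{ij}$ is small, so $f_k((\Sigma^{1/2})_{ij})=\tfrac12 f_k'(0)\Delta_{ij}+f_k'(0)S_{ij}+\rho_k((\Sigma^{1/2})_{ij})$, where both the $f_k'(0)S$ term and the $\rho_k$ term have operator norm $O(\|\Delta\|_{\op}^2)$ by (i)--(ii) and $\sum_j S_{ij}^2=(S^2)_{ii}\le\|S\|_{\op}^2$; on the diagonal the contribution is $\mathrm{diag}(\delta_1,\dots,\delta_n)$ with $\delta_i=f_k(1+S_{ii})-1$. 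Collecting terms gives $f_k(\Sigma^{1/2})=I_n+\tfrac12 f_k'(0)\Delta+R_3$ with $\|R_3\|_{\op}\le\max_i|\delta_i|+O(\|\Delta\|_{\op}^2)$.

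The main obstacle is precisely the diagonal term $\max_i|\delta_i|$, which is where the bound degrades from $O(\|\Delta\|_{\op}^2)$ to $o(1)$. Because $f_k'$ blows up at the endpoint $1$ (Remark~\ref{rmk:f}) while $\Sigma^{1/2}$ fails to have unit diagonal, I cannot linearize $f_k$ at $1$ and must instead use uniform continuity of $f_k$ on the compact interval $[-1,1]$. Since $\max_i|S_{ii}|\le\|S\|_{\op}=O(\|\Delta\|_{\op}^2)\to 0$ and $1+S_{ii}\le 1$ stays in the domain, uniform continuity gives $\max_i|\delta_i|=\max_i|f_k(1+S_{ii})-f_k(1)|=o(1)$, hence $\|R_3\|_{\op}=o(1)+O(\|\Delta\|_{\op}^2)$. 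Finally, since $\|\Delta\|_{\op}=o(1)$, every remainder is of order $o(1)$, completing the proof.
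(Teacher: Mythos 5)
Your proposal is correct and follows essentially the same route as the paper's proof: entrywise scalar Taylor expansion of $f_k$ at $0$ for the off-diagonal entries combined with the row-sum bound $\|M\|_{\op}\le\max_i\sum_j|M_{ij}|$ and $\sum_j\Delta_{ij}^2=(\Delta^2)_{ii}\le\|\Delta\|_{\op}^2$, the spectral expansion of the matrix square root for the second claim, and for the third claim the same diagonal/off-diagonal split of $f_k(\Sigma^{1/2})$ with the diagonal term $f_k(1+S_{ii})-1$ controlled only by continuity of $f_k$ at $1$ (hence the degraded $o(1)$ rate). The only cosmetic imprecision is invoking (i)--(ii) for the linear term $f_k^\prime(0)(S-\mathrm{diag}(S))$, which is more directly bounded in operator norm by $2|f_k^\prime(0)|\,\|S\|_{\op}=O(\|\Delta\|_{\op}^2)$, as the paper does.
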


Now, we utilize Lemma \ref{lem:N_op} and Lemma \ref{lem:matrix_expansion} to verify the H\'{a}jek condition \eqref{eq:hajek_n}. Based on Lemma \ref{lem:N_op}, we have
\begin{align*}
\Sigma_{\eta} &= I_n + \Delta\;,\quad \Delta = \eta N\;, \\
\|N\|_{\op} &= O(\|XX^\top-I_n\|_{\op} + \eta \|XX^\top-I_n\|_{\op}^2)\;,\\
\Rightarrow \|\Delta\|_{\op} &= O(\eta \|XX^\top-I_n\|_{\op} + \eta^2 \|XX^\top - I_n\|_{\op}^2)\;,
\end{align*}
where $N$ is a symmetric matrix with zero diagonal values. Under Assumption \ref{asmp:stepsize} that $\eta \|XX^\top - I_n\|_{\op} = o(1)$, one can verify that
\begin{equation*}
\|\Delta\|_{\op} = o(1)\;.
\end{equation*}
Thus the condition required in Lemma \ref{lem:matrix_expansion} is satisfied. We apply Lemma \ref{lem:matrix_expansion} to obtain
\begin{equation}\label{eq:matrixnorm}
\begin{aligned}
\|f_k(\Sigma_\eta)\|_{\op} &= O(1)\;,\\
f_k(\Sigma_{\eta})^{1/2} &= I_n + \frac{1}{2}f_k^\prime(0) \eta N + R_2\;,\quad \|R_2\|_{\op} = o(1)\;,\\
f_k(\Sigma_{\eta}^{1/2}) &= I_n + \frac{1}{2}f_k^\prime(0) \eta N + R_3\;, \quad \|R_3\|_{\op} = o(1)\;.
\end{aligned} 
\end{equation}


By applying Equations \eqref{eq:matrixnorm} to \eqref{eq:norm_bound_fk}, we obtain
\begin{equation*}
    \E (\widehat{\tau}^{iid} - \widehat{\tau}^{opt})^2 = O\left(\frac{1}{n} \|f_k(\Sigma_{\eta})\|_{\op}^{1/2} \|  f_k(\Sigma_\eta)^{1/2} - f_k(\Sigma_{\eta}^{1/2})\|_{\op}\right) = o\left(\frac{1}{n}\right) \;.
\end{equation*}
Therefore, condition \eqref{eq:hajek_n} holds and one can apply H\'{a}jek's Lemma (Lemma \ref{lem:hajek}) to obtain that
\begin{equation*}
    \frac{\widehat{\tau}^{opt} - \tau_k}{\sqrt{\Var(\widehat{\tau}^{opt})}}
\end{equation*}
has the same asymptotic distribution as 
\begin{equation*}
    \frac{\widehat{\tau}^{iid} - \tau_k}{\sqrt{\Var(\widehat{\tau}^{iid})}}\;.
\end{equation*}

\noindent\underline{Step 3. Asymptotic normality for $\widehat{\tau}^{iid}$.}
We define
\begin{equation*}
    X_{ni} = (K\mathbb{I}\{g(T_i^{iid}) = k\} - 1) \tilde{Y}_i(k)\;,\quad S_n = \sum_{i=1}^n X_{ni}\;.
\end{equation*}
It is then easy to verify that 
\begin{align*}
    \widehat{\tau}^{iid} - \tau_k &= \frac{1}{n} S_n\;,\quad 
    \Var(\widehat{\tau}^{iid}) = \Var(\frac{1}{n}S_n)\;.
\end{align*}
Therefore, it suffices to derive the asymptotic distribution for $S_n$. The Lindeberg condition requires that for any $\eps>0$, 
\begin{equation*}
    \frac{1}{\Var(S_n)} \sum_{i=1}^n \E X_{ni}^2 \mathbb{I}\{X_{ni}^2\ge \eps \Var(S_n)\} \to 0\;.
\end{equation*}
Note that
\begin{equation*}
    \Var(S_n) = \sum_{i=1}^n \Var(X_{ni}) = (K-1) \sum_{i=1}^n \tilde{Y}_i^2(k)\;.
\end{equation*}
Under Condition 2 in Theorem \ref{thm:normality}, $\Var(S_n)$ is of order $n$. Hence, 
\begin{align*}
    \frac{\max_i X_{ni}^2}{\Var(S_n)} &\le \frac{\max_i (K-1)\tilde{Y}_{i}^2(k)}{\Var(S_n)}
    \asymp \frac{\max_i \tilde{Y}_{i}^2(k)}{n} = o(1)\;.
\end{align*}
The last equality follows from Condition 1. This suggests that all the summands in the Lindeberg condition become zero for large $n$. Therefore, the Lindeberg condition is satisfied, and we have
\begin{equation*}
    \frac{S_n}{\sqrt{\Var(S_n)}} \stackrel{d}{\to} \cN(0, 1)\;.
\end{equation*}
Note that $\Var(S_n) = (K-1)\sum Y_i^2(k)$. We have
\begin{equation*}
    \sqrt{n}(\widehat{\tau}^{iid} - \tau_k) \stackrel{d}{\to} \cN(0, \lim_{n}\frac{K-1}{n}\sum_i \tilde{Y}_i^2(k))\;.
\end{equation*}
This completes the proof of Theorem \ref{thm:normality}.

\subsubsection{Generalization to Multi-Step PGD-Gauss Solutions}
Here, we discuss a generalization of Theorem \ref{thm:normality} to multi-step PGD-Gauss solutions. Specifically, the proof of Theorem \ref{thm:normality} indicates a more general result below.
\begin{corollary}\label{cor:normality}
Consider a Gaussianization $T\sim \cN(0, \Sigma)$. Suppose that $\Delta\coloneqq \Sigma - I_n$ satisfies $\|\Delta\|_{\op} = o(1)$, and that Conditions 1-3 in Theorem \ref{thm:normality} hold. Then, we have
\begin{equation*}
\sqrt{n}\left(\widehat{\tau}_k-\tau_k\right) \stackrel{d}{\to} \cN\left(0, \lim_{n\to\infty}\frac{K-1}{n} \|\tilde{Y}(k)\|^2\right)\;.
\end{equation*}
\end{corollary}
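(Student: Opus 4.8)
The plan is to reproduce, essentially verbatim, the three-step Hájek-coupling argument of Theorem~\ref{thm:normality} (Section~\ref{sec:proof_thm1}), with the general $\Sigma$ playing the role of the one-step solution $\Sigma_{\eta}$, where now $\tilde{Y}(k)\coloneqq f_k(I_n)^{-1/2}f_k(\Sigma)^{1/2}Y(k)$. The guiding observation is that the only place the specific one-step form of $\Sigma_{\eta}$ entered that proof was through Lemma~\ref{lem:N_op}, whose sole output was the bound $\|\Delta\|_{\op}=o(1)$; since this bound is now promoted to a direct hypothesis, every other step transfers without change.

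First I would set up the coupling as before: let $T^{iid}\sim\cN(0,I_n)$ and $T^{opt}=\Sigma^{1/2}T^{iid}$, so $T^{opt}\sim\cN(0,\Sigma)$ and $\Cov(T^{iid},T^{opt})=\Sigma^{1/2}$, and define $\widehat{\tau}^{iid}$ and $\widehat{\tau}^{opt}$ exactly as in Step~1. The identities $\E\widehat{\tau}^{iid}=\E\widehat{\tau}^{opt}=\tau_k$ and $\Var(\widehat{\tau}^{iid})=\Var(\widehat{\tau}^{opt})$ follow from the same computation, since that computation uses only Proposition~\ref{prop:f_formula_general} and the definition of $\tilde{Y}(k)$, never the one-step structure of $\Sigma_{\eta}$.

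Next I would verify the Hájek condition $\E(\widehat{\tau}^{iid}-\widehat{\tau}^{opt})^2/\Var(\widehat{\tau}^{iid})\to 0$. By Condition~2, $\Var(\widehat{\tau}^{iid})=\frac{K-1}{n^2}\|\tilde{Y}(k)\|^2\asymp 1/n$, so it suffices that $n\,\E(\widehat{\tau}^{iid}-\widehat{\tau}^{opt})^2\to 0$. The identical expansion leading to \eqref{eq:norm_bound_fk} gives
\[
\E(\widehat{\tau}^{iid}-\widehat{\tau}^{opt})^2\le \frac{2MK^2}{n}\,\|f_k(\Sigma)\|_{\op}^{1/2}\,\bigl\|f_k(\Sigma)^{1/2}-f_k(\Sigma^{1/2})\bigr\|_{\op}\;.
\]
Here is the step where the generalization becomes transparent: because $\|\Delta\|_{\op}=o(1)$ is assumed directly, I apply Lemma~\ref{lem:matrix_expansion} to $\Sigma=I_n+\Delta$ with no appeal to Lemma~\ref{lem:N_op}. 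That lemma yields $\|f_k(\Sigma)\|_{\op}=O(1)$ together with the matching first-order expansions $f_k(\Sigma)^{1/2}=I_n+\tfrac{1}{2}f_k'(0)\Delta+R_2$ and $f_k(\Sigma^{1/2})=I_n+\tfrac{1}{2}f_k'(0)\Delta+R_3$ with $\|R_2\|_{\op},\|R_3\|_{\op}=o(1)$, so the linear terms cancel and $\|f_k(\Sigma)^{1/2}-f_k(\Sigma^{1/2})\|_{\op}=o(1)$. Hence $\E(\widehat{\tau}^{iid}-\widehat{\tau}^{opt})^2=o(1/n)$, the Hájek condition holds, and the limit distribution of $\widehat{\tau}^{iid}$ transfers to $\widehat{\tau}^{opt}=\widehat{\tau}_k$.

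Finally, Step~3 of Section~\ref{sec:proof_thm1}—the Lindeberg central limit theorem for $S_n=\sum_i(K\mathbb{I}\{g(T_i^{iid})=k\}-1)\tilde{Y}_i(k)$ under the i.i.d. design—is literally unchanged, as it depends only on $T^{iid}\sim\cN(0,I_n)$ and on Conditions~1–2 for $\tilde{Y}(k)$, not on how $\Sigma$ was produced. I therefore expect no genuinely new obstacle; the entire substance of the corollary is the recognition that Theorem~\ref{thm:normality} used the one-step structure only to secure $\|\Delta\|_{\op}=o(1)$. The one point meriting care is confirming that the $f_k$-rescaling (to $f_k(0)=0$, $f_k(1)=1$) and every estimate feeding \eqref{eq:norm_bound_fk}–\eqref{eq:matrixnorm} depend on $\Sigma$ solely through $\Delta$ and $\|\Delta\|_{\op}$, which they do.
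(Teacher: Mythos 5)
Your proposal is correct and matches the paper's own proof of Corollary~\ref{cor:normality}: the paper likewise reuses the coupling from Step~1, observes that the hypothesis $\|\Delta\|_{\op}=o(1)$ lets one invoke Lemma~\ref{lem:matrix_expansion} directly (bypassing Lemma~\ref{lem:N_op}) to get $\|f_k(\Sigma)\|_{\op}^{1/2}\|f_k(\Sigma)^{1/2}-f_k(\Sigma^{1/2})\|_{\op}=o(1)$, and then applies Step~3 unchanged. Your additional observation that the one-step structure entered only through securing $\|\Delta\|_{\op}=o(1)$ is exactly the point of the corollary.
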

Corollary \ref{cor:normality} can be viewed as a result for general PGD-Gauss solutions. Regardless of how many steps taken in the PGD-Gauss, the asymptotic normality for $\widehat{\tau}_k$ holds as long as the solution $\Sigma$ does not deviate too much from the identity matrix.

\begin{proof}
Following Step 1 in the proof of Theorem \ref{thm:normality}, we construct the coupling in the same way. Then, based on the analysis in Step 2, it suffices to show that 
\begin{equation*}
    \|f_k(\Sigma)\|_{\op}^{1/2} \|f_k(\Sigma)^{1/2} - f_k(\Sigma^{1/2})\|_{\op} = o(1) \;, \quad \Sigma = I_n + \Delta\;.
\end{equation*}
Under our assumption in Corollary \ref{cor:normality}, we have $\|\Delta\|_{op} = o(1)$ and thus the condition in Lemma \ref{lem:matrix_expansion} is satisfied. We can then apply Lemma \ref{lem:matrix_expansion} to show the above equation. Lastly, we can apply Step 3 in Theorem \ref{thm:normality} and complete the proof. 
\end{proof}

\subsubsection{Inference}\label{sec:proof_inference}
We prove Theorem \ref{thm:var_HT} below. 
\begin{proof}
By definition of $\widehat{V}_\eta$ \eqref{eq:V_HT}, we write
\begin{align*}
    \widehat{V}_\eta = \underbrace{\frac{K^2}{n} \sum_{i=1}^n Y_i^2 \frac{\mathbb{I}\{g(T_i) = k\}}{\P(g(T_i)=k)}}_{A} + \underbrace{\frac{K^2}{n} \sum_{i\neq j} Y_i Y_j f_k(\Sigma_{\eta, ij}) \frac{\mathbb{I}\{g(T_i) = k, g(T_j) = k\}}{\P(g(T_i)=k, g(T_j)=k)}}_{B}\;.
\end{align*}
That is, we decompose the variance estimator into the diagonal part (A) and the off-diagonal part (B). 

First, we show that the probabilities in the denominator of (A), (B) are positive, and thus $\widehat{V}_{\eta}$ is well-defined. Under the uniform design, we have $\P(g(T_i)=k)=1/K > 0$ and hence the denominators in (A) are positive. For (B), we first apply Proposition \ref{prop:f_formula_general} to obtain 
\begin{align*}
    \P(g(T_i) &= k, g(T_j) = k) = f_k(\Sigma_{\eta, ij}) + 1/K^2\\
    \Rightarrow~~ B &= \frac{K^2}{n} \sum_{i\neq j} Y_i Y_j f_k(\Sigma_{\eta,ij}) \frac{\mathbb{I}\{g(T_i) = k, g(T_j) = k\}}{f_k(\Sigma_{\eta, ij}) + 1/K^2}\;.
\end{align*}
Under Assumption \ref{asmp:stepsize}, we apply Lemma \ref{lem:N_op} to obtain
\begin{equation*}
    \Sigma_{\eta} = I_n  + \eta N\;, \quad \|N\|_{\op} = O(\|XX^\top - I_n\|_{\op})\;. 
\end{equation*}
Since $\Sigma_{\eta}$ is symmetric, we apply Lemma \ref{lem:op} to obtain
\begin{equation}\label{eq:element_bound}
    \max_{i\neq j} |\Sigma_{\eta, ij}| = O(\eta \|N\|_{\op}) = O(\eta \|XX^\top - I_n\|_{\op}) = o(1)\;.
\end{equation}
Therefore, $\Sigma_{\eta, ij}$ is $o(1)$. For any $i\neq j$, since $f_k(0) = 0$ and $f_k$ is smooth around zero, we have $f_k(\Sigma_{\eta, ij}) = o(1)$, and hence $f_k(\Sigma_{\eta, ij}) + 1/K^2 > 0$ for large enough $n$. Therefore $\widehat{V}_{\eta}$ is well-defined. 

To analyze the mean of $\widehat{V}_{\eta}$, we use the original definition \eqref{eq:V_HT} to obtain
\begin{align*}
    \E \widehat{V}_\eta &= \frac{K^2}{n} \E \sum_{i,j} Y_i Y_j f_k(\Sigma_{\eta, ij}) \frac{\mathbb{I}\{g(T_i) = k, g(T_j) = k\}}{\P(g(T_i) = k, g(T_j) = k)}\\
    &= \frac{K^2}{n} \E \sum_{i,j} Y_i(k) Y_j(k) f_k(\Sigma_{\eta, ij}) \frac{\mathbb{I}\{g(T_i) = k, g(T_j) = k\}}{\P(g(T_i) = k, g(T_j) = k)}\\
    &= \frac{K^2}{n} \sum_{i,j} Y_i(k) Y_j(k) f_k(\Sigma_{\eta, ij}) \frac{\P(g(T_i) = k, g(T_j) = k)}{\P(g(T_i) = k, g(T_j) = k)} = V(\Sigma_{\eta})\;. 
\end{align*}

To analyze the variance, by the AM-GM inequality, we have
\begin{align*}
    \Var(\widehat{V}_\eta) \le 2 (\Var({A}) + \Var(B))\;.
\end{align*}
Next, we will show that $\Var(A)$ and $\Var(B)$ are $o(1)$, respectively. 

For (A), since $\P(g(T_i)=k) = 1/K$, we have
\begin{equation*}
    A = \frac{K^3}{n} \sum_{i=1}^n Y_i^2 \mathbb{I}\{g(T_i) = k\}\;.
\end{equation*}
By the variance formula of Horvitz-Thompson estimator (Lemma \ref{lem:discrete_var}), we have
\begin{equation*}
    \Var(A) = \frac{K^6}{n^2} {Y(k)^2}^\top \Cov_k(D) Y(k)^2\;, \quad \Cov_k(D) = f_k(\Sigma_\eta)\;.
\end{equation*}
where $Y(k)^2$ is the vector of squared potential outcomes. Based on Lemma \ref{lem:N_op}, Lemma \ref{lem:matrix_expansion}, and the analysis in Step 2 of Section \ref{sec:proof_thm1}, we have $\|f_k(\Sigma_{\eta})\|_{\op} = O(1)$. Therefore, 
\begin{equation*}
    \Var(A) \le \frac{K^6}{n^2}  n \max_i |Y_i(k)|^2 \|f_k(\Sigma_\eta)\|_{\op} = O\left(\frac{1}{n}\right)\;.
\end{equation*}
The last equality follows from the assumption that $\max_i |Y_i(k)| = O(1)$. 

For (B), by definition we have
\begin{align*}
    B^2 &= \frac{K^4}{n^2} \sum_{i\neq j} \sum_{p\neq q} Y_i Y_j Y_q Y_l f_k(\Sigma_{ij}) f_k(\Sigma_{pq}) \frac{\mathbb{I}\{g(T_i) = k, g(T_j) = k, g(T_p) = k, g(T_q) = k\}}{(f_k(\Sigma_{\eta,ij})+1/K^2)(f_k(\Sigma_{\eta,pq})+1/K^2)}\\
    &= \frac{K^4}{n^2} \sum_{i\neq j} \sum_{p\neq q} Y_i(k) Y_j(k) Y_q(k) Y_l(k) f_k(\Sigma_{ij}) f_k(\Sigma_{pq}) \frac{\mathbb{I}\{g(T_i) = k, g(T_j) = k, g(T_p) = k, g(T_q) = k\}}{(f_k(\Sigma_{\eta,ij})+1/K^2)(f_k(\Sigma_{\eta,pq})+1/K^2)}\;.
\end{align*}
Therefore, the second moment of $B$ satisfies
\begin{align*}
    \E B^2 &= \frac{1}{n^2} \sum_{i\neq j} \sum_{p\neq q} Y_i(k) Y_j(k) Y_q(k) Y_l(k) f_k(\Sigma_{ij}) f_k(\Sigma_{pq}) \frac{\P(g(T_i) = k, g(T_j) = k, g(T_p) = k, g(T_q) = k)}{(f_k(\Sigma_{\eta,ij})+1/K^2)(f_k(\Sigma_{\eta,pq})+1/K^2)}\;.
\end{align*}
Based on \eqref{eq:element_bound}, $f_k(0) = 0$, and the smoothness of $f_k$ around zero, for any $i\neq j$, we have 
\begin{equation*}
    |f_k(\Sigma_{\eta,ij})| = O(\eta \|XX^\top - I_n\|_{\op}) = o(1)\;.
\end{equation*}
Then, we use the fact that $\P(g(T_i) = k, g(T_j) = k, g(T_p) = k, g(T_q) = k)\le 1$ and $f_k(\Sigma_{\eta,ij})+1/K^2 = o(1) + 1/K^2$ to obtain 
\begin{align*}
 \E B^2 &=  O\left(\frac{K^4}{n^2} \sum_{i\neq j} \sum_{p\neq q} \max_i |Y_i(k)|^4 \eta^2 \|XX^\top - I_n\|_{\op}^2\right) = n^2 \eta^2 \|XX^\top - I_n\|_{\op}^2\;,
\end{align*}
where the last equality follows from that fact that $K$ is constant and $\max |Y_i(k)| = O(1)$. Under the assumption that $n^2 \eta^2 \|XX^\top - I_n\|_{\op}^2 = o(1)$, we have $\E B^2 = o(1)$ and hence $\Var(B) = o(1)$. 

To sum up, we have
\begin{equation*}
    \Var(\widehat{V}_\eta) \le 2 (\Var(A) + \Var(B)) = o(1)\;.
\end{equation*}
This completes the proof. 
\end{proof}

\section{Proofs of Supporting Lemmas}\label{sec:proof_supp}
In the proof of supporting lemmas, we utilize matrix norm inequalities to analyze the perturbation of a Gaussian covariance $\Sigma$ with respect to $I_n$. Specifically, for any matrix $A$, we have
\begin{equation}\label{eq:1-inf-norm}
    \|A\|_{\op} \le \sqrt{\|A\|_1 \|A\|_{\infty}}\;.
\end{equation}
where $\|A\|_{1}$ and $\|A\|_{\infty}$ denote the matrix 1-norm and infinity-norm, i.e., 
\begin{equation*}
    \|A\|_1 = \max_{j=1,\dots, n} \sum_{i=1}^n |A_{ij}|\;, \quad \|A\|_\infty = \max_{i=1,\dots, n} \sum_{j=1}^n |A_{ij}|\;.
\end{equation*}
Moreover, when $A$ is a symmetric matrix, we have $\|A\|_1 = \|A\|_{\infty}$, and 
\begin{equation}\label{eq:inf-norm}
    \|A\|_{\op} \le \|A\|_{\infty}\;.
\end{equation}
These inequalities will be invoked multiple times in our proof. Additionally, for $A\in\R^{n\times n}$, let $\mathrm{diag}(A)\in\R^{n\times n}$ be the diagonal matrix with the same diagonal values as $A$.

Our proof leverages the specific form of the one-step PGD-Gauss under the nuclear norm. That is, based on Algorithm \ref{alg:pgd}, we have
\begin{equation}\label{eq:pgd_sigma}
   \begin{aligned}
   \Sigma_{\eta} &= V_{\eta}  V_{\eta}^\top\;,\quad V_{\eta} = D^{-1} U_{\eta}\;,\\
    U_{\eta} &= (I_n - \eta \nabla l_{\mathrm{norm}}(I_n))V_0 = I_n - \eta \nabla l_{\mathrm{norm}}(I_n)\;,\\
    \nabla l_{\norm}(I_n) &= f_k^\prime(0) (XX^\top - I_n)\;,
\end{aligned} 
\end{equation}
where $D$ is a diagonal matrix with $i$-th diagonal equal to the norm of $u_i$, the $i$-th row of $U_{\eta}$.

Lastly, we prove the following matrix inequality. 
\begin{lemma}\label{lem:op}
For any symmetric matrix $A$, we have $\max_{ij} |A_{ij}| \le 2 \|A\|_{\op}$.
\end{lemma}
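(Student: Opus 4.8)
# Proof Plan for Lemma: $\max_{ij}|A_{ij}| \le 2\|A\|_{\op}$

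The plan is to bound each entry $A_{ij}$ by expressing it as a quadratic form in standard basis vectors and then applying the definition of the operator norm. The key observation is that for any matrix $A$, the entry $A_{ij}$ equals $e_i^\top A e_j$, where $e_i, e_j$ are standard basis vectors. Since $\|e_i\| = \|e_j\| = 1$, the operator norm immediately gives $|A_{ij}| = |e_i^\top A e_j| \le \|A\|_{\op}$. This already yields the bound with constant $1$ rather than $2$, so in fact the diagonal entries satisfy $|A_{ii}| = |e_i^\top A e_i| \le \|A\|_{\op}$ directly.

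For a self-contained argument that matches the stated constant (and works even if one only has access to the Rayleigh-quotient characterization for symmetric $A$), I would instead proceed via the polarization-type identity. For a symmetric matrix $A$, the operator norm equals $\sup_{\|v\|=1} |v^\top A v|$. Taking $v = e_i$ gives $|A_{ii}| \le \|A\|_{\op}$. For off-diagonal entries with $i \ne j$, I would evaluate the quadratic form at the two unit vectors $v_{\pm} = \frac{1}{\sqrt{2}}(e_i \pm e_j)$, obtaining
\begin{align*}
v_+^\top A v_+ &= \tfrac{1}{2}(A_{ii} + A_{jj} + 2A_{ij})\;,\\
v_-^\top A v_- &= \tfrac{1}{2}(A_{ii} + A_{jj} - 2A_{ij})\;,
\end{align*}
where I used symmetry $A_{ij} = A_{ji}$. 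Subtracting these two relations gives $2 A_{ij} = v_+^\top A v_+ - v_-^\top A v_-$, and therefore
\begin{equation*}
2|A_{ij}| \le |v_+^\top A v_+| + |v_-^\top A v_-| \le \|A\|_{\op} + \|A\|_{\op} = 2\|A\|_{\op}\;.
\end{equation*}
This yields $|A_{ij}| \le \|A\|_{\op} \le 2\|A\|_{\op}$ for the off-diagonal case as well, completing the bound uniformly over all $i,j$.

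I do not anticipate a genuine obstacle here, since the statement is elementary; the only subtlety is making sure the argument is valid for symmetric $A$ (which is exactly the regime in which the lemma is applied, e.g.\ to $\Sigma_\eta$ and $N$ in the proof of Theorem~\ref{thm:var_HT}). The factor of $2$ in the statement is not tight—the sharper bound $\max_{ij}|A_{ij}| \le \|A\|_{\op}$ holds—but the slack is harmless and the cleanest route is simply the quadratic-form estimate above. If one wanted to avoid symmetry entirely, the one-line argument $|A_{ij}| = |e_i^\top A e_j| \le \|A\|_{\op}\|e_i\|\|e_j\| = \|A\|_{\op}$ suffices for arbitrary $A$, which I would mention as the general-case remark.
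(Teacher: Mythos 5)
Your proposal is correct and follows essentially the same route as the paper: both arguments use the Rayleigh-quotient characterization $\|A\|_{\op}=\sup_{\|v\|=1}|v^\top A v|$ for symmetric $A$ and evaluate the quadratic form at $e_i$ and at $(e_i\pm e_j)/\sqrt{2}$. Your polarization step (subtracting the $v_+$ and $v_-$ evaluations) and your direct remark $|A_{ij}|=|e_i^\top A e_j|\le\|A\|_{\op}$ both yield the sharper constant $1$, whereas the paper uses only $v_+$ together with the diagonal bounds and a triangle inequality to obtain the (looser but sufficient) constant $2$.
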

\begin{proof}
By the variational expression of the operator norm, we have
\begin{equation*}
    \|A\|_{\op} = \sup_{\|x\|= 1} |x^\top A x|\;.
\end{equation*}
Let $x = e_i$, the basis vector with the $i$-th entry equal to one, we obtain 
\begin{equation*}
    |x^\top A x| = |A_{ii}| \le \|A\|_{\op}\;. 
\end{equation*}
Then, for any $i\neq j$, by setting $x = (e_i + e_j)/\sqrt{2}$, we obtain
\begin{equation*}
    |x^\top A x| = \frac{1}{2} |A_{ii} + A_{jj} + 2 A_{ij}| \le \|A\|_{\op}\;.
\end{equation*}
By the triangular inequality, we have $|A_{ii} + A_{jj} + 2 A_{ij}| \ge \bigl||A_{ii}+A_{jj}| - 2|A_{ij}|\bigr|$, and hence
\begin{equation*}
    \bigl||A_{ii}+A_{jj}| - 2|A_{ij}|\bigr| \le 2 \|A\|_{\op}\;.
\end{equation*}
This implies
\begin{equation*}
    2 |A_{ij}| \le 2 \|A\|_{\op} + |A_{ii} + A_{jj}| \le 4 \|A\|_{\op}\;.
\end{equation*}
Therefore, we obtain $\max_{ij}|A_{ij}|\le 2\|A\|_{\op}$.
\end{proof}

\subsection{Proof of Lemma \ref{lem:N_op}}
Based on \eqref{eq:pgd_sigma}, we have 
\begin{equation*}
    D_{ii} = \sqrt{1 + \sum_{j\neq i} \eta^2 \nabla^2 l_{\mathrm{norm}, ij}(I_n)}\;.
\end{equation*}
Notice that $\sum_{j\neq i} \nabla^2 l_{\mathrm{norm}, ij}(I_n) $ is the $i$-th diagonal of matrix $(\nabla l_{\mathrm{norm}}(I_n))^2$, we have
\begin{equation*}
    |\sum_{j\neq i} \nabla^2 l_{\mathrm{norm}, ij}(I_n)| \le \|\nabla l_{\mathrm{norm}}(I_n)\|_{\op}^2 = (f^\prime_k(0))^2 \|XX^\top  - I_n\|_{\op}^2\;.
\end{equation*}
Thus, 
\begin{equation}\label{eq:Dii}
    D_{ii} \le \sqrt{ 1 + \eta^2 \|XX^\top - I_n\|_{\op}^2 (f^\prime_k(0))^2 } = \sqrt{1 + O(\eta^2 \|XX^\top - I_n\|_{\op}^2)}\;.
\end{equation}
Under Assumption \ref{asmp:stepsize}, $D_{ii} = \sqrt{1 + o(1)}$. Therefore, $\max |D_{ii} - 1| = o(1)$. This fact will be used in our proof.

Noticing that $\Sigma_{\eta} = V_{\eta} V_{\eta}^\top$, we have 
\begin{equation}\label{eq:Sigma_eta}
\begin{aligned}
    \Sigma_{\eta} &= D^{-1} (I_n - \eta \nabla l_{\mathrm{norm}}(I_n))^2 D^{-1}\\
    &= D^{-1} (I_n - \eta f^\prime_k(0) (XX^\top - I_n))^2 D^{-1}\\
    &= D^{-1} (I_n - 2 \eta f^\prime_k(0) (XX^\top - I_n) + \eta^2 (f^\prime_k(0) (XX^\top - I_n))^2) D^{-1}\\
    &= D^{-2} + \underbrace{D^{-1} (- 2 \eta f^\prime_k(0) (XX^\top - I_n) + \eta^2 (f^\prime_k(0) (XX^\top - I_n))^2 )D^{-1}}_\text{$\eqqcolon M$}\;,
\end{aligned}
\end{equation}
where we use $M$ to denote the component that contributes to off-diagonal elements. 

We write
\begin{align*}
    \Sigma_{\eta} &= I_n + \eta N\;,\quad N \coloneqq \frac{1}{\eta} (\Sigma_{\eta} - I_n)\;.
\end{align*}
Based on the $M$ defined in Equation \eqref{eq:Sigma_eta}, we derive
\begin{equation}\label{eq:N_eq}
    N = \frac{1}{\eta} ( D^{-2} + M - I_n) = \frac{1}{\eta} (M - \mathrm{diag}(M))\;.
\end{equation}
Hence, 
\begin{equation*}
    \eta \|N\|_{\op} \le \|M\|_{\op} + \|\mathrm{diag}(M)\|_{\op}\;.
\end{equation*}
For $\|M\|_{\op}$, we have
\begin{align*}
    \|M\|_{\op} &\le \|D^{-1}\|^2_{\op} \left(2\eta \|f^\prime_k(0) (XX^\top - I_n)\|_{\op} + \eta^2 \|f^\prime_k(0) (XX^\top - I_n)\|^2_{\op}\right)\\
    &= O\left( \eta \|XX^\top - I_n\|_{\op} + \eta^2 \|XX^\top - I_n\|^2_{\op}\right)\;.
\end{align*}
The last line follows from $\|D^{-1}\|_{\op} = 1 + o(1)$ since $\max|D_{ii} - 1| = o(1)$. 
Similarly, for $\|\mathrm{diag}(M)\|_{\op}$, we have
\begin{align*}
    \|\mathrm{diag}(M)\|_{\op} &= O\left(\eta \|\mathrm{diag}(f^\prime(0) (XX^\top - I_n))\|_{\op}\right) + O\left(\eta^2 \|\mathrm{diag}((f^\prime(0) (XX^\top - I_n))^2)\|_{\op}\right)\\
    &\stackrel{\text{(i)}}{=} O\left(\eta^2 \|\mathrm{diag}((XX^\top - I_n)^2)\|_{\op}\right) \stackrel{\text{(ii)}}{=} O\left(\eta^2 \|XX^\top - I_n)\|^2_{\op}\right)\;.
\end{align*}
In (i), we use the fact that $\mathrm{diag}(XX^\top - I_n) = 0$, since $\|X_i\| = 1$ under Assumption \ref{asmp:stepsize}. (ii) follows from the fact that $\|\mathrm{diag}(A)\|_{\op}\le \|A\|_{\op}$ for a positive semidefinite matrix $A$.

Based on our analysis for $\|M\|_{\op}$ and $\|\mathrm{diag}(M)\|_{\op}$ above, we have
\begin{equation*}
    \|N\|_{\op} = O\left(\|XX^\top - I_n\|_{\op} + \eta \|XX^\top - I_n\|^2_{\op}\right)\;.
\end{equation*}

\subsection{Proof of Lemma \ref{lem:matrix_expansion}}
Here we prove the results in Lemma \ref{lem:matrix_expansion} one by one. Recall that we assume $f_k(1) = 1$, such that $f_k(I_n) = I_n$, as explained in the main proof of Section \ref{sec:proof_normality}.

\noindent\underline{Analyze $f_k(\Sigma)$.}
Note that $f_k$ is smooth around zero. For any $x\in(-1, 1)$, we use Taylor expansion to obtain
\begin{equation}\label{eq:fk_taylor}
    f_k(x) = f_k(0) + f_k^\prime(0) x + \frac{1}{2} f_k^{\prime\prime}(\xi_x) x^2 = f_k^\prime(0) x + \frac{1}{2} f_k^{\prime\prime}(\xi_x) x^2\;,
\end{equation}
where $\xi_{x}$ is a constant satisfying $|\xi_x| \le |x|$. With the matrix input $\Sigma$, we apply the above Taylor expansion to off-diagonal entries to obtain  
\begin{equation*}
    f_k(\Sigma) = I_n + f_k^\prime(0) \Delta + R_1\;, \quad R_{1,{ij}} = \frac{1}{2} f_k^{\prime\prime}(\xi_{ij}) \Delta_{ij}^2\;,
\end{equation*}
where $\xi_{ij}$ satisfies $|\xi_{ij}| \le |\Delta_{ij}|$. 
Since $R_1$ is symmetric, by \eqref{eq:inf-norm}, we have $\|R_1\|_{\op} \le \|R_1\|_{\infty}$ and
\begin{align*}
    \|R_1\|_\infty = \max_{i=1,\dots, n} \sum_{j=1}^n |R_{f, ij}| = \frac{1}{2} \max_i \sum_j |f_k^{\prime\prime}(\xi_{ij}) \Delta_{ij}^2|\;.
\end{align*}
By Lemma \ref{lem:op}, $\max_{ij}|\Delta_{ij}|= O(\|\Delta\|_{\op}) = o(1)$. Since $|\xi_{ij}| \le |\Delta_{ij}|$, we have $\max_{ij}|\xi_{ij}| \le \max_{ij}|\Delta_{ij}|= o(1)$, and hence $\max_{ij} |f_k^{\prime\prime}(\xi_{ij})| = O(1)$. We have 
\begin{equation*}
\sum_i |f_k^{\prime\prime}(\xi_{ij}) \Delta_{ij}^2| = O(\sum_i \Delta_{ij}^2)\;.
\end{equation*}
Notice that $\sum_i \Delta_{ij}^2$ is the $j$-th diagonal of the squared matrix $\Delta^2$. Therefore, we have 
\begin{equation*}
    O(\sum_i \Delta_{ij}^2) = O(\|\Delta^2\|_{\op}) = O(\|\Delta\|_{\op}^2)\;.
\end{equation*}
To sum up, we derive $\|R_1\|_\infty = O(\|\Delta\|_{\op}^2)$, and thus $\|R_1\|_{\op} = O(\|\Delta\|_{\op}^2)$. In addition, under the condition $\|\Delta\|_{\op} = o(1)$, we have
\begin{equation*}
    \|f_k(\Sigma)\|_{\op} \le \|I_n\|_{\op} + |f^\prime_k(0)| \|\Delta\|_{\op} + \|R_1\|_{\op} = O(1 + \|\Delta\|_{\op} +  \|\Delta\|_{\op}^2) = O(1)\;.
\end{equation*}

\noindent\underline{Analyze $f_k(\Sigma)^{1/2}$.} 
We first introduce the Taylor expansion of the matrix square root. For any symmetric matrix with $\|M\|_{\op}< 1$, we define its eigenvalue decomposition as $M = U\Lambda U^\top$. Then, by definition of the matrix square root, 
\begin{equation*}
(I_n + M)^{1/2} = U (I_n + \Lambda)^{1/2} U^\top\;.
\end{equation*}
That is, the $i$-th eigenvalue $(1+\Lambda_i)$ is transformed to $s(1+\Lambda_i)$, where $s(x) = \sqrt{x}$ is the square-root function. Based on Taylor expansion of $s(\cdot)$, 
\begin{equation*}
    s(1+\Lambda_i) = 1 + \frac{1}{2}\Lambda_i + \frac{1}{2} s^{\prime\prime}(\xi_{i}) \Lambda_i^2\;,
\end{equation*}
where $\xi_i$ is some value satisfying $|\xi_i - 1|\le |\Lambda_i|<1$. Therefore, we can write
\begin{equation*}
    (I_n + M)^{1/2} = U \left(I_n + \frac{1}{2}\Lambda + \frac{1}{2} S \Lambda^2 \right) U^\top = I_n + \frac{1}{2}M + \frac{1}{2} U S \Lambda^2 U^\top\;,
\end{equation*}
where $S$ is a diagonal matrix with elements $ s^{\prime\prime}(\xi_i)$. 

For $f_k(\Sigma)$, we may set
\begin{equation*}
    \Delta_f = f_k(\Sigma) - I_n = f_k^\prime(0) \Delta + R_1\;.
\end{equation*}
Moreover, $\|\Delta_f\|_{\op} \le |f_k^\prime(0)| \|\Delta\|_{\op} + \|R_1\|_{\op} = O(\|\Delta\|_{\op}) + O(\|\Delta\|_{\op}^2) = o(1)$. Therefore, we can apply the matrix square root Taylor expansion to obtain 
\begin{align*}
    f_k(\Sigma)^{1/2} &= I_n + \frac{1}{2}\Delta_f + \frac{1}{2} U S \Lambda^2 U^\top\\
    &= I_n + \frac{1}{2}f^\prime_k(0)\Delta + \frac{1}{2}R_1 + R_f\;,\\
    R_f &\coloneqq \frac{1}{2} U S \Lambda^2 U^\top\;.
\end{align*}
Here, the matrices $U$, $S$, $\Lambda$ are defined with respect to $M = \Delta_f$. Then, it remains to bound the operator norm of $R_f$. Note that $\|\Delta_f\|_{\op} = o(1)$ and $s(\cdot)$ is smooth around one. We have $\max_i \{|s^{\prime\prime}(\xi_i)|\} = O(1)$. Therefore, the norm of $R_f$ can be bounded as
\begin{equation*}
    \|R_f\|_{\op} \le O(\|\Delta_f\|_{\op}^2) = O( \|\Delta\|_{\op}^2 + \|\Delta\|_{\op}^4) = O(\|\Delta\|_{\op}^2)\;.
\end{equation*}
By setting $R_2 = R_1/2 + R_f$, we obtain
\begin{equation*}
    f(\Sigma)^{1/2} = I_n + \frac{1}{2}f^\prime_k(0)\Delta + R_2\;, \quad \|R_2\|_{\op}= O(\|\Delta\|_{\op}^2)\;.
\end{equation*}

\noindent \underline{Analyze $f_k(\Sigma^{1/2})$.} 
First, we apply the matrix square root Taylor expansion to $\Sigma = I_n + \Delta$ to obtain
\begin{align*}
    \Sigma^{1/2} &= I_n + \frac{1}{2}\Delta + R_s\;,\\
    R_s &\coloneqq \frac{1}{2} U S \Lambda^2 U^\top\;,
\end{align*}
where $U, S, \Lambda$ are defined with respect to $M = \Delta$. Using the same logic as in the last step, we can show $\max_i \{|s^{\prime\prime}(\xi_i)|\} = O(1)$. Therefore, $R_s$ can be bounded as
\begin{equation}\label{eq:Rs1}
    \|R_s\|_{\op} = O(\|\Delta\|_{\op}^2) = o(1)\;.
\end{equation}
Moreover, Lemma \ref{lem:op} indicates that
\begin{equation}\label{eq:Rs2}
    \max_{ij} |R_{s,ij}| = O(\|R_s\|_{\op}) = o(1)\;.
\end{equation}

To analyze $f_k(\Sigma^{1/2})$, we observe that
\begin{align*}
    f_k(\Sigma^{1/2}) &= f_k(I_n + \frac{1}{2}\Delta + R_s)\\
    &= f_k(I_n + \mathrm{diag}(R_s)) + f_k(\frac{1}{2}\Delta + R_s - \mathrm{diag}(R_s))\;,
\end{align*}
In the second line, we decompose the matrix into a diagonal matrix $f_k(I_n + \mathrm{diag}(R_s))$ and an off-diagonal matrix $f_k(\frac{1}{2}\Delta + R_s - \mathrm{diag}(R_s))$, which follows from the fact that $f_k$ is an elementwise operation. Note that the second part has diagonal entries equal to zero, since $\Delta$ has zero diagonal values. 

For the diagonal part $f_k(I_n + \mathrm{diag}(R_s))$, we have
\begin{align*}
    \|f_k(I_n + \mathrm{diag}(R_s)) - I_n\|_{\op} = \max_i \{f_k(1+ R_{s,ii}) - 1\}\;.
\end{align*}
Since $\max_{ij} |R_{s,ij}| = o(1)$ (Equation \eqref{eq:Rs2}) and $f_k$ is left continuous at one, we have 
\begin{equation*}
    \max_i \{f_k(1+ R_{s,ii}) - 1\} = o(1)\;.
\end{equation*}
This implies
\begin{equation*}
    f_k(I_n + \mathrm{diag}(R_s)) = I_n + R_D\;,\quad \|R_D\|_{\op} = o(1)\;.
\end{equation*}

For the off-diagonal part, we define $H = \frac{1}{2}\Delta + R_s - \mathrm{diag}(R_s)$. We apply the Taylor expansion \eqref{eq:fk_taylor} for the off-diagonal entries of $f_k(H)$ to obtain
\begin{align*}
    f_k(H) &= f_k^\prime(0) H + \frac{1}{2} F \circ H\circ H\\
    &=\frac{1}{2} f_k^\prime(0) \Delta + f_k^\prime(0)(R_{s} - \mathrm{diag}(R_s)) + \frac{1}{2} F \circ H\circ H\;,\\
    F_{ij} &= f_k^{\prime\prime}(\xi_{ij})\;, \quad |\xi_{ij}|\le |H_{ij}|\;,\quad i\neq j\;,\\
    F_{ii} &= 0\;.
\end{align*}
Noticing that $F\circ H \circ H$, the Hadamard product of the matrices, is symmetric, we apply matrix norm inequality \eqref{eq:inf-norm} to obtain $\|F \circ H\circ H\|_{\op} \le \|F \circ H\circ H\|_\infty$. Moreover, 
\begin{align*}
    \|F \circ H\circ H\|_\infty &= \max_i \sum_{j=1}^n |f_k^{\prime\prime}(\xi_{ij})| H_{ij}^2\;.
\end{align*}
By the smoothness of $f_k$ around zero, we have $\max |f_k^{\prime\prime}(\xi_{ij})| = O(1)$. Therefore, for any $i$, 
\begin{align*}
    \sum_{j=1}^n |f_k^{\prime\prime}(\xi_{ij})| H_{ij}^2 = O(\sum_{j=1}^n H_{ij}^2)\;.
\end{align*}
$\sum_j H_{ij}^2$ is the $i$-th diagonal of $H^2$, and thus $\sum_j H_{ij}^2 \le \|H\|_{\op}^2$. Based on the derivations above, we obtain $\|F \circ H\circ H\|_{\op} = O(\|H\|_{\op}^2)$. Additionally, based on the definition of $H$ and Equations \eqref{eq:Rs1}, \eqref{eq:Rs2}, we have
\begin{align*}
     \|H\|_{\op}^2 &= O(\|\Delta\|_{\op}^2 + \|R_s\|_{\op}^2 + \|\mathrm{diag}(R_s)\|_{\op}^2) = O(\|\Delta\|_{\op}^2 + \|R_s\|_{\op}^2 + \max_{i} R_{s,ii}^2)\\
     &= O(\|\Delta\|_{\op}^2 + \|R_s\|_{\op}^2)=O(\|\Delta\|_{\op}^2 + \|\Delta\|_{\op}^4) = O(\|\Delta\|_{\op}^2)\;.
\end{align*}
To sum up, we show that
\begin{align*}
    f_k(H) &= \frac{1}{2} f_k^\prime(0) \Delta + R_H\;,\\
    R_H &\coloneqq f_k^\prime(0)(R_{s} - \mathrm{diag}(R_s)) + \frac{1}{2} F \circ H\circ H\;,\quad \|R_H\|_{\op} = O(\|\Delta\|_{\op}^2)\;.
\end{align*}
Combining our analysis for diagonal and off-diagonal parts, we obtain our final result
\begin{align*}
    f_k(\Sigma^{1/2}) &= I_n + \frac{1}{2} f_k^\prime(0) \Delta + R_3\;, \\
    R_3 &\coloneqq R_D + R_H\;, \\
    \|R_3\|_{\op} &= o(1) + O(\|\Delta\|_{\op}^2)\;.
\end{align*}

\subsection{Proof of Proposition \ref{prop:asymp_var}}
Before our proof, we introduce the following matrix norm inequality: For a diagonal matrix $A$ with positive diagonals and $P\succeq 0$, we have 
\begin{equation}\label{eq:diag_psd}
    \min_i |A_{ii}|\tr(P) \le \tr(PA) \le \max_i |A_{ii}| \tr(P)\;.
\end{equation}
We will use this multiple times throughout the proof.

Note that given $Y(k) = X\beta_k$, we can write 
\begin{equation*}
    V(\Sigma) = \frac{K-1}{n} Y(k)^\top f_k(\Sigma) Y(k) = \frac{K-1}{n}\beta_k^\top X^\top f_k(\Sigma) X\beta_k\;.
\end{equation*}
Under the assumption that $\beta_k$ has zero mean and identity covariance, we have
\begin{align*}
    \E_{\beta_k} V(\Sigma) &= \frac{K-1}{n} \E \tr(X^\top f_k(\Sigma) X\beta_k \beta_k^\top )\\
    &= \frac{K-1}{n}\tr(X^\top f_k(\Sigma) X\E (\beta_k \beta_k^\top)) = \frac{K-1}{n} \tr(f_k(\Sigma) X X^\top)\;.
\end{align*}
Note that this holds for any $\Sigma\in\cE$. 

We write $\Delta = \Sigma - I_n$, which is a symmetric matrix with zero diagonal values. Then we have
\begin{equation}\label{eq:var_diff}
   \begin{aligned}
    \frac{n}{K-1} (\E_{\beta_k} V(I_n) - \E_{\beta_k} V(\Sigma)) &= \tr((f_k(I_n) - f_k(I_n + \Delta)) XX^\top) \\
    &= \tr((I_n - f_k(I_n + \Delta)) (XX^\top - I_n))\;.
\end{aligned}
\end{equation}
The last equality follows from the fact that $f_k(I_n) - f_k(I_n+\Delta)$ has zero diagonals. Additionally, we have assumed $f_k(1) = 1$ so that $f_k(I_n) = I_n$, as explained in the proof of Theorem \ref{thm:normality}.

By the elementwise Taylor expansion on $f_k$ (introduced in the proof of Lemma \ref{lem:matrix_expansion}), we have
\begin{align*}
    f_k(I_n + \Delta) = I_n + f_k^\prime(0) \Delta + \frac{1}{2} R\;, \quad R_{ij} = f_k^{\prime\prime}(\xi_{ij})\Delta_{ij}^2\;,
\end{align*}
where $\xi_{ij}$ satisfies $|\xi_{ij}|\le |\Delta_{ij}|$. Then we apply the Taylor expansion to \eqref{eq:var_diff}
and obtain
\begin{align*}
    \frac{n}{K-1}(\E_{\beta_k} V(I_n) - \E_{\beta_k} V(\Sigma)) &= \tr((-f_k^\prime(0) \Delta - \frac{1}{2} R) (XX^\top - I_n))\\
    &= -f_k^\prime(0) \underbrace{\tr(\Delta (XX^\top - I_n))}_{\eqqcolon \Delta_f} - \frac{1}{2} \underbrace{\tr( R(XX^\top - I_n))}_{\eqqcolon \Delta_R}\;.
\end{align*}
Next, we specify $\Sigma$ to be the solution $\Sigma_{\eta}$ from the one-step PGD-Gauss, and perform matrix analysis to bound $\Delta_f$ and $\Delta_R$, respectively. Under the assumption $f_k^\prime(0) \neq 0$, we either have $f_k^\prime(0) > 0$ or $f_k^\prime(0) < 0$. From now on, we assume that $f_k^\prime(0) > 0$ without loss of generality.

\noindent\underline{Step 1. Analyze $\Delta_f$. }
By definition of PGD-Gauss, we have
\begin{align*}
    \Sigma_{\eta} &= D^{-1} (I_n - \eta f_k^\prime(0) (XX^\top  - I_n))^2 D^{-1}\\
    &= D^{-1} (I_n - 2\eta f_k^\prime(0) (XX^\top  - I_n) + \eta^2 (f_k^\prime(0))^2 (XX^\top  - I_n)^2 ) D^{-1}\\
    &= D^{-2} - 2\eta f_k^\prime(0) D^{-1}(XX^\top  - I_n) D^{-1} + \eta^2 (f_k^\prime(0))^2 D^{-1}(XX^\top  - I_n)^2 D^{-1}\;.
\end{align*}
This implies 
\begin{equation}\label{eq:pgd_delta}
    \Delta = \Sigma_{\eta} - I_n = (D^{-2} - I_n) - 2\eta f_k^\prime(0) D^{-1}(XX^\top  - I_n) D^{-1} + \eta^2 (f_k^\prime(0))^2 D^{-1}(XX^\top  - I_n)^2 D^{-1}\;.
\end{equation}
Therefore, the difference can be decomposed as
\begin{equation}\label{eq:Delta_f}
\begin{aligned}
    \Delta_f &= \underbrace{\tr((D^{-2} - I_n)(XX^\top - I_n))}_\text{(I)} - \underbrace{2\eta f_k^\prime(0) \tr(D^{-1}(XX^\top  - I_n) D^{-1}(XX^\top - I_n))}_\text{(II)}\\
    &+ \underbrace{\eta^2 (f_k^\prime(0))^2 \tr(D^{-1}(XX^\top  - I_n)^2 D^{-1}(XX^\top - I_n))}_\text{(III)}\;.
\end{aligned}
\end{equation}
It is easy to verify that (I) is zero, since $D^{-2} - I_n$ is a diagonal matrix and $\mathrm{diag}(XX^\top - I_n) = 0$ under Assumption \ref{asmp:stepsize}. For (II), we apply the inequality \eqref{eq:diag_psd} with $A = D^{-1}$ and $P = (XX^\top  - I_n) D^{-1}(XX^\top - I_n)$ to obtain 
\begin{align*}
    \text{(II)} &\ge \frac{2\eta f_k^\prime(0)}{\max_i D_{ii}}\tr((XX^\top  - I_n) D^{-1}(XX^\top - I_n))\\
    &= \frac{2\eta f_k^\prime(0)}{\max_i D_{ii}} \tr(D^{-1}(XX^\top - I_n)^2)\;.
\end{align*}
Again, we apply \eqref{eq:diag_psd} with $A = D^{-1}$ and $P = (XX^\top - I_n)^2$ to obtain
\begin{equation}\label{eq:bound_II}
    \text{(II)} \ge \frac{2\eta f_k^\prime(0)}{\max_i D_{ii}^2} \tr((XX^\top - I_n)^2) = \frac{2\eta f_k^\prime(0)}{\max_i D_{ii}^2} \|XX^\top - I_n\|_F^2\;.
\end{equation}

For (III), we define the eigenvalue decomposition $XX^\top = U \Lambda U^\top$ and write
\begin{align*}
\tr(D^{-1}(XX^\top  - I_n)^2 D^{-1}(XX^\top - I_n)) &= \tr(D^{-1}U(\Lambda  - I_n)^2 U^\top D^{-1} U(\Lambda - I_n) U^\top) \\
&= \tr(U^\top D^{-1}U(\Lambda  - I_n)^2 U^\top D^{-1} U(\Lambda - I_n))
\end{align*}
We apply inequality \eqref{eq:diag_psd} with $P = U^\top D^{-1}U(\Lambda  - I_n)^2 U^\top D^{-1} U$ and $A = \Lambda - I_n$ to obtain
\begin{align*}
\tr(D^{-1}(XX^\top  - I_n)^2 D^{-1}(XX^\top - I_n)) 
&\le \max_{i} |\Lambda_i - 1| \tr(U^\top D^{-1}U(\Lambda  - I_n)^2 U^\top D^{-1} U) \\
&= \max_{i} |\Lambda_i - 1| \tr(D^{-2}U(\Lambda  - I_n)^2 U^\top)\\
&= \max_{i} |\Lambda_i - 1| \tr(D^{-2}(XX^\top  - I_n)^2)\;.
\end{align*}
Again, we apply \eqref{eq:diag_psd} with $A = D^{-2}$ and $P = (XX^\top  - I_n)^2$ to obtain
\begin{align}
\tr(D^{-1}(XX^\top  - I_n)^2 D^{-1}(XX^\top - I_n)) & \le \frac{\max_{i} |\Lambda_i - 1|}{\min_i \{D_{ii}^2\}} \|XX^\top  - I_n\|_F^2\nonumber\\
&= \frac{\|XX^\top-I_n\|_{\op}}{\min_i \{D_{ii}^2\}} \|XX^\top  - I_n\|_F^2\;.\label{eq:bound_III}
\end{align}

Since we have analyzed (I), (II), (III), we apply our bounds \eqref{eq:bound_II}, \eqref{eq:bound_III} to \eqref{eq:Delta_f} and obtain
\begin{align*}
    \Delta_f &\le -\frac{2\eta f_k^\prime(0)}{\max_i D_{ii}^2}\|XX^\top - I_n\|_F^2 + \eta^2 (f_k^\prime(0))^2 \frac{\|XX^\top-I_n\|_{\op}}{\min_i \{D_{ii}^2\}} \|XX^\top  - I_n\|_F^2\\
    &= \left(- \frac{2\eta f_k^\prime(0)}{\max_i \{D_{ii}^2\}} + \eta^2 (f_k^\prime(0))^2 \frac{\|XX^\top-I_n\|_{\op}}{\min_i \{D_{ii}^2\}}\right) \|XX^\top  - I_n\|_F^2\;.
\end{align*}

\noindent\underline{Step 2. Analyze $\Delta_R$. }
By definition of $R$, we have $R_{ii} = 0$ for any $i$. Hence we have
\begin{align*}
\Delta_R = \sum_{i\neq j} R_{ij} G_{ij} = \sum_{i\neq j} f_k^{\prime\prime}(\xi_{ij}) \Delta_{ij}^2 G_{ij}\;.
\end{align*}
where $G_{ij} = X_i^\top X_j$. Under Assumption \ref{asmp:stepsize} and the analysis in the proof of Theorem \ref{thm:normality}, we have $\max |\Delta_{ij}| = o(1)$. In addition, since $f_k$ is smooth around zero, we have $\max|f_k^{\prime\prime}(\xi_{ij})| = O(1)$. Therefore, 
\begin{align*}
|\Delta_R| &= O\left(\left|\sum_{i\neq j} \Delta_{ij}^2 G_{ij}\right|\right)\\
&\stackrel{\text{(i)}}{=} O\left(\sum_{i,j=1}^n \Delta_{ij}^2\right) = O\left(\|\Delta\|_F^2\right)\;.
\end{align*}
where (i) follows from $|G_{ij}|\le 1$ and $\Delta_{ii} = 0$. 

Now we give a concrete bound on $\|\Delta\|_F$ for the one-step PGD-Gauss. From \eqref{eq:pgd_delta} and the triangle inequality, we have
\begin{align*}
    \|\Delta\|_F &= \| (D^{-2} - I_n) - 2\eta f_k^\prime(0) D^{-1}(XX^\top  - I_n) D^{-1} + \eta^2 (f_k^\prime(0))^2 D^{-1}(XX^\top  - I_n)^2 D^{-1} \|_F\\
    &\le \|D^{-2} - I_n\|_F + 2\eta |f_k^\prime(0)| \|D^{-1}(XX^\top  - I_n) D^{-1}\|_F + \eta^2 (f_k^\prime(0))^2 \|D^{-1}(XX^\top  - I_n)^2 D^{-1} \|_F\;.
\end{align*}
Given Equation \eqref{eq:Dii} and Assumption \ref{asmp:stepsize}, we have $D_{ii}^2 = 1 + O(\eta^2 \|XX^\top-I_n\|_{\op}^2) = 1+o(1)$, and hence
\begin{align*}
\|D^{-2} - I_n\|_F^2 &= \sum_{i=1}^n (1/D_{ii}^2 -1)^2 = \sum_{i=1}^n \left(\frac{1}{1+\eta^2 \|XX^\top-I_n\|_{\op}^2} -1\right)^2 \\
&\le \sum_{i=1}^n \left(\eta^2 \|XX^\top-I_n\|_{\op}^2\right)^2 = n \eta^4 \|XX^\top-I_n\|_{\op}^4\;,
\end{align*}
where the last inequality follows from $1 - 1/(1+x) \le x$, for $x$ close to zero. Next, since Equation \eqref{eq:Dii} implies $\max_i |D_{ii} - 1| = o(1)$, we have 
\begin{equation*}
\|D^{-1}(XX^\top  - I_n) D^{-1}\|_F = O(\|XX^\top  - I_n\|_F)\;.
\end{equation*}
Lastly, observe that
\begin{align*}
    \|D^{-1}(XX^\top  - I_n)^2 D^{-1} \|_F &= O(\|(XX^\top  - I_n)^2\|_F) \\
    &= O(\|(XX^\top  - I_n)\|_{\op}\|(XX^\top  - I_n)\|_F)\;,
\end{align*}
where the last equality follows from $\|AB\|_F \le \|A\|_\op \|B\|_F$ for compatible matrices $A$, $B$. Combining the results above, we obtain
\begin{align*}
    \|\Delta\|_F &=O( \sqrt{n} \eta^2 \|XX^\top-I_n\|_{\op}^2) + O(\eta\|XX^\top  - I_n\|_F) + O(\eta^2 \|XX^\top  - I_n\|_{\op}\|XX^\top  - I_n\|_F)\;,\\
    |\Delta_R| &= O(\|\Delta\|_F^2) = O(n \eta^4 \|XX^\top-I_n\|_{\op}^4) + O(\eta^2 \|XX^\top  - I_n\|_F^2) + O(\eta^4 \|XX^\top  - I_n\|_{\op}^2\|XX^\top  - I_n\|_F^2)\;.
\end{align*}

\noindent\underline{Step 3. Derive final results. }
Based on Step 1 and 2, we have
\begin{equation}\label{eq:V_lb}
   \begin{aligned}
    \frac{n}{K-1}(\E_{\beta_k} V(I_n) - \E_{\beta_k} V(\Sigma_{\eta})) &= -f_k^\prime(0) \Delta_f - \frac{1}{2} \Delta_R\\
    &\ge f_k^\prime(0) \left(\frac{2\eta f_k^\prime(0)}{\max_i \{D_{ii}^2\}} - \eta^2 (f_k^\prime(0))^2 \frac{\|XX^\top - I_n\|_{\op}}{\min_i \{D_{ii}^2\}}\right) \|XX^\top  -  I_n\|_F^2 - \frac{1}{2} \Delta_R\;.
\end{aligned} 
\end{equation}

Given Assumption \ref{asmp:stepsize} and \eqref{eq:Dii}, we have 
\begin{equation*}
    \max_i \{D_{ii}^2\} \asymp 1\;,\quad \min_i \{D_{ii}^2\} \asymp 1\;.
\end{equation*}
Thus, we have 
\begin{align*}
    &f_k^\prime(0) \left(\frac{2\eta f_k^\prime(0)}{\max_i \{D_{ii}^2\}} - \eta^2 (f_k^\prime(0))^2 \frac{\|XX^\top - I_n\|_{\op}}{\min_i \{D_{ii}^2\}}\right) \|XX^\top  -  I_n\|_F^2 \\
    &= \Omega\left( f_k^\prime(0) \left(2\eta f_k^\prime(0) - \eta^2 (f_k^\prime(0))^2 \|XX^\top-I_n\|_{\op}\right) \|XX^\top  - I_n\|_F^2 \right)\;.
\end{align*}
By Assumption \ref{asmp:stepsize}, it holds that $\eta \|XX^\top-I_n\|_{\op} = o(1)$. Therefore, the term $f_k^\prime(0)\eta\|XX^\top - I_n\|_{\op}$ above is of order $o(1)$. Based on this observation, we further derive
\begin{align*}
f_k^\prime(0) \left(2\eta f_k^\prime(0) - \eta^2 (f_k^\prime(0))^2 \|XX^\top-I_n\|_{\op}\right) \|XX^\top  - I_n\|_F^2 = \Omega \left(\eta (f_k^\prime(0))^2 \|XX^\top  - I_n\|_F^2 \right)\;.
\end{align*}

Next we show that $\Delta_R$ term is negligible compared to the first term in \eqref{eq:V_lb}. Based on the analysis in Step 2, we have
\begin{equation*}
    |\Delta_R| = O(\|\Delta\|_F^2) = O(n \eta^4 \|XX^\top-I_n\|_{\op}^4) + O(\eta^2 \|XX^\top  - I_n\|_F^2) + O(\eta^4 \|XX^\top  - I_n\|_{\op}^2\|XX^\top  - I_n\|_F^2)\;.
\end{equation*}
and
\begin{align*}
    O(\eta^4 \|XX^\top  - I_n\|_{\op}^2\|XX^\top  - I_n\|_F^2) &\stackrel{\text{(i)}}{=} o(\eta^2\|XX^\top  - I_n\|_F^2)\;, \\
    O(\eta^2 \|XX^\top  - I_n\|_F^2) &\stackrel{\text{(ii)}}{=} o(\eta \|XX^\top  - I_n\|_F^2)\;.
\end{align*}
In the derivation above, (i) follows from Assumption \ref{asmp:stepsize} and (ii) follows from $\eta = o(1)$. Under the additional assumption that $n\eta^3 \|XX^\top - I_n\|_{\op}^4 = o(\|XX^\top - I_n\|_F^2)$, we obtain
\begin{equation*}
    \|\Delta\|_F^2 = o(\eta \|XX^\top  - I_n\|_F^2)\;.
\end{equation*}
Therefore,
\begin{equation*}
    \frac{n}{K-1}(\E_{\beta_k} V(I_n) - \E_{\beta_k} V(\Sigma_{\eta})) = \Omega(\eta \|XX^\top  - I_n\|_F^2)\;.
\end{equation*}
This completes the proof.

\section{Covariate-adaptive Designs and Covariate Adjustments}\label{sec:cov_adjust}
Our work has focused on optimizing the MSE property of Horvitz-Thompson estimators. In practice, researchers in the analysis stage could utilize more advanced estimators with covariate adjustments. For instance, Lin's estimator \citep{Lin2013} is widely used under the binary treatment setting, which is defined as the coefficients of $D_i$ in the OLS regression
\begin{equation*}
    Y_i \sim X_i + D_i + D_i (X_i - \bar{X})\;,\quad \bar{X} = \frac{1}{n}\sum_{i=1}^n X_i\;.
\end{equation*}
\cite{li2017general} show the ``optimality'' of Lin’s estimator
within a class of regression-adjusted estimators. Here we clarify two practical questions related to covariate adjustments and Gaussianization: 
\begin{enumerate}
    \item Why do researchers consider covariate-adaptive designs? Alternatively, one could simply use covariate adjustments under complete randomization.
    \item Suppose we want to balance for covariates in the design stage. What if we formulate covariate balance measures with respect to covariate-adjusted estimators?
\end{enumerate}

First, in real-world randomized experiments, designers and analyzers may not communicate or share the same set of covariates. Therefore, balancing for covariates in the design stage is desirable, as it improves the estimation precision even with the simple Horvitz-Thompson estimator. Moreover, studies have shown that covariate-adaptive designs ---such as rerandomiztion--- never hurt the estimation \citep{li2020jrssb}. We anticipate similar results hold under Gaussianization. That is, we view covariate-adaptive designs and covariate adjustments as synergistic approaches that can be combined to further enhance the estimation.

Second, under Gaussianization, it is possible to analyze the MSE property of covariate-adjusted estimators as in \cite{Chang2023designbased} and formulate covariate balance measures tailored to these estimators. However, we argue that design optimization is more robust toward different outcome-generating models when using model-agnostic estimators, e.g., Horvitz-Thompson estimators. In other words, under a misspecified model, the MSE of covariate-adjusted estimator will also be misspecified, and the estimator itself might be biased. Hence, conducting design optimization based on biased adjustments could impair estimation precision.
\end{document}